\newcommand{\E}{\mathbf{E}}
\newcommand{\R}{\mathbf{R}}
\newcommand{\p}{\mathbf{P}}
\newcommand{\sgn}{\operatorname{sgn}}
\def\arrvline{\hfil\kern\arraycolsep\vline\kern-\arraycolsep\hfilneg}
\newtheorem{thm}{Theorem}[section]
\newtheorem{prop}[thm]{Proposition}
\newtheorem{lem}[thm]{Lemma}
\newtheorem{cor}[thm]{Corollary}
\newtheorem{defn}[thm]{Definition}
\title[Partitioned $K$-nearest neighbor local depth]
{Partitioned $K$-nearest neighbor local depth
\\for scalable comparison-based learning}
\author{Jacob D.\ Baron \hspace{.6cm} R.W.R.\ Darling \hspace{.6cm} J. Laylon Davis \hspace{.6cm} R.\ Pettit}
\address{National Security Agency, Fort George G.\ Meade, MD 20755-6844, USA}
\date{\today}
\begin{document}
\maketitle


\begin{abstract}
A triplet comparison oracle on a set $S$ takes an object $x \in S$
and for any pair $\{y, z\} \subset S \setminus \{x\}$ declares which of $y$ and $z$ is more similar to $x$. 
Partitioned Local Depth (\texttt{PaLD}) supplies a principled non-parametric partitioning 
of $S$ under such triplet comparisons
but needs $O(n^2 \log{n})$ oracle calls and $O(n^3)$ post-processing steps.

We introduce Partitioned Nearest Neighbors Local Depth (\texttt{PaNNLD}),
a computationally tractable variant of \texttt{PaLD} leveraging the $K$-nearest neighbors digraph
on $S$. \texttt{PaNNLD} needs only $O(n K \log{n})$ oracle calls,
by replacing an oracle call by a coin flip when 
neither $y$ nor $z$ is adjacent to $x$ in the undirected version of the $K$-nearest neighbors digraph.
By averaging over randomizations, \texttt{PaNNLD} subsequently
requires (at best) only $O(n K^2)$ post-processing steps.
Concentration of measure shows
that the probability of randomization-induced error $\delta$ in \texttt{PaNNLD}
is no more than $2 e^{-\delta^2 K^2}$.
\end{abstract}

{\small
\noindent \textbf{Keywords:}
clustering, non-parametric statistics, similarity search, 
ranking system, ordinal data, low dimensional projection, concentration of measure \\
\textbf{MSC class:} Primary: 90C35; Secondary: 06A07
}

\vspace{.4in}
\begin{center}
    \emph{I have always depended on the kindness of strangers.}
\end{center}
\vspace{.15in}
\hspace{2.5in} Tennessee Williams \textit{A Streetcar Named Desire}, 1947
\vspace{.2in}

\setcounter{tocdepth}{1}
\tableofcontents

\section{Introduction}
\subsection{Triplet comparisons}\label{s:triplets}
Unsupervised learning seeks to partition a collection $S$ of objects in
such a way that objects in the same subset are similar to each other.
Typical algorithms represent objects as vectors, or
points in a metric space, or vertices in a graph. See \cite{emr} for a wide-ranging survey.
\textit{Comparison-based learning} needs no such embedding, nor any numerical measure of similarity.
\textit{Triplet comparison} suffices: for any $x,y,z \in S$, declare which of $y$ and $z$ is more similar to $x$.

If objects in $S$ are of some type $T$, our data structure is a
\texttt{Map} (i.e. a set of key-value pairs), whose keys are the elements of $S$, and whose values
are \texttt{Comparator}s \cite{javacomp} of objects of type $T$. The \texttt{compare}$(x; y, z)$ method
of the \texttt{Comparator} associated with $x \in S$ 
reports which of $y, z \in S$ is more similar to $x$:
it returns an integer which is negative if 
$y$ is more similar to $x$ than $z$ is, positive if vice versa, and zero
in the case of a tie. We assume:
\begin{enumerate}
    \item \textit{Antisymmetry: } $\sgn($\texttt{compare}$(x; z, y)) = - \sgn($\texttt{compare}$(x; y, z))$.
    \item \textit{Transitivity: }\texttt{compare}$(x; y, z) > 0$ \texttt{compare}$(x; z, w) > 0$
    imply \texttt{compare}$(x; y, w) > 0$.
    \item \textit{Totality\footnote{So we disallow ties (i.e.\
    distinct $y$ and $z$ that are equally similar to $x$), although \cite{ber} allows ties.
    }: }\texttt{compare}$(x; y, z) = 0$ if and only if $y = z$.
    \item \textit{Autosimilarity: }\texttt{compare}$(x; x, y) < 0$ for all $y \neq x$.
\end{enumerate}
Conditions (1) and (2) are standard for \texttt{Comparator} \cite{javacomp}, (3) is optional, and (4) is specific to this context.
Condition (4) says that no $y \neq x$ is as similar to $x$ as $x$ is to itself.

In our triplet comparisons, no particular relationship is assumed between \texttt{compare}$(x; y, z)$,
\texttt{compare}$(y; z, x)$, and \texttt{compare}$(z; x, y)$. Comparison need not be based on a symmetric 
distance or similarity function. It is, however, interpretable in terms of graph orientations (Section \ref{s:graphorientation}).
Throughout the symbol $n$ will denote the (finite) size of $S$.

Algorithms already implemented in the context of a \texttt{Comparator} include:
\begin{itemize}
    \item
    efficient $K$-nearest neighbor descent \cite{don, bar, hag};
    \item
    random forest classifiers \cite{hagrf};
    \item
    active learning, \cite{kan}, and supervised learning based on samples \cite{per};
      \item
    recovering a hierarchical clustering \cite{ema};
    \item
    finding a clustering \cite{bia} or recovering a planted clustering \cite{per2};
    \item
    exploratory data analysis via local depth \cite{ber, darp}.
\end{itemize}

\subsection{Data science examples of triplet comparisons}
Here are some examples where triplet comparison appears more natural than vector or metric embedding.
In many cases $n$ could be in the tens of millions.

    \subsubsection{Weighted categorical features: } \label{s:bipartite}
    A bipartite graph encodes the appearance of features $F$ (right vertices) among a set $S$ of objects (left vertices).
    An edge $\{x, f\}$ of weight $w$ means that feature $f \in F$ has strength $w$ at object $x$.
    The value of $\texttt{compare}(x;y,z)$ depends on total strength at $x$ of features shared with $x$ (by $y$ and $z$).
    In topic modelling, $S$ consists of documents, and $F$ consists of keywords.
    
\subsubsection{Lexicographic-order comparator for multi-type features: }
    Each object in $S$ is a collection of data fields associated with an entertainment product, such as a movie
    or a song track. For a movie, these fields include title, director, producer, principal actors, date, etc.
    Suppose these fields $F_1, F_2, \ldots, F_r$ are listed in order of importance, and within field
    $F_j$ there is a \texttt{Comparator} $C_j$ at $x$, for each $x$. 
    To decide which of movies $y$ and $z$ is more similar to $x$,
    look at the field $F_1$, and answer ``$y$'' if \texttt{Comparator} $C_1$ at $x$ ranks $y$ before $z$, or ``$z$'' if vice-versa. If $C_1$ returns a tie, we look at the field $F_2$, and apply \texttt{Comparator} $C_2$, and so forth. The
    Java\texttrademark Language \cite{javacomp} calls this a \textit{lexicographic-order comparator}.
    
 \subsubsection{Geostatistics -- inhomogeneous point clouds: } \label{s:geostats}
    Each object in $S$ is a geographic location, i.e.\ a latitude-longitude pair of an event, or 
    fixed object, related to human activity\footnote{
    Typical data source: \texttt{https://location.services.mozilla.com/}}. Metric embedding is possible here, but if the locations are drawn from a mix of densely and sparsely populated
    environments, absolute distance is a poor measure of closeness. (Compare a mile in Manhattan to a mile in the Sahara.) The non-parametric approach 
    uses a \texttt{Comparator} at point $x$ which ranks other points according to
    travel time or road distance from $x$. 
    
\subsubsection{Genomics -- matching algorithms: }
    Each object in $S$ is a genomic sequence for a specific organism. These organisms may come from
    various species, whose genomic sequences have different lengths. Similarity between two sequences
    may be measured, as in \cite{doe}, by the maximum cardinality set of non-conflicting conserved adjacencies. Clustering such sequences may be
    a useful tool in setting species boundaries.

\subsection{Other sorts of comparison}
Other authors, such as \cite{add, bia, ema, emr, hag, per}, assume dissimilarity is measured by a metric on $S$. Symmetry and triangle inequality assumptions are unnecessary for partitioned local depth, and are violated in some of the examples
above. We return to this topic below when we present
triplet comparison in terms of graph orientations.

Besides triplet comparison, some authors use quadruplet comparison \cite{add, per}. The implicit assumption here
is that a uniform scale of distance is present throughout the data set, which is false for large-scale
human spatial data (Section \ref{s:geostats}).

There is an extensive literature \cite{aga, and} on \textit{ordinal embedding} of $S$ into
some Euclidean space, so that triplet and quadruplet comparisons are compatible with
distance comparisons. Recently a scalable algorithm was proposed by Anderton and Aslam \cite{and}.
However ordinal embedding is not a pre-requisite to 
\texttt{PaLD} or the present work.

\subsection{Outline of the paper}
Berenhaut, Moore and Melvin \cite{ber} proposed partitioned local depth (\texttt{PaLD}) as a non-parametric approach
to unsupervised learning based on triplet comparisons.

Our objective is to leverage the $K$-nearest neighbors digraph
on $S$ to design a computationally tractable variant of \texttt{PaLD}, whose approximation error can be 
estimated. We produce
a general comparison-based unsupervised learning algorithm, \texttt{PaNNLD}, applicable to data sets
of millions of objects. Practical implementation \cite{darp} will be described further in a sequel.

Section \ref{s:formalism} offers a formalism for triplet comparison. 
Section \ref{s:pald} reviews partitioned local depth (\texttt{PaLD}) from \cite{ber} with some
additional results. Section \ref{s:pannld} defines the new \texttt{PaNNLD} algorithm via
 a method called \textit{stranger randomization}, which gives consistent but unrecorded random answers to
triplet comparisons not involving near neighbors. 
In Section \ref{s:pannld} three theorems are stated whose proofs occupy the remainder of the paper;
two relate to algorithmic implementation of \texttt{PaNNLD}, and the third to 
an exponential bound on the error introduced by stranger randomization, using
concentration of measure \cite{bou}.

\section{Triplet comparisons, partial orders, and orientations of the line graph}\label{s:formalism}
\subsection{Partial order formulation}\label{s:totalorder}
In Section \ref{s:triplets} we introduced triplet comparison in terms of the 
\texttt{Comparator} interface \cite{javacomp} from computer science, which is the workhorse of
our Java\texttrademark  implementation \cite{darp}.
The associated order-theoretic notion is a family of \textit{total orders} $\preceq_x$ on $S$,
for each $x \in S$, with the interpretation:
$y \preceq_x z$ (or $z \succeq_x y$) means $y$ is at least as similar to $x$ as $z$ is, i.e.\ 
    \texttt{compare}$(x; y, z) \leq 0$. 
    Augment the defining properties of a total order, namely antisymmetry, transitivity, and totality, by
    a further property of autosimilarity.
\begin{enumerate}
     \item[(I)] \textit{Antisymmetry:} $y \preceq_x z \preceq_x y$ implies $y=z$.
     \item[(II)] \textit{Transitivity:} $y \preceq_x y' \preceq_x y''$ implies $y \preceq_x y''$.
     \item[(III)] \textit{Totality:} $y \preceq_x z$ or $z \preceq_x y$ for every $y, z$.
     \item[(IV)]  \textit{Autosimilarity:} $x \preceq_x y$ for all $y \in S$.
\end{enumerate}

We shall use the notation $y \prec_x z$ to mean $y \preceq_x z$ and $y \neq z$.

Baron and Darling \cite{bar} call such a family of total orders $\{\preceq_x, x \in S\}$ a \textbf{ranking system}.
There are $\binom{n-1}{2}$ nontrivial\footnote{Exclude relations taking the form $x \preceq_x y$ or $y \preceq_x y$.} 
ordered pairs in the binary relation $\preceq_x$, for each $x$. 
Restricted to these nontrivial pairs, we can view $\preceq_x$ as a total order $\preceq_x'$ on the set $\{\{x,y\}\::\:y \in S \setminus \{x\}\}$
of \emph{pairs} of elements of $S$ containing $x$, in the following way: 
$y \preceq_x z \text{ iff } \{x,y\} \preceq_x' \{x,z\}$.
(So the elements of the relation $\preceq_x'$ are ordered pairs of unordered pairs of elements of $S$, like $(\{x,y\},\{x,z\})$.) When the union of these $n$ relations $\preceq_x'$ is a subset of some partial order $\preceq^\star$ on the set $\binom{S}{2}$ of all unordered pairs of elements of $S$,
then the ranking system is defined in \cite{bar} to be \textbf{concordant}. 

A symmetric dissimilarity function $\sigma: S \times S \to \R$ (such as a metric on $S$), without ties,
induces the ranking system
in which $y \prec_x z$ whenever $\sigma(x, y) < \sigma(x, z)$. 
Such a ranking system is necessarily concordant, because the linear order of inter-point distances in $\mathbf{R}$ provides a valid $\preceq^\star$.
Baron and Darling prove in \cite[Lemma 5.5]{bar} the converse, namely that all concordant ranking systems 
arise from a metric space in this way.

A ranking system is concordant if and only if it admits no ``cycles'' of relations such as $y \prec_x z$, $z \prec_y x$, and $x \prec_z y$ (not necessarily of length 3). Most ranking systems are non-concordant; see \cite[Sec.\ 5]{bar}.

\begin{figure}
\caption{\textit{Complete graph $K_5$ on a 5-element set, and the line graph
of $K_5$ whose nodes are the edges of $K_5$. A random acyclic orientation has been assigned to
the line graph. Interpret $\{x, y\} \longrightarrow \{x, z\}$ to mean $y \prec_x z$.
Ignoring arcs that follow directed paths of two or more arcs,
this oriented line graph is equivalent to
the \emph{Hasse diagram} of a partial order $\preceq^\star$ for the corresponding concordant ranking system; cf.\ Remark after Proposition \ref{p:3notions}.
}
}
\label{f:linek5}
\begin{center}
\scalebox{0.4}{\includegraphics{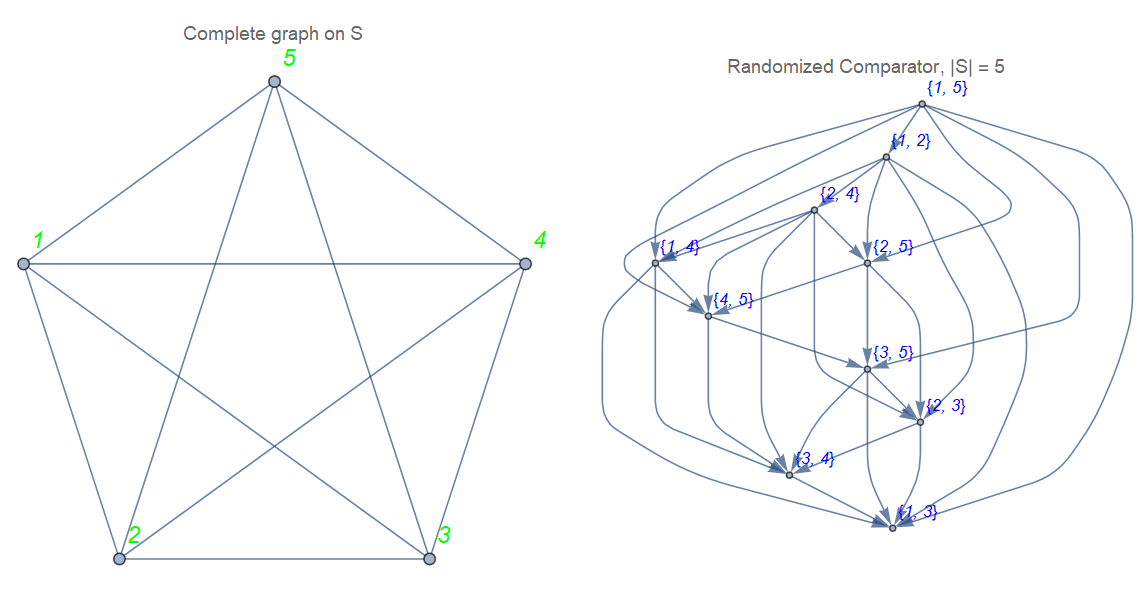} }
\end{center}
\end{figure}

\subsection{Graph orientation language for triplet comparison}\label{s:graphorientation}

Consider the complete graph $\mathcal{K}_S$ on the vertex set $S$. Its edges are the nodes of
the \textit{line graph} $L(\mathcal{K}_S)$, where two edges are treated as adjacent if they
have a vertex in common.
See Figure \ref{f:linek5} for an example.
Summing vertex degrees, and dividing by two, gives an edge count of $n(n-1)(n-2)/2$.

We can recast Section \ref{s:totalorder} in the language of graph orientations. To make
autosimilarity explicit, augment\footnote{
In the sequel,  the original line graph is sometimes preferable, keeping the autosimilarity condition implicit.
} the line graph $L(\mathcal{K}_S)$ by adding $n$ extra vertices $\{\{x, x\},  x \in S\}$ and
declaring that $\{x, x\}$ is adjacent to $\{x, y\}$ for all $y \in S \setminus \{x\}$.
Denote the augmented line graph by $L_+(\mathcal{K}_S)$. 
The latter has $\binom{n+1}{2}$ vertices and $n^2 (n-1)/2$ edges.

For each $x \in S$, the subgraph of $L_+(\mathcal{K}_S)$ induced by the $n$-vertex subset
\[
V_x:=\{ \{x, y\}, y \in S\}
\]
is the complete graph on $V_x$, which we denote $\mathcal{K}_{V_x}$.
An \textbf{orientation} of an undirected graph means an assignment of a direction to every edge of the graph,
and a \textbf{tournament} is a complete graph with an orientation.
A ranking system on $S$ induces a tournament on the vertex set $V_x$ for each $x$, where
the directed edge
\begin{align}\label{e:direction}
    \{x,y\} \longrightarrow \{x,z\}, \; y \neq z \hspace{.5in} \text{corresponds to} \hspace{.5in} y \prec_x z,
\end{align}
and to \texttt{compare}$(x; y, z) < 0$.  The vertex $\{x, x\}$ is a source, meaning a vertex with no
incoming directed edges.
Since $\preceq_x$ is a total order on $S$, this tournament on $V_x$ is transitive (obvious translation
of condition (II)). The total order on $V_x$ makes the tournament acyclic.

\begin{prop}\label{p:3notions}
Notions (a), (b), and (c) are equivalent:
\begin{enumerate}
    \item[(a)] A \texttt{Comparator} at each $x \in S$, satisfying (1)--(4) of Section \ref{s:triplets}.
    \item[(b)] A ranking system $\{\preceq_x, x \in S\}$, satisfying (I)--(IV)  of Section \ref{s:totalorder}.
    \item[(c)] An orientation of the augmented line graph $L_+(\mathcal{K}_S)$, so that for each $x$,
    the induced tournament on ${V_x}$ is acyclic, with a source at $\{x, x\}$.
\end{enumerate}
\end{prop}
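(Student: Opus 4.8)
The plan is to prove the cycle of implications $(a)\Rightarrow(b)\Rightarrow(c)\Rightarrow(a)$, noting first that in all three formalisms the underlying datum is the same: for each $x\in S$, an orientation of the complete graph $\mathcal{K}_{V_x}$ (together with the designated vertex $\{x,x\}$), so the real work is just to check that the four axioms translate correctly in each direction. For $(a)\Rightarrow(b)$ I would define $y\preceq_x z$ to mean $\texttt{compare}(x;y,z)\le 0$ and verify (I)--(IV). Antisymmetry (I) and totality (III) drop out of the sign-antisymmetry (1) together with totality (3), which forces $\texttt{compare}(x;y,z)=0\iff y=z$; autosimilarity (IV) is immediate from (4), with the case $y=x$ covered by (3). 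The only point needing care is transitivity (II): condition (2) is stated only for strict comparisons, so I would first use (1) to rephrase (2) as ``$\prec_x$ is transitive,'' and then obtain (II) by a short case analysis according to whether each of $y\preceq_x y'$ and $y'\preceq_x y''$ is an equality or strict.

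For $(b)\Rightarrow(c)$, I would orient $L_+(\mathcal{K}_S)$ by the rule \eqref{e:direction}, i.e.\ $\{x,y\}\longrightarrow\{x,z\}$ iff $y\prec_x z$. The first thing to confirm is that this is a well-defined orientation of the \emph{whole} augmented line graph: two adjacent vertices of $L_+(\mathcal{K}_S)$ share a \emph{unique} element of $S$ (sharing two would make them coincide), so no edge is given conflicting instructions by two different centres. Then totality (III) with antisymmetry (I) ensures every pair of distinct vertices of $V_x$ is oriented, so we get a genuine tournament on $V_x$; transitivity (II) makes it transitive, hence acyclic; and autosimilarity (IV) gives $x\prec_x y$ for every $y\ne x$, so every edge at $\{x,x\}$ points outward and $\{x,x\}$ is a source. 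For $(c)\Rightarrow(a)$, given the orientation I would set $\texttt{compare}(x;y,z)$ to be $0$ if $y=z$, $-1$ if $\{x,y\}\longrightarrow\{x,z\}$, and $+1$ otherwise, reading the source property at $\{x,x\}$ to handle arguments equal to $x$. Axioms (1), (3), (4) are then immediate, and for (2) I would invoke the standard fact (the one already used above) that a tournament is acyclic if and only if it is transitive: a directed path $\{x,w\}\longrightarrow\{x,z\}\longrightarrow\{x,y\}$ in $\mathcal{K}_{V_x}$ forces $\{x,w\}\longrightarrow\{x,y\}$, which is exactly $\texttt{compare}(x;y,w)>0$.

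The main obstacle here is not conceptual but bookkeeping: the four axioms are phrased inconsistently across the three languages (strict versus non-strict comparison, and the implicit versus explicit treatment of $\{x,x\}$ and autosimilarity), so the substantive content reduces to two clean lemmas that I would state up front and then apply mechanically --- namely (i) the well-definedness of the induced orientation, via uniqueness of the shared vertex of two adjacent line-graph nodes, and (ii) the equivalence ``transitive tournament $\iff$ acyclic tournament'' that bridges conditions (II)/(2) and the acyclicity clause of (c).
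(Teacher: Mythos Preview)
Your proposal is correct and follows essentially the same approach as the paper: both rest on the correspondence \eqref{e:direction} and the standard fact that a tournament is transitive iff acyclic iff it encodes a total order, together with the observation that autosimilarity translates to $\{x,x\}$ being a source. The paper's proof is much terser (it simply points back to the preceding discussion), whereas you spell out the cycle $(a)\Rightarrow(b)\Rightarrow(c)\Rightarrow(a)$ and flag the well-definedness check for the orientation; this is extra bookkeeping rather than a different method.
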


\textbf{Remark:} 
Part (c) of Proposition \ref{p:3notions} imposes $n$ different local acyclicity conditions,
which is weaker than the condition that $L_+(\mathcal{K}_S)$ be (globally) acyclic. Global acyclicity is equivalent to concordance (in the ranking system language). For an example, see \cite{barex}.

\begin{proof}
The equivalence was explained in (\ref{e:direction}) above. 
Discussion above shows that a transitive tournament on $V_x$ is equivalent to a total order $\prec_x$ on $S$.
The Autosimilarity requirements (4) of Section \ref{s:triplets}, and (IV) of Section \ref{s:totalorder},
are equivalent to the requirement that $\{x, x\}$ be a source.
\end{proof}


\section{Local depth and the cluster graph}\label{s:pald}

\subsection{Non-parametric approach to low-dimensional embedding and partitioning}
Section \ref{s:pald} will be an overview of the non-parametric
\textbf{partitioned local depth} (\texttt{PaLD}) algorithm by
Berenhaut, Moore and Melvin \cite{ber}. \texttt{PaLD} converts
a family of partial orders $\{\preceq_x, x \in S\}$ into 
a matrix $(C_{x,y})_{x,y \in S}$ of ``cohesion scores'',
where for fixed $x \in S$, the row vector $(C_{x,y})_{y \in S}$
represents a disaggregation of a quantity called the ``local depth''
of $x$, defined below. The authors place the weight 
\[
w_{x, y}:=\min \{ C_{x,y}, C_{y,x} \}
\]
on each edge of the complete graph on $S$, and use standard
force-directed graph embedding algorithms \cite{fru} to
present a \textit{planar embedding} of $S$ for the purpose of visual comprehension.
\texttt{PaLD} has ontological priority over methods
which start with an explicit vector representation of the points in $S$,
such as t-SNE \cite{maa} and UMAP \cite{mci}.
Suppose there are two different metrics $\rho$ and $\rho'$ (or vector representations), which induce the
same ranking system, i.e. $y \preceq_x z$ whenever $\rho(x, y) \leq \rho(x, z)$, etc.
\begin{enumerate}
    \item[(i)] Planar embedding supplied by vector-based methods may depend on the metric.
    \item[(ii)] \texttt{PaLD} gives an embedding depending only on the ranking system, not the metric.
\end{enumerate}

The cohesion scores supply a natural \textbf{cluster threshold}
\begin{equation} \label{e:clusterthreshold}
    \tau:=\frac{1}{2n|} \sum_{x \in S} C_{x,x},
\end{equation}
which Corollary \ref{c:meanlocaldepth} will show to be the mean local depth, divided by $n$.
A simple, undirected, unweighted \textbf{cluster graph} $G:=(S, E)$
has edge set $E$ consisting of those pairs $\{x, y\}$ for which
$w_{x, y} \geq \tau$. The components of this graph\footnote{
An alternative partitioning arises from the strongly-connected components 
of the directed graph on $S$, consisting of those
arcs $x\rightarrow y$ for which $C_{x,y} \geq \tau$. This appears less useful in practice.
} supply a
non-parametric partitioning of $S$. 

In the case where no ties are allowed, here are explicit definitions, following \cite{ber},
but in terms of a family of total orders as in Section \ref{s:totalorder}.


\begin{table}  
\centering
\begin{tabular}{ |c|c|c|c|p{.22\linewidth}| } 
 \hline
  Ranking digraph & Relative position & $z \in  U_{x, y}$? & $z \in  U_{x \| y}$? & Comments \\
   
   \hline
   
 \( \begin{tikzcd}{} & xz \arrow{dr}{} \\xy \arrow{ur}{} \arrow{rr}{} && yz\end{tikzcd} \)
 & $\stackrel{y}{\bullet} \quad \stackrel{x}{\bullet} \qquad \stackrel{z}{\bullet}$
 & \textsf{x} & \textsf{x} & 
 \multirow{4}{\linewidth}{Here, $z$ is ``off in the wilderness'' relative to $x$ and $y$ (i.e.\ $z \notin U_{x,y}$).} 
 \\  
 
 \( \begin{tikzcd}{} & yz \arrow{dr}{} \\xy \arrow{ur}{} \arrow{rr}{} && xz\end{tikzcd} \)&
 $\stackrel{x}{\bullet} \quad \stackrel{y}{\bullet} \qquad \stackrel{z}{\bullet}$ & 
 \textsf{x} & \textsf{x} & \\  
 
 \hline

  \( \begin{tikzcd}{} & xy \arrow{dr}{} \\yz \arrow{ur}{} \arrow{rr}{} && xz\end{tikzcd} \)&
  $\stackrel{z}{\bullet} \quad \stackrel{y}{\bullet} \qquad \stackrel{x}{\bullet}$ & 
   \checkmark & \textsf{x} & 
  \multirow{4}{\linewidth}{Here, $z$ is in the ``sphere of influence'' of $x$ and $y$ (i.e.\ $z \in U_{x,y}$), and from $z$'s point of view, $y$ is more similar than $x$ (so $z \notin U_{x\|y}$).} 
  \\ 
 
  \( \begin{tikzcd}{} & xz \arrow{dr}{} \\yz \arrow{ur}{} \arrow{rr}{} && xy\end{tikzcd} \)&  $\stackrel{y}{\bullet} \quad \stackrel{z}{\bullet} \qquad \stackrel{x}{\bullet}$ & 
 \checkmark & \textsf{x} & \\ 

  \( \begin{tikzcd}{} & xz \arrow{dr}{} \\yz \arrow{ur}{} \arrow[leftarrow]{rr}{} && xy\end{tikzcd} \) &
  \begin{tabular}{c}
    no metric embedding \\ (has a cycle)
  \end{tabular} &
  \checkmark & \textsf{x} & \\ 

 \hline
 
 \( \begin{tikzcd}{} & xy \arrow{dr}{} \\xz \arrow{ur}{} \arrow{rr}{} && yz\end{tikzcd} \)&
  $\stackrel{z}{\bullet}  \quad \stackrel{x}{\bullet} \qquad \stackrel{y}{\bullet}$ & 
   \checkmark & \checkmark & 
  \multirow{4}{\linewidth}{Here, $z$ is in the ``sphere of influence'' of $x$ and $y$ (i.e.\ $z \in U_{x,y}$), and from $z$'s point of view, $x$ is more similar than $y$ (so $z \in U_{x\|y}$).} 
  \\ 
 
  \( \begin{tikzcd}{} & yz \arrow{dr}{} \\xz \arrow{ur}{} \arrow{rr}{} && xy\end{tikzcd} \)&
  $\stackrel{x}{\bullet} \quad \stackrel{z}{\bullet} \qquad \stackrel{y}{\bullet}$ & 
  \checkmark & \checkmark & \\ 
  
  \( \begin{tikzcd}{} & yz \arrow{dr}{} \\xz \arrow{ur}{} \arrow[leftarrow]{rr}{} && xy\end{tikzcd} \) &
  \begin{tabular}{c}
    no metric embedding \\ (has a cycle)
  \end{tabular} &
  \checkmark & \checkmark & \\ 
  
   \hline
\end{tabular}

\caption{\textit{Conflict focus membership: } 
The first column shows the eight orientations of the line graph of the complete graph on $\{x,y,z\}$.
When possible, the second column shows an instantiation of this ranking digraph with $x,y,z$ as points on a Euclidean line. The third and fourth indicate $z$'s membership in $U_{x,y}$ and $U_{x\|y}$ in each case.}
\label{t:conflictfocus}
\end{table}


\subsection{Local depth} \label{s:conflictfocus}
For distinct $x, y \in S$, let
\begin{equation}\label{e:leftconflictfocus}
  U_{x \| y}:=  \{z \in S: z \prec_x y \mbox{ and } x \prec_z y \} \supset \{x\}.  
\end{equation}
The inclusion of $x$ in (\ref{e:leftconflictfocus}) follows from Autosimilarity.
In the augmented line graph, $z \in U_{x \| y}$ corresponds to arcs
\(
\{y, z\} \longleftarrow \{x, z\} \longrightarrow \{x, y\}.
\)
The symmetrized version is
\begin{equation}\label{e:conflictfocus}
    U_{x, y}:=U_{x \| y} \cup U_{y \|   x} = \{z \in S: z \prec_x y  \} \cup \{t \in S: t \prec_y x \} 
    \supset \{x, y\}.
\end{equation}
and is called the \textbf{conflict focus}\footnote{By constrast, 
\cite{kle} defines the
\textbf{lens} of $\{x, y\}$ to consist of
the set of $z$ for which $z \preceq_x y$ and $z \preceq_y x$.}.
 The first union in (\ref{e:conflictfocus}) is disjoint,
 while the second need not be (Table \ref{t:conflictfocus}).
Comparing (\ref{e:direction}) and (\ref{e:conflictfocus}) shows that
the size $|U_{x,y}|$ of the conflict focus is the in-degree of the vertex $\{x, y\}$
in the augmented line graph for this ranking system.
 These concepts are illustrated in Table \ref{t:conflictfocus}. 
 The reason for splitting the conflict focus in two is apparent in  (\ref{e:ldep}).
 
\begin{defn} \label{d:ldep}
Fix $x \in S$, sample $Y$ uniformly from $S \setminus \{x\}$, and then
sample $Z$ uniformly from the conflict focus $U_{x, Y}$. Define the
\textbf{local depth} at $x$ to be\footnote{The random tie breaker in \cite{ber} has been omitted,
since we assume there are no ties.}
\begin{equation} \label{e:ldep}
\ell(x):= \p[x \prec_Z Y] = \E\left(\frac{|U_{x \| Y}| }{ |U_{x, Y}| }\right).
\end{equation}
\end{defn}
Interpret (\ref{e:ldep}) as asking
a uniform random $Z$ in the conflict focus $U_{x, Y} = U_{Y, x}$:
\textit{which of $Y$ and $x$ is more similar to you?}
The local depth of $x$ is the probability the ``witness'' $Z$
chooses $x$ rather than the random alternative $Y$. 

\subsection{Cohesion}
Let us partition local depth according to the
value taken by the random variable $Z$ in (\ref{e:ldep}).

\begin{defn} \label{d:cohesion}
For any elements $x, v \in S$ (possibly identical), the \textbf{cohesion}
of $v$ to $x$ is defined as
\begin{equation} \label{e:cohesion}
C_{x,v}:= \p[\{x \prec_Z Y\} \cap \{Z = v\}].
\end{equation}
Call $(C_{x,v})_{x, v \in S}$ the \textbf{cohesion matrix}.
\end{defn}

The following formulas are implicit in the methods of \cite{ber},
simplified by the absence of ties. 

\begin{prop} \label{p:explcohe}
The cohesion may be written in the explicit form:
\begin{equation} \label{e:cohesionx} 
C_{x,v} = \frac{1}{n-1}
\sum_{y \neq x} \frac{ 1_{ \{v \in  U_{x \| y} \} } }{ |U_{x ,y}|} .
\end{equation}
In particular,
\[
C_{x,x} = \frac{1}{n-1} \sum_{y \neq x} \frac{1}{|U_{x, y}|}
= \E\left[ \frac{1}{|U_{x,Y}|} \right],
\]
where $Y$ is sampled uniformly at random from $S \setminus \{x\}$. 
Sample an unordered pair $\{X, Y\}$ uniformly at random from $\binom{S}{2}$.
The cluster threshold (\ref{e:clusterthreshold}), i.e. $\tau:=(\sum_x C_{x,x})/(2n)$,
may be written
\begin{equation} \label{e:meancohesion}
\tau = \E\left[ \frac{1}{|U_{X,Y}|} \right].
\end{equation}
\end{prop}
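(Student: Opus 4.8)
The plan is to unwind the two-stage sampling in Definition~\ref{d:cohesion} by first conditioning on the draw $Y$, and then to read off combinatorially which values of the witness $Z$ contribute to the event $\{x \prec_Z Y\} \cap \{Z = v\}$. Fix $x, v \in S$. Since $\p[Y = y] = 1/(n-1)$ for each $y \neq x$, and conditionally on $\{Y = y\}$ the witness $Z$ is uniform on the conflict focus $U_{x,y}$,
\[
C_{x,v} = \frac{1}{n-1}\sum_{y \neq x} \p\big[\{Z = v\} \cap \{x \prec_Z y\} \;\big|\; Y = y\big].
\]
On the event $\{Z = v\}$ the relation $x \prec_Z y$ is the deterministic statement $x \prec_v y$, and $\p[Z = v \mid Y = y] = 1_{\{v \in U_{x,y}\}}/|U_{x,y}|$; hence the $y$-th summand is $1_{\{v \in U_{x,y}\}}\,1_{\{x \prec_v y\}}/|U_{x,y}|$, with the boundary cases $v = y$ and $v \notin U_{x,y}$ correctly killed by the indicators.

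The one content-bearing step is then the set identity $\{\, v \in U_{x,y} : x \prec_v y \,\} = U_{x \| y}$. The inclusion $\supseteq$ is immediate from~(\ref{e:leftconflictfocus}) together with $U_{x \| y} \subseteq U_{x,y}$. For $\subseteq$, recall from the remark after~(\ref{e:conflictfocus}) that the union $U_{x,y} = U_{x \| y} \cup U_{y \| x}$ is disjoint; and if $v \in U_{y \| x}$ then $y \prec_v x$, which is incompatible with $x \prec_v y$ by antisymmetry~(I). So any $v \in U_{x,y}$ with $x \prec_v y$ already lies in $U_{x \| y}$, and substituting the identity into the display above yields~(\ref{e:cohesionx}). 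I expect this to be ``hard'' only in the sense of demanding a little care with the conditioning and with the disjointness; the rest is bookkeeping.

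For the remaining claims: by autosimilarity~(IV) we have $x \prec_x y$ for every $y \neq x$, so $x \in U_{x \| y}$ always and the indicator in~(\ref{e:cohesionx}) equals $1$ at $v = x$, giving $C_{x,x} = \tfrac{1}{n-1}\sum_{y \neq x} 1/|U_{x,y}| = \E[\, 1/|U_{x,Y}| \,]$ for $Y$ uniform on $S \setminus \{x\}$. Finally I would substitute this into $\tau = \tfrac{1}{2n}\sum_x C_{x,x}$, interchange the two sums to obtain $\tfrac{1}{2n(n-1)}\sum_x \sum_{y \neq x} 1/|U_{x,y}|$ over ordered pairs, and use the symmetry $U_{x,y} = U_{y,x}$ (clear from~(\ref{e:conflictfocus})) to fold this into a sum over $\binom{S}{2}$, which after matching the normalization $1/\binom{n}{2}$ is an expectation over a uniformly random unordered pair. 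The one thing I would re-examine carefully at this last step is the numerical constant relating $\tfrac{1}{2n}\sum_x C_{x,x}$ to $\E[\, 1/|U_{X,Y}| \,]$, so that it comes out exactly as stated in~(\ref{e:meancohesion}).
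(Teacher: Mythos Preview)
Your argument is correct and follows the same route as the paper's proof: condition on $Y=y$, use that $Z$ is uniform on $U_{x,y}$, and identify $\{v \in U_{x,y}: x \prec_v y\}$ with $U_{x\|y}$, then handle $C_{x,x}$ via autosimilarity and $\tau$ by folding the ordered sum into an unordered one. You are in fact more explicit than the paper, which writes $C_{x,v} = \sum_{y\neq x}\frac{\p[Y=y]}{|U_{x,y}|}\,1_{\{x\prec_v y\}}$ and then simply asserts this can be rewritten as~(\ref{e:cohesionx}); your invocation of the disjointness of $U_{x\|y}\cup U_{y\|x}$ is exactly what fills that step in.
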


\begin{proof}
Condition on the event $\{Y = y\}$, which has 
probability $\frac{1}{n-1}$ for $y \neq x$. 
In that case, the random variable $Z$ equals $v \in U_{x, y}$
with probability $1/|U_{x, y}|$. Hence
\[
C_{x,v} = \sum_{y \neq x}
\frac{\p[Y = y]}{|U_{x, y}|} 1_{ \{ x \prec_v y \} },
\] 
This may be rewritten in the form (\ref{e:cohesionx}).
The assertion about $C_{x,x}$ follows from Autosimilarity:
$x \prec_x y$ for all $y \neq x$.
If we sample $X$ uniformly from $S$, then sample $Y$ uniformly from $S \setminus \{X\}$,
\(
\tau = \frac{1}{2} \E[ C_{X,X} ]
\)
which gives (\ref{e:meancohesion}), because half the sum over ordered pairs $(x, y)$
is the sum over unordered  pairs $\{x, y\}$.
\end{proof}

\begin{cor}\label{c:meanlocaldepth}
The cluster threshold $\tau$ is $1/n$ times the mean value of the local depth.
\end{cor}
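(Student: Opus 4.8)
The plan is to unfold the two definitions and reduce the claim to a counting identity for conflict foci, using Proposition \ref{p:explcohe}. Write $n=|S|$. Observe first that the rows of the cohesion matrix sum to the local depths: in Definition \ref{d:ldep} the events $\{Z=v\}$, $v\in S$, partition the probability space, so $\sum_{v\in S}C_{x,v}=\p[x\prec_Z Y]=\ell(x)$. Hence the mean local depth $\frac1n\sum_{x}\ell(x)$ equals $\frac1n\sum_{x}\sum_{v}C_{x,v}$, while $\tau=\frac1{2n}\sum_{x}C_{x,x}$ by (\ref{e:clusterthreshold}) (equivalently, by (\ref{e:meancohesion}), $\tau=\E[1/|U_{X,Y}|]$ over a uniformly random pair). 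So the corollary will follow once the total mass of the cohesion matrix is computed and matched against its diagonal.

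The evaluation of $\sum_{x,v}C_{x,v}$ is the heart of the matter. Substitute the explicit formula (\ref{e:cohesionx}) and carry out the inner sum over $v$, using $\sum_{v\in S}1_{\{v\in U_{x\|y}\}}=|U_{x\|y}|$; this gives $\sum_{v}C_{x,v}=\frac1{n-1}\sum_{y\neq x}\frac{|U_{x\|y}|}{|U_{x,y}|}$, which is exactly the expression (\ref{e:ldep}) for $\ell(x)$, as it must be. Now sum over $x$ and regroup the ordered pairs $(x,y)$, $x\neq y$, into unordered pairs $\{x,y\}$: the contribution of a single pair is $\frac{|U_{x\|y}|}{|U_{x,y}|}+\frac{|U_{y\|x}|}{|U_{y,x}|}$, and since $U_{x,y}=U_{y,x}$ and, by (\ref{e:conflictfocus}), this conflict focus is the disjoint union $U_{x\|y}\sqcup U_{y\|x}$ (equivalently, every $z\in U_{x,y}$ satisfies exactly one of $x\prec_z y$ and $y\prec_z x$), this contribution collapses to $\frac{|U_{x\|y}|+|U_{y\|x}|}{|U_{x,y}|}=1$. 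Therefore $\sum_{x}\sum_{v}C_{x,v}=\frac1{n-1}\binom n2=\frac n2$, so the mean local depth equals $\frac12$; dividing once more by $n$ and comparing with the formula for $\tau$ from Proposition \ref{p:explcohe} yields $\tau=\frac1n\cdot(\text{mean local depth})$.

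The one step that needs care is the regrouping: it uses two distinct features of (\ref{e:conflictfocus}) — the symmetry $U_{x,y}=U_{y,x}$ of the conflict focus, and the disjointness of its decomposition into the two one-sided foci — and the disjointness in turn relies on Totality (property (III) of Section \ref{s:totalorder}), via the dichotomy ``$x\prec_z y$ or $y\prec_z x$'' for each $z$ in the focus. The boundary contributions from $z\in\{x,y\}$, where $x$ and $y$ lie in their own one-sided foci (cf.\ the ``$\supset\{x\}$'' and ``$\supset\{x,y\}$'' remarks and Table \ref{t:conflictfocus}), are already subsumed in this dichotomy, so no separate handling of ties is required; the remainder is routine rearrangement of finite sums.
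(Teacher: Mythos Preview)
Your computation up through ``mean local depth $=\tfrac12$'' is correct and is actually cleaner than the paper's route: the pairing identity
\[
\frac{|U_{x\|y}|}{|U_{x,y}|}+\frac{|U_{y\|x}|}{|U_{x,y}|}=1
\]
(disjointness of the one-sided foci) immediately gives $\sum_{x,v}C_{x,v}=\tfrac n2$. The gap is the last sentence. Having shown that the mean local depth is $\tfrac12$, the corollary would require $\tau=\tfrac1{2n}$; you assert this follows ``from Proposition~\ref{p:explcohe}'', but nothing in that proposition fixes $\tau$ to a universal constant---it says $\tau=\E[1/|U_{X,Y}|]$, which depends on the ranking system. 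In fact the corollary as stated is false: for three collinear points $a=0$, $b=1$, $c=3$ one has $|U_{a,b}|=2$, $|U_{a,c}|=|U_{b,c}|=3$, so $\tau=\tfrac1{2\cdot3}\bigl(\tfrac5{12}+\tfrac5{12}+\tfrac13\bigr)=\tfrac{7}{36}$, while $\tfrac1n\cdot(\text{mean local depth})=\tfrac13\cdot\tfrac12=\tfrac16$.

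The reason the paper's proof reaches the opposite conclusion is a slip in its first display: when invoking (\ref{e:cohesionx}) it writes $1_{\{x\prec_v y\}}$ in place of $1_{\{v\in U_{x\|y}\}}$, silently dropping the requirement $v\prec_x y$ (equivalently $v\in U_{x,y}$). With that weaker indicator the inner sum $\sum_{v\notin\{x,y\}}1_{\{x\prec_v y\}}$ symmetrizes to $(n-2)/2$ and produces $\E[\ell(X)]=\tfrac12\sum_xC_{x,x}=n\tau$. With the correct indicator the inner sum symmetrizes instead to $(|U_{x,y}|-2)/2$, giving $\sum_x\sum_{v\neq x}C_{x,v}=\tfrac n2-\sum_xC_{x,x}$ and hence $\E[\ell(X)]=\tfrac12$---exactly what your argument shows. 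So your calculation is right; it simply does not (and cannot) finish the corollary, because the corollary itself is in error.
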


\begin{proof}
Sample $X$ uniformly at random from $S$. By definition of $C_{x,v}$,
\begin{equation}\label{e:meanlocaldepth}
  \E[\ell(X)] = 
\frac{1}{n} \sum_x \sum_{v \neq x} C_{x,v} + \frac{1}{n} \sum_x C_{x,x}.  
\end{equation}
By (\ref{e:cohesionx}), we may write the first sum as
\[
\sum_x \sum_{v \neq x} C_{x,v} = 
 \sum_{x} \sum_{v \neq x}  \frac{1}{n-1} \sum_{y \notin \{x, v\}}
\frac{1_{ \{ x \prec_v y \} } }{ |U_{x, y}|}  = 
\frac{1}{n-1}\sum_{x} \sum_{y \neq x} \frac{1}{ |U_{x, y}|} \sum_{v \notin \{x, y\}} 1_{ \{ x \prec_v y \} }.
\]
We exclude $y = v$ from the middle sum because the relation $x \prec_y y$ is false.
The formula is unchanged if the symbols $x$ and $y$ are exchanged, and so the final sum may be replaced by $(n-2)/2$.
The expression becomes:
\[
\frac{n-2}{2(n-1)} \sum_{\substack{x, y\\x \neq y }} \frac{1}{ |U_{x, y}|}
= \frac{n-2}{2} \sum_x C_{x,x}.
\]
Thus (\ref{e:meanlocaldepth}) becomes:
\[
\E[\ell(X)] = 
\frac{n-2}{2 n} \sum_x C_{x,x} + \frac{1}{n} \sum_x C_{x,x} = \frac{1}{2} \sum_x C_{x,x}.
\]
The result follows from (\ref{e:clusterthreshold}).
\end{proof}

\subsection{Complexity of partitioned local depth} 
Complexity of \texttt{PaLD} is
a barrier to scalable implementation, and the main motivator for later parts of this paper.
A total order $\preceq_x$ allows each $y \in S \setminus \{x\}$ to be assigned an integer rank, namely
$| \{z \in S: z \prec_x y \} |$.
For each choice of $x$, sorting the $n-1$ elements of $S \setminus \{x\}$ needs
$n \log{n}$ evaluations of the \texttt{Comparator}
at $x$ (using the \texttt{sort} method of Java \texttt{List} \cite{javalist}, for example), 
so computing all the ranks $(r_x(y))_{x, y \in S}$
is an $O(n^2 \log{n})$ task. We consult the rank tables when we need
to know later whether $u \prec_v w$: check whether $r_v(u) < r_v(w)$. This one time work avoids making
$O(n^3)$ calls to the \texttt{Comparator}.

\begin{prop}\label{p:cubic}
Computing the sizes $\{|U_{x, y}|, \,\{x, y\} \in \binom{S}{2} \}$ of all the $\binom{n}{2}$ conflict foci,
and the $n \times n$ cohesion matrix $(C_{x,v})_{x, v \in S}$, can be performed in
 $O(n^3)$ operations.
\end{prop}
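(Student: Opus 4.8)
The plan is to exhibit an explicit algorithm achieving the stated $O(n^3)$ bound, working from the rank tables $(r_x(y))_{x,y\in S}$, whose construction already costs $O(n^2\log n)$ (and hence is absorbed into the bound). Recall from (\ref{e:conflictfocus}) that $|U_{x,y}| = \#\{z : z\prec_x y\} + \#\{z : z\prec_y x\} - 2$, since the two sets each contain the pair $\{x,y\}$ in the overlap only at... more precisely, $U_{x\|y}\sqcup U_{y\|x}$ is a disjoint union equal to $\{z : z\prec_x y\}\cup\{z':z'\prec_y x\}$, and the first description in (\ref{e:conflictfocus}) being a disjoint union means $|U_{x,y}| = |\{z : z\prec_x y\}| + |\{z':z'\prec_y x\}|$, with the caveat that one must check whether $y$ itself lies in $\{z:z\prec_x y\}$ (it does not, since $y\not\prec_x y$) and whether $x$ lies in $\{z':z'\prec_y x\}$ (it does not). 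So $|U_{x,y}| = r_x(y) + r_y(x)$ up to an additive constant depending on the exact rank convention, and each value is computable in $O(1)$ from the rank tables. Summing over all $\binom{n}{2}$ pairs gives the conflict focus sizes in $O(n^2)$ total.

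For the cohesion matrix, I would use formula (\ref{e:cohesionx}): $C_{x,v} = \frac{1}{n-1}\sum_{y\neq x} \frac{1_{\{v\in U_{x\|y}\}}}{|U_{x,y}|}$, where $v\in U_{x\|y}$ means $v\prec_x y$ and $x\prec_v y$ — two rank-table lookups, $O(1)$ each. The naive triple loop over $(x,v,y)$ has $n^3$ iterations, each $O(1)$, giving $O(n^3)$ directly. I would present it this way: initialize all $C_{x,v}$ to $0$; loop over $x\in S$; loop over $y\in S\setminus\{x\}$; having precomputed $|U_{x,y}|$, loop over $v\in S$ and add $\frac{1}{(n-1)|U_{x,y}|}$ to $C_{x,v}$ whenever $r_x(v) < r_x(y)$ and $r_v(x) < r_v(y)$. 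This is manifestly $O(n^3)$ arithmetic operations.

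The only real subtlety — and the one step I would be careful about — is confirming the $O(1)$-per-test claim, i.e. that every membership query $z\in U_{x\|y}$ reduces to comparisons of precomputed integer ranks with no hidden \texttt{Comparator} calls, and that the conflict focus sizes can themselves be assembled within $O(n^3)$ (indeed $O(n^2)$) rather than recomputed inside the inner loop. Once the $|U_{x,y}|$ are tabulated up front, the rest is a bookkeeping argument. I would also remark that one may interchange the roles of the loop variables to accumulate $C_{x,v}$ by fixing $(x,y)$ and scanning the sorted list of $S\setminus\{x\}$ — which is what makes the later $O(nK^2)$ improvement for \texttt{PaNNLD} plausible — but for the present proposition the straightforward triple loop suffices. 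No concentration or probabilistic input is needed here; this is purely a counting-of-operations argument built on the rank-table preprocessing.
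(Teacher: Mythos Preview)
Your triple-loop argument for the cohesion matrix is correct and essentially matches the paper's proof. However, your shortcut for the conflict focus sizes contains a genuine error. You write that since $U_{x\|y}\sqcup U_{y\|x}$ is a disjoint union equal to $\{z:z\prec_x y\}\cup\{z':z'\prec_y x\}$, it follows that $|U_{x,y}| = |\{z:z\prec_x y\}| + |\{z':z'\prec_y x\}| = r_x(y)+r_y(x)$ up to an additive constant. This inference is invalid: the \emph{first} decomposition is disjoint, but the \emph{second} is not, as the paper explicitly notes after (\ref{e:conflictfocus}). The overlap $\{z:z\prec_x y\}\cap\{z:z\prec_y x\}$ is the lens of Kleindessner and von Luxburg, and its size varies with the pair $\{x,y\}$. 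For a concrete counterexample, take $x,y,z$ on a line with $z$ strictly between them: then $r_x(y)=r_y(x)=2$ but $|U_{x,y}|=3$, not $4$.

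So your claimed $O(n^2)$ computation of the $|U_{x,y}|$ is wrong as stated. The fix is exactly what the paper does: for each pair $\{x,y\}$, run an $O(n)$ loop over $z\in S\setminus\{x,y\}$, test membership in $U_{x\|y}$ and $U_{y\|x}$ via two rank-table lookups each, and count; this is $O(n^3)$ total. Since the proposition only asserts an $O(n^3)$ bound, the error does not sink the result, but it does invalidate your stronger $O(n^2)$ sub-claim and the specific formula you rely on. Once you replace that paragraph with the straightforward $O(n)$-per-pair count, your proof coincides with the paper's.
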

\textbf{Remark: }Although computation of the cohesion matrix may be 
distributed among multiple processors, we do not know of a sub-cubic
method to compute it. The cohesion matrix is needed to determine the cluster graph, 
whose edge set consists of those pairs $\{x, y\}$ for which 
$\min \{ C_{x,y}, C_{y,x} \}\geq \tau$.
Partitioning via local depth appears to require a run time which is
cubic in the size of the set to be partitioned.

\begin{proof}
For each pair $\{x, y\}$, we must determine membership of $U_{x \| y}$ and $U_{y \| x}$, then
count the elements of the disjoint union
\[
U_{x, y}:=U_{x \| y} \cup U_{y \| x},
\]
For this a loop through $S\setminus \{x, y\}$ suffices, which is $O(n)$ work for each of $O(n^2)$
pairs $\{x, y\}$, giving $O(n^3)$ operations.
Proposition \ref{p:explcohe} shows that a similar $O(n^3)$
procedure suffices to compute the sum on the right side of
(\ref{e:cohesionx}), inserting the precomputed coefficients
$(|U_{x, y}|)_{x, y \in S}$.
\end{proof}

Our goal is to approximate \texttt{PaLD} at less than 
the $O(n^3)$ cost described in Proposition \ref{p:cubic}.


\section{Local depth restricted to neighbors}\label{s:pannld}

\begin{figure}
\caption{\textbf{Nearest neighbors digraph:} 
\textit{neighbors of a single vertex $v$ in a 3-nearest-neighbors digraph,
and arcs between them, are highlighted.
In this instance, the 3-nearest-neighbors digraph split into components of size 15 and 5,
respectively.
}
}
\label{f:nndigraph}
\begin{center}
\scalebox{0.5}{\includegraphics{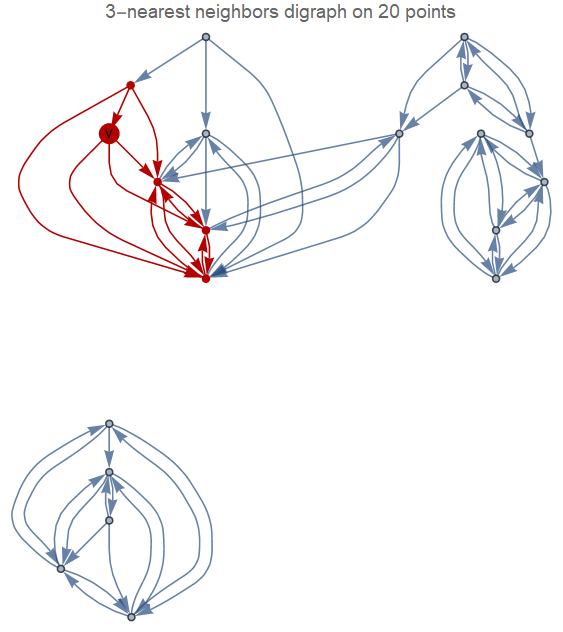} }
\end{center}

\end{figure}

\subsection{Several plausible methods}
When attempting to adapt the methods of Section \ref{s:pald} to the $K$-nearest neighbor digraph,
several approaches come to mind, such as:
\begin{enumerate}
    \item [(A)] Modify Definition \ref{d:ldep} of local depth so $Y$ is sampled only from the $K$-nearest neighbors
    of $x$, and restrict $Z$ to $K$-nearest neighbors of $x$ or $Y$.
    
    \item [(B)]
    Treat a triplet comparison $y \prec_x z$ as a tie when neither $y$ nor $z$ is among the 
     $K$-nearest neighbors of $x$.
     
    \item [(C)]
    Randomization of triplet comparisons which are not made explicitly.
\end{enumerate}
We investigated all three\footnote{
Method (A) lacks a clear rationale for selecting the threshold (\ref{e:clusterthreshold}).
Method (B) does not lend itself to the kind of sensitivity analysis offered in Theorem \ref{t:cohesionconcentration}, and
the resulting clustering was less effective in practice.
}, but only (C) gave a satisfactory theory. This is presented below.

\subsection{Friends, strangers, and promoted pairs}\label{s:friends}
Suppose $\{\preceq_x, x \in S \}$ is a ranking system, and $K$ is an integer with
$1 < K < n-1$. For each $x \in S$, specify an integer $K_x$ such that $K \leq K_x < n-1$,
and a subset $\Gamma_x \subset S \setminus \{x\}$ of size $K_x$ with the property that
\begin{equation}\label{e:friendsprecede}
    y \prec_x z, \quad \forall y \in \Gamma_x, \quad \forall z \in S \setminus (\{x\} \cup \Gamma_x).
\end{equation}
Refer to $\Gamma_x$ as $K_x$-nearest neighbors, 
or simply as \textbf{friends} of $x$, and to 
\(
\Delta_x:=S \setminus (\{x\} \cup \Gamma_x)
\)
as the \textbf{strangers} of $x$; (\ref{e:friendsprecede}) says 
that the total order $\preceq_x$ places any friend of $x$ before any stranger of $x$.
When $x \in \Gamma_y$, we say that $y$ is a co-friend of $x$. 

In some applications to unsupervised learning, $K_x = K$, for all $x$, and
$\Gamma_x$ consists of the $K$ nearest neighbors of $x$ according to $\preceq_x$.
In other applications (e.g. Section \ref{s:bipartite}), $S$ arises as the vertex set of a graph,
$\Gamma_x$ consists of the vertices adjacent to $x$, and these neighbors precede
the non-neighbors of $x$ according to $\preceq_x$. 

Given friend sets $\{ \Gamma_x, x \in S \}$ as in (\ref{e:friendsprecede}),
the \textbf{promoted pairs} consist of the subset
\begin{equation}\label{e:promotedpairs}
    \mathcal{P}:= \{ \{x, y\}: y \in \Gamma_x \, \mbox{ or } \, x \in \Gamma_y \} 
    \subset \binom{S}{2}.
\end{equation}
In other words, if $y$ is a friend or co-friend of $x$, then $\{x, y\}$ is a
promoted pair\footnote{Contrast $\mathcal{P}$ with the set of mutual friends:
\(
    \mathcal{B}:= \{ \{x, y\}: y \in \Gamma_x \, \mbox{ and } \, y \in \Gamma_x \}.
\)
In the nearest neighbors digraph (Figure \ref{f:nndigraph})
$\mathcal{B}$ is the set of vertex pairs connected by edges in both directions,
whereas $\mathcal{P}$ is the edge set of the undirected graph. In terms of $\bar{K}:= (\sum_{x \in S} K_x)/n,$
\(
0 \leq |\mathcal{B}| \leq n \bar{K}/2 \leq |\mathcal{P}| \leq n \bar{K},
\)
and each inequality is sharp, meaning that there is an example where equality holds.
An approximation to \texttt{PaLD} could be formulated either in terms of $\mathcal{B}$ of $\mathcal{P}$,
but $\mathcal{B}$ may be quite sparse, whereas $\mathcal{P}$ grows with $K$ in a way that enables
concentration inequalities.
}.

\subsection{Restricting the local depth computation to promoted pairs}
The complement of the set of promoted pairs is called the set of \textbf{relegated pairs}:
\begin{equation}\label{e:relegation}
    \mathcal{R}:= \binom{S}{2} \setminus \mathcal{P} = 
    \{ \{x, y\}: y \in \Delta_x \, \mbox{ and } \, y \in \Delta_x \}.
\end{equation}
Introduce the notations:
\[
\mathcal{P}_x:= \{y \in S \setminus \{x\}: \{x, y\} \in \mathcal{P} \}; \quad
\mathcal{R}_x:= \{y \in S \setminus \{x\}: \{x, y\} \in \mathcal{R} \}.
\]
Refer to Figure \ref{f:nndigraph} for an example in which vertex $v$ 
and elements of $\mathcal{P}_v$ are highlighted, as well as directed edges between members of $\mathcal{P}_v$.
The sizes of these sets satisfy: $d_x:=|\mathcal{P}_x| \geq K_x$, and $|\mathcal{R}_x| \leq n - K_x - 1$.
There is no universal upper bound on $d_x$ in terms of $K$, because
one specific point can be the nearest neighbor of all the other points: see
Section \ref{s:stargraph} for a natural example.
Let $\bar{K}$ denote the average $(\nicefrac{1}{n}) \sum_x K_x$; useful inequalities include:
\begin{equation}\label{e:promographdegrees}
\frac{1}{2} n \bar{K} \leq
    |\mathcal{P}| = \frac{1}{2}\sum_{x \in S} d_x \leq \sum_{x \in S} K_x =  n \bar{K}.
\end{equation}
The first holds because  $K_x \leq d_x$, and the second because the total number of directed edges is at least
half the total undirected degree.

Our approximation to local depth will rest on three ingredients:
\begin{enumerate}
    \item[(a)] 
    Perform explicit triplet comparisons of the form ``$y \prec_v z$'' 
    only when both $\{v, y\}$ and $\{v, z\}$ are promoted pairs. 
    See highlighted edges in Figure \ref{f:nndigraph}, treating $v$
    as the central highlighted vertex.
    \item[(b)] 
    Introduce a deliberate distortion into the ranking system by
treating every $y \in \mathcal{P}_x$ as more similar to $x$ than any
$z \notin \mathcal{P}_x$, even when the ranking $\preceq_x$
would decide otherwise. For any fixed $x, y, z$, the distortion
disappears when $K_x$ is sufficiently large.

    \item[(c)] 
    When both $\{x, z\}$ and $\{y, z\}$ are relegated pairs, a randomization procedure is implicitly invoked, and
    corresponding terms in the cohesion formula will be replaced by averages with respect to the randomization,
    as we shall describe in Section \ref{s:randomrelegates}.

\end{enumerate}

\subsection{Conflict focus in terms of promoted pairs}

The restrictions (a) and (b) lead to a a subtle reformulation of conflict focus.
First we introduce the counterpart to (\ref{e:leftconflictfocus}) .

\begin{defn}\label{d:leftpromoconflictfocus}
For $\{x, y\} \in \mathcal{P}$, the \textbf{half-focus} is
\[
      U^{\mathcal{P}}_{x \| y}:=  \{x\} \cup \{z \in \mathcal{P}_x \cap \mathcal{P}_y: z \prec_x y \mbox{ and } x \prec_z y \}
      \cup \{z \in \mathcal{P}_x \setminus \mathcal{P}_y: z \prec_x y \}.
\]
\end{defn}

To reiterate (a), the comparison ``$x \prec_z y$'' is allowed only when
both $\{x, z\}$ and $\{y, z\}$ are promoted pairs,
i.e. $z \in \mathcal{P}_x \cap \mathcal{P}_y$. When $\{x, z\}$ is
promoted but $\{y, z\}$ is relegated,
i.e. $z \in \mathcal{P}_x \setminus \mathcal{P}_y$,
then $z$ prefers $x \in \mathcal{P}_z$.
The distortion (b) is deliberately maintained: 
$x \in \mathcal{P}_z$ is considered more similar to $z$ than
any $y \notin \mathcal{P}_z$, whatever the ranking says. We are not defining $U^{\mathcal{P}}_{x \| y}$ at all
when $\{x, y\}$ is a relegated pair.

\begin{defn}\label{d:promoconflictfocus}
For each $\{x, y\} \in \mathcal{P}$, the \textbf{promoted conflict focus} $U^{\mathcal{P}}_{x, y}$ is defined by:
\[
    U^{\mathcal{P}}_{x, y}:= U^{\mathcal{P}}_{x \| y} \cup U^{\mathcal{P}}_{y \| x} = 
    \{x, y\} \cup \{z \in \mathcal{P}_x : z \prec_x y  \} \cup \{z' \in \mathcal{P}_y: z' \prec_y x \}.
\]
\end{defn}
As with (\ref{e:conflictfocus}), $U^{\mathcal{P}}_{x \| y}$ and $U^{\mathcal{P}}_{y \| x}$ are disjoint, with
$x \in U^{\mathcal{P}}_{x \| y}$ and $y \in U^{\mathcal{P}}_{y \| x}$.

\subsection{Stranger randomization}\label{s:strangerrand}
The following construction is crucial to our approximation to
local depth. The idea is not to invoke triplet comparisons ``$y \prec_x z$''
when neither $\{x, y\}$ and $\{x, z\}$ is a promoted pair,
but to give an average answer over random concordant ranking systems.
Randomization is a theoretical device, used only for deriving formulas, not
for generating random samples. Let 
\begin{equation}\label{e:randomizerelegated}
    \eta:=\{\eta_{x,y}, \{x, y\} \in \mathcal{R} \}.
\end{equation}
be a collection of i.i.d. Uniform$(0, 1)$ random variables. Restrict to the probability one event that no ties occur.
Our approximation will replace the ranking system $\{\preceq_x, x \in S\}$ by its \textbf{stranger randomization}
$\{\preceq^{\eta}_x, x \in S\}$, where the relation $y \prec^{\eta}_x z$ means one of three things:
\begin{enumerate}
    \item[(A)] 
    $y, z \in \mathcal{P}_x$ and $y \prec_x z$: ``usual comparison
    of two promoted pairs''.
   
    \item[(B)] 
    $\{x, y\} \in \mathcal{P}$ and $\{x, z\} \in \mathcal{R}$: ``promoted beats relegated''.
    
    \item[(C)] 
    $z \in \mathcal{R}_x \cap \mathcal{R}_y$ and $\eta_{x,y} <  \eta_{x,z}$: ``stranger randomization''.
    
\end{enumerate}

As a complement to Definition \ref{d:leftpromoconflictfocus} of half-focus for promoted pairs, here is the
random counterpart to (\ref{e:leftconflictfocus}) for relegated pairs.

\begin{defn}\label{d:leftrelegconflictfocus}
For $\{x, y\} \in \mathcal{R}$, the half-focus is the random set:
\[
      U^{\mathcal{R}}_{x \| y}(\eta):=  \{x\} \cup (\mathcal{P}_x \setminus \mathcal{P}_y) \cup
      \{z \in \mathcal{P}_x \cap \mathcal{P}_y: x \prec_z y \}
      \cup \{z \in \mathcal{R}_x \cap \mathcal{R}_y: \eta_{x,z} < \min \{  \eta_{x,y}, \eta_{y,z} \} \}.
\]
\end{defn}

It is possible for $\mathcal{P}_x \cap \mathcal{P}_y$ to be non-empty,
even when $\{x, y\}$ is relegated. When both $\{x, z\}$ and $\{y, z\}$ 
are relegated, the random condition $\eta_{x,z} <  \eta_{x,y}$ replaces $z \prec_x y$, and the random condition
$\eta_{x,z} <  \eta_{y, z}$ replaces $x \prec_z y$. 
Following the pattern of Definition \ref{d:promoconflictfocus}, for promoted pairs,
here is the notion of conflict focus for relegated pairs.

\begin{defn}\label{d:relegconflictfocus}
For $\{x, y\} \in \mathcal{R}$ the \textbf{relegated conflict focus} $U^{\mathcal{R}}_{x, y}(\eta)$ is defined by:
\[
    U^{\mathcal{R}}_{x, y}(\eta):= U^{\mathcal{R}}_{x \| y}(\eta) \cup U^{\mathcal{R}}_{y \| x}(\eta) = 
    \{x, y\} \cup \mathcal{P}_x \cup \mathcal{P}_y \cup
    \{z \in \mathcal{R}_x \cap \mathcal{R}_y: \eta_{x,y} > \min \{  \eta_{x,z}, \eta_{y,z} \} \}.
\]
\end{defn}

\subsection{Cohesion matrix in terms of promoted and relegated pairs}

In the context of promoted and relegated pairs, and the stranger randomization
$\{\preceq^{\eta}_x, x \in S\}$, the local depth computation goes as follows. 
Fix $x \in S$, and sample $Y$ uniformly from $S \setminus \{x\}$. There are two possibilities:
\begin{enumerate}
    \item[(1)] 
    If $\{x, Y\}$ is a promoted pair, sample $Z$ uniformly from 
the promoted conflict focus $U^{\mathcal{P}}_{x, Y}$. Then the local depth at $x$
is the probability $\p[Z \in U^{\mathcal{P}}_{x \| Y} ]$.

    \item[(2)] 
     If $\{x, Y\}$ is a relegated pair, sample $Z$ uniformly from the relegated conflict focus
 $U^{\mathcal{R}}_{x, Y}(\eta)$. Then the (random) local depth at $x$
is the conditional probability $\p[Z \in U^{\mathcal{R}}_{x \| y}(\eta) ]$,
given $\eta$.
    
\end{enumerate}

According to this sampling procedure,
the (random) cohesion matrix (\ref{e:cohesionx})
for the randomized ranking system $\{\preceq^{\eta}_x, x \in S\}$,
denoted $(C_{x,v}^{\eta})$, is the sum of contributions from promoted and relegated pairs:
\begin{equation}\label{e:randomcohesion}
C_{x,v}^{\eta}:=C_{x,v}^{\mathcal{P}} + C_{x,v}^{\mathcal{R}}(\eta); \quad
v \in \{x\} \cup \mathcal{P}_x ,
\end{equation}
where in terms of Definitions \ref{d:leftpromoconflictfocus}, \ref{d:promoconflictfocus},
\ref{d:leftrelegconflictfocus}, and \ref{d:relegconflictfocus},
\[
C_{x,v}^{\mathcal{P}} := \frac{1}{n-1}
\sum_{y \in \mathcal{P}_x} 
\frac{ 1 } { |U^{\mathcal{P}}_{x, y}| } 1_{ \{v \in U^{\mathcal{P}}_{x \| y} \} };
\quad
C_{x,v}^{\mathcal{R}}(\eta):=\frac{1}{n-1}
\sum_{y \in \mathcal{R}_x} 
\frac{ 1 } { |U^{\mathcal{R}}_{x, y}(\eta)| }  1_{ \{v \in U^{\mathcal{R}}_{x \| y}(\eta) \} }.
\]
\subsubsection{Comments on the random cohesion matrix}
\begin{enumerate}
    \item[a.] 
    Apart from the novelty of stranger randomization, (\ref{e:randomcohesion}) is wholly consistent with
    the methods of Section \ref{s:pald}.
        \item[b.]
        The deterministic first term $C_{x,v}^{\mathcal{P}}$, called the \textbf{promoted cohesion matrix},
        will be computed in full detail.
            \item[c.]
            The relegated cohesion matrix $C_{x,v}^{\mathcal{R}}(\eta)$ is a random
function of (\ref{e:randomizerelegated}), and will not be computed
explicitly, but merely estimated via its mean.
\end{enumerate}

\subsubsection{Expectation of the random cohesion matrix}
Definition \ref{d:pannldmatrix} is phrased in terms of the nearest neighbors
digraph $(S, F)$, where $F$ is the union of the
arcs from $x$ to each of its $K_x$ nearest neighbors, over $x \in S$.

\begin{defn}\label{d:pannldmatrix}
A pair $\{x, y\} \in \mathcal{P}$ (i.e is promoted) if $x$ is one of the $K_y$ nearest
neighbors of $y$, or $y$ is one of the $K_x$ nearest 
neighbors of $x$. The PaNNLD cohesion matrix $(C_{x,v}^{F})$, defined only when $v = x$ or
$\{x, v\} \in \mathcal{P}$, is derived from (\ref{e:randomcohesion}):
\begin{equation}\label{e:pannldmatrix}
    C_{x,v}^{F}:= C_{x,v}^{\mathcal{P}} + \E[C_{x,v}^{\mathcal{R}}(\eta)]
\end{equation}
where expectation is taken with respect to the family $\eta$ as in (\ref{e:randomizerelegated}).
\end{defn}

Three main results hold for (\ref{e:pannldmatrix}), whose proofs will
occupy the rest of this paper. Efficiently computable expressions for the terms on
the right side of (\ref{e:pannldmatrix}) appear in Theorems \ref{t:complexity} and
\ref{t:averages}, respectively. The quality of approximation of the PaNNLD cohesion
to the \texttt{PaLD} cohesion is estimated in Theorem \ref{t:cohesionconcentration}.

\subsection{Complexity result}
To evaluate the promoted cohesion matrix appearing in (\ref{e:randomcohesion}) and
(\ref{e:pannldmatrix}), we use the fact, implied by Definition \ref{d:promoconflictfocus}, that
\[
|U^{\mathcal{P}}_{x, y}| = |U^{\mathcal{P}}_{x\| y}| + |U^{\mathcal{P}}_{y\| x}|.
\]
It will suffice to establish membership of all half-foci 
$\{U^{\mathcal{P}}_{x\| y}: \{x, y\} \in \mathcal{P} \}$.

\begin{thm}[COMPLEXITY]\label{t:complexity}
Given a ranking system on $S$, and a nearest neighbors
digraph $(S, F)$ with undirected version $(S, \mathcal{P})$,
there is a combinatorial algorithm which computes $C_{x,v}^{\mathcal{P}}$, for all $\{x, v\} \in \mathcal{P}$
and for $x = v$.
\begin{enumerate}
    \item [(a)]
    The algorithm requires about
    \(
\sum_{x \in S}  d_x \log{d_x} \,( \leq 2 n \bar{K} \log{n} )
\)
triplet comparisons\footnote{Sum over $x$ the cost of sorting $\mathcal{P}_x$ with respect to $\prec_x$.
See the \texttt{sort} method of Java \texttt{List} \cite{javalist}},
where $d_x$ is the degree of $x$ in $(S, \mathcal{P})$.  
    \item [(b)]
    The number of steps in the algorithm's inner loops does not exceed 
        \(
(\nicefrac{3}{2}) \sum_{x \in S}  d_x^2 
\).
\end{enumerate}
\end{thm}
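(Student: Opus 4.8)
The plan is to make Theorem \ref{t:complexity} an explicit algorithm-description-plus-accounting argument. First I would fix $x\in S$ and describe how to compute all summands $C_{x,v}^{\mathcal P}$ attributable to the outer-loop index $x$; then sum the costs over $x$. The key structural observation is that for fixed $x$ and fixed $y\in\mathcal P_x$, the set $U^{\mathcal P}_{x\|y}$ consists of $\{x\}$ together with those $z\in\mathcal P_x$ with $z\prec_x y$ that additionally either lie outside $\mathcal P_y$ or satisfy $x\prec_z y$; and $|U^{\mathcal P}_{x,y}| = |U^{\mathcal P}_{x\|y}| + |U^{\mathcal P}_{y\|x}|$. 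So the natural preprocessing is: for each $x$, sort $\mathcal P_x$ according to $\prec_x$, at a cost of about $d_x\log d_x$ triplet comparisons (the \texttt{sort} method on a list of size $d_x$); this is the only place oracle calls are spent, giving the bound in part (a), and $\sum_x d_x\log d_x \le \log n\sum_x d_x = 2|\mathcal P|\log n \le 2n\bar K\log n$ by (\ref{e:promographdegrees}). Once every $\mathcal P_x$ is sorted, a comparison ``$z\prec_x y$'' for $y,z\in\mathcal P_x$ is answered in $O(1)$ by comparing precomputed ranks $r_x(z)<r_x(y)$, exactly as in the discussion before Proposition \ref{p:cubic}.

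Next I would describe the inner loops that produce the conflict-focus sizes and then the cohesion entries. For each ordered pair $(x,y)$ with $y\in\mathcal P_x$, compute $|U^{\mathcal P}_{x\|y}|$ by a single pass over $z\in\mathcal P_x$: test $r_x(z)<r_x(y)$, and if so test membership $z\in\mathcal P_y$ (via a hash set built once per $y$ in $O(d_y)$ time, or simply a boolean lookup) and, when $z\in\mathcal P_x\cap\mathcal P_y$, test $x\prec_z y$ by $r_z(x)<r_z(y)$ — all $O(1)$ per $z$. That is $d_x$ steps per pair $(x,y)$, hence $\sum_x d_x\cdot d_x=\sum_x d_x^2$ steps to get all the half-focus sizes $|U^{\mathcal P}_{x\|y}|$; adding the symmetric partner gives $|U^{\mathcal P}_{x,y}|$ in $O(1)$ more. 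Then, in a second family of passes, for each $(x,y)$ with $y\in\mathcal P_x$ loop again over $z\in U^{\mathcal P}_{x\|y}$ and add $1/|U^{\mathcal P}_{x,y}|$ to the accumulator $C_{x,z}^{\mathcal P}$; this is at most another $\sum_x d_x^2$ steps, and handles $v=x$ automatically since $x\in U^{\mathcal P}_{x\|y}$ for every $y$. Combining the two phases gives the inner-loop bound $(3/2)\sum_x d_x^2$ once one is slightly careful that the membership-and-comparison tests in the first phase and the accumulation in the second can in fact be fused into passes whose total is $\le(3/2)\sum_x d_x^2$ rather than $2\sum_x d_x^2$ — the $3/2$ presumably reflects that the size computation can be shared, or that one orders the work to avoid recomputing the sorted lists.

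The main obstacle, and the step I would spend the most care on, is the bookkeeping that produces exactly the constant $3/2$ in part (b) rather than a larger constant: one has to be explicit about which quantities are computed in which pass, show that $|U^{\mathcal P}_{x\|y}|$ and the accumulation into $C^{\mathcal P}_{x,\cdot}$ need not each cost a full $d_x$-pass for every $y$, and verify that building the auxiliary membership structures ($z\in\mathcal P_y$?) is dominated by $\sum_x d_x^2$. A secondary subtlety is confirming that restricting oracle calls to the sorting of the $\mathcal P_x$'s really suffices — i.e. that every triplet comparison invoked anywhere in the definition (\ref{e:leftpromoconflictfocus})--(\ref{e:promoconflictfocus}) of $U^{\mathcal P}_{x\|y}$ is of the form ``$z\prec_x y$'' with both $\{x,z\},\{x,y\}\in\mathcal P$, or ``$x\prec_z y$'' with both $\{z,x\},\{z,y\}\in\mathcal P$, so that it is answerable from the rank tables $r_w(\cdot)$ restricted to $\mathcal P_w$ with no further oracle access. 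Granting those two points, parts (a) and (b) follow by the additive accounting sketched above, and the parenthetical bounds follow from (\ref{e:promographdegrees}) together with $d_x\le n$.
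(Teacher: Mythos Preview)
Your treatment of part (a) matches the paper exactly: sort each $\mathcal P_x$ with respect to $\prec_x$ at cost $d_x\log d_x$, store rank tables, and thereafter answer every comparison ``$u\prec_w v$'' with $u,v\in\mathcal P_w$ by a rank lookup. You also correctly verify that every comparison arising in (\ref{e:leftpromoconflictfocus}) has this form.

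For part (b) your organization is slightly different from the paper's, and that is precisely why you cannot see where the constant $3/2$ comes from. You fix an ordered pair $(x,y)$ and scan $z\in\mathcal P_x$, paying $d_x$ per ordered pair and hence $\sum_x d_x^2$ for the first pass alone; a second accumulation pass brings you to $2\sum_x d_x^2$. The paper instead fixes $x$ and iterates over \emph{unordered} pairs $\{y,z\}\subset\mathcal P_x$, of which there are $\binom{d_x}{2}$. For each such triple $(x,\{y,z\})$ it performs at most three rank lookups --- one to decide $y\prec_x z$ versus $z\prec_x y$, and (in the triangle case $\{y,z\}\in\mathcal P$) two more to place $x$ into $U^{\mathcal P}_{y\|z}$ or $U^{\mathcal P}_{z\|y}$ via the tests $x\prec_y z$ and $x\prec_z y$. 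In the relegated case $\{y,z\}\in\mathcal R$, the single test $y\prec_x z$ suffices to place $y$ into $U^{\mathcal P}_{x\|z}$ (or $z$ into $U^{\mathcal P}_{x\|y}$), since $y\in\mathcal P_x\setminus\mathcal P_z$ automatically. This yields at most $3\sum_x\binom{d_x}{2}\le\frac{3}{2}\sum_x d_x^2$ inner-loop steps, and the sets $U^{\mathcal P}_{x\|y}$ are built up incrementally by these ``add'' operations rather than computed from scratch for each $(x,y)$. The $3/2$ is thus not a fusion of your two passes but the product $3\cdot\frac{1}{2}$ coming from three lookups per \emph{unordered} pair.

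So your plan is sound and would prove a version of the theorem with constant $2$; to obtain the stated $3/2$ you should switch the inner loop to unordered pairs $\{y,z\}\in\binom{\mathcal P_x}{2}$ and do the ``place $x$ into the appropriate conflict focus'' bookkeeping as above.
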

The proof will be given in Section \ref{s:promocomplexity}, using
Algorithm \ref{a:promocohesion}.

\subsection{Randomization averages}
On comparing Definition \ref{d:leftrelegconflictfocus} and (\ref{e:randomcohesion}), we see that
\begin{equation}\label{e:randomdiagonal}
   (n-1) C_{x,x}^{\mathcal{R}}(\eta) = \sum_{y \in \mathcal{R}_x} \frac{1}{|U^{\mathcal{R}}_{x, y}(\eta)|}.
\end{equation}
The mean of the expression (\ref{e:randomdiagonal}) will be easy to estimate.
Then treat $C_{x,v}^{\mathcal{R}}(\eta)$, for $\{x, v\} \in \mathcal{P}$,
as a perturbation of (\ref{e:randomdiagonal}):
adjust the last expression when
$y \in \mathcal{P}_v$, i.e. $v \in  \mathcal{P}_x \cap \mathcal{P}_y$,
to take account of whether or not $x \prec_v y$:
\begin{equation}\label{e:randompromoted}
     C_{x,v}^{\mathcal{R}}(\eta) = C_{x,x}^{\mathcal{R}}(\eta) -  \frac{1}{n-1}
     \sum_{\substack{y \in \mathcal{P}_v \cap \mathcal{R}_x\\
     y \prec_v x} }
     \frac{1}{|U^{\mathcal{R}}_{x, y}(\eta)|}, \quad \{x, v\} \in \mathcal{P}.
\end{equation}

For $\{x, y\} \in \mathcal{R}$, define the \textbf{range of influence}:
\begin{equation}\label{e:rangeofinfluence}
    m_{x, y}:=|\{x, y\} \cup \mathcal{P}_x \cup \mathcal{P}_y| =
   2 + d_x + d_y - | \mathcal{P}_x \cap \mathcal{P}_y | = n - |\mathcal{R}_x \cap \mathcal{R}_y|.
\end{equation}

The next major result gives an explicit formula for the
the second term of (\ref{e:pannldmatrix}), which avoids 
computing expected values separately for the $O(n^2)$
relegated pairs. 

\begin{thm}[AVERAGE COHESION]\label{t:averages}
A function $\phi_n:\{2, 3, \ldots, n\} \to \R_+$ exists such that
\[
\E\left( 1/|U^{\mathcal{R}}_{x, y}(\eta)| \right) = \phi_n(m_{x, y}),
\]
where expectation is taken with respect to $\eta$ as in (\ref{e:randomizerelegated}).
The expected value of the relegated cohesion matrix in (\ref{e:pannldmatrix})
may be expressed using (\ref{e:randomdiagonal}) and (\ref{e:randompromoted}):
\begin{align*}
 \E[C_{x,x}^{\mathcal{R}}(\eta)] =&  \sum_{r = 2}^n \phi_n(r) \cdot
\frac{|\{y \in \mathcal{R}_x: \, m_{x, y} = r\}|  }{n-1}  \\
 \E[C_{x,v}^{\mathcal{R}}(\eta)] =& \E[C_{x,x}^{\mathcal{R}}(\eta)] - 
  \frac{1}{n-1}
     \sum_{\substack{y \in \mathcal{P}_v \cap \mathcal{R}_x\\
     y \prec_v x} }
     \phi_n(m_{x, y}), \quad \{x, v\} \in \mathcal{P}.
\end{align*}
All these sums are computable in time proportional to the sum of squares of the 
vertex degrees of the graph $(S, \mathcal{P})$. 
\end{thm}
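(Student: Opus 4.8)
The plan is to prove the three assertions in turn: the existence of $\phi_n$ (the only step with real content), the two displayed expectation formulas (an immediate consequence), and the complexity bound (a matter of careful accounting).

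\emph{The function $\phi_n$.} Fix $\{x,y\}\in\mathcal{R}$. From (\ref{e:relegconflictfocus}) one reads off $|U^{\mathcal{R}}_{x,y}(\eta)| = m_{x,y} + W$, where $W:=\#\{z\in\mathcal{R}_x\cap\mathcal{R}_y:\eta_{x,y}>\min\{\eta_{x,z},\eta_{y,z}\}\}$, and by (\ref{e:rangeofinfluence}) the number of candidate $z$ is $N:=|\mathcal{R}_x\cap\mathcal{R}_y| = n-m_{x,y}$. The coordinates of $\eta$ entering this expression are $\eta_{x,y}$ together with the pairs $\eta_{x,z},\eta_{y,z}$ over $z\in\mathcal{R}_x\cap\mathcal{R}_y$; since $x\neq y$ and each such $z$ differs from both, these $1+2N$ indices are pairwise distinct members of $\mathcal{R}$, so the $\eta$'s involved are i.i.d.\ Uniform$(0,1)$. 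Condition on $\eta_{x,y}=u$: each $\min\{\eta_{x,z},\eta_{y,z}\}$ falls below $u$ with probability $p(u):=1-(1-u)^2$, and the events over distinct $z$ use disjoint coordinates, so $W$ is conditionally Binomial$(N,p(u))$. Hence
\[
\E\!\left[\frac{1}{|U^{\mathcal{R}}_{x,y}(\eta)|}\right] = \int_0^1 \sum_{k=0}^{N}\binom{N}{k} p(u)^k (1-p(u))^{N-k}\,\frac{1}{m_{x,y}+k}\,du,
\]
and because $N = n-m_{x,y}$ the right-hand side depends on $\{x,y\}$ only through $m_{x,y}$ (and $n$); define $\phi_n(r)$ to be this integral with $m_{x,y}$ replaced by $r$. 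It is positive, $\phi_n(n)=1/n$, and the substitutions $s=1-u$ and $1/(r+k)=\int_0^1 t^{r+k-1}dt$, followed by integrating out $s$ and then $t$, give the finite closed form
\[
\phi_n(r) = \frac1n\sum_{j=0}^{n-r}\frac{1}{2j+1}\,\binom{n-r}{j}\big/\binom{n-1}{j},
\]
so the table $(\phi_n(r))_{2\le r\le n}$ can be tabulated once.

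\emph{The expectation formulas.} These follow by taking expectations in the identities (\ref{e:randomdiagonal}) and (\ref{e:randompromoted}) and using linearity: $(n-1)\E[C^{\mathcal{R}}_{x,x}(\eta)] = \sum_{y\in\mathcal{R}_x}\phi_n(m_{x,y})$, and grouping summands by the value $r=m_{x,y}$ gives the first formula; the correction term in (\ref{e:randompromoted}) is likewise a sum of reciprocals $1/|U^{\mathcal{R}}_{x,y}(\eta)|$ whose expectations are $\phi_n(m_{x,y})$, giving the second.

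\emph{Complexity.} Treat $(\phi_n(r))_{2\le r\le n}$ as precomputed (extend $\phi_n$ by $0$ outside $[2,n]$), write $d_x=|\mathcal{P}_x|$, $D=\max_x d_x$, and note $\sum_x d_x^2\ge\max\{D^2,n,\sum_x d_x\}$ since each $d_x\ge K\ge 2$. Step one: compute $g_{x,y}:=|\mathcal{P}_x\cap\mathcal{P}_y|$ for all pairs with $g_{x,y}>0$ by looping over each vertex $z$ and all pairs inside $\mathcal{P}_z$, recording for each $x$ the list $\Lambda_x:=\{y\ne x:g_{x,y}>0\}$; this is $\sum_z\binom{d_z}{2}=O(\sum_z d_z^2)$ work, and by (\ref{e:rangeofinfluence}) yields $m_{x,y}=2+d_x+d_y-g_{x,y}$ in $O(1)$ whenever the pair shares a neighbour, while for a relegated pair with $g_{x,y}=0$ one has $m_{x,y}=2+d_x+d_y\le n$. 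Step two (off-diagonal entries): for each $v$, loop over $(x,y)\in\mathcal{P}_v\times\mathcal{P}_v$ and, when $\{x,y\}\in\mathcal{R}$ and $y\prec_v x$ (both $O(1)$, the latter from the $\prec_v$-ordering of $\mathcal{P}_v$ already furnished in Theorem \ref{t:complexity}), add $\phi_n(m_{x,y})$ to an accumulator indexed by $(x,v)$, the value $m_{x,y}$ being in the table since $v$ is a common neighbour of $x,y$; this costs $\sum_v d_v^2$ and, since $\{x,v\}\in\mathcal{P}\iff x\in\mathcal{P}_v$, reaches every promoted pair. Step three (diagonal entries): $T_x:=\sum_{y\in\mathcal{R}_x}\phi_n(m_{x,y})$ splits according to whether $g_{x,y}=0$; the part with $g_{x,y}>0$ is traversed directly in $O(|\Lambda_x|)$, and since $\{y\in\mathcal{R}_x:g_{x,y}=0\}=S\setminus(\{x\}\cup\mathcal{P}_x\cup(\Lambda_x\cap\mathcal{R}_x))$, its contribution equals $\Phi(d_x)-\phi_n(2+2d_x)-\sum_{y\in\mathcal{P}_x}\phi_n(2+d_x+d_y)-\sum_{y\in\Lambda_x\cap\mathcal{R}_x}\phi_n(2+d_x+d_y)$, where $\Phi(a):=\sum_d h(d)\,\phi_n(2+a+d)$ and $h(d):=\#\{y:d_y=d\}$; precompute $h$ in $O(n)$ and $\Phi(1),\dots,\Phi(D)$ in $O(D^2)=O(\sum_x d_x^2)$, and the per-$x$ corrections sum to $O(\sum_z d_z^2)$. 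Assembling $\E[C^{\mathcal{R}}_{x,x}(\eta)]=T_x/(n-1)$ and $\E[C^{\mathcal{R}}_{x,v}(\eta)]=(T_x-\text{accumulator}_{x,v})/(n-1)$ for promoted $\{x,v\}$ is $O(n+\sum_x d_x)$ more, so the total is $O(\sum_x d_x^2)$.

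\emph{Expected main obstacle.} The only genuinely mathematical point is the first paragraph: seeing that, after conditioning on $\eta_{x,y}$, the count $W$ is a clean mixed Binomial depending on $m_{x,y}$ alone — this hinges on verifying that the finitely many $\eta$-coordinates involved are pairwise distinct, hence independent. Everything else is accounting; there the delicate part is the diagonal sum, whose naive evaluation would visit each of the $\Theta(n)$ relegated neighbours of every $x$ and cost $\Theta(n^2)$, which is why one isolates the generic case $g_{x,y}=0$ and sums it in aggregate via the degree histogram and the functions $\Phi$. One caveat to flag: the bound charges the one-time construction of the $\phi_n$-table as free; evaluating it naively from the closed form is $O(n^2)$, so strictly one should either amortize that against the main computation or build the table by the two-term recursion in $r$ obtained by coupling the $m=r$ and $m=r+1$ randomizations.
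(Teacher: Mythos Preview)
Your proof is correct and follows essentially the same approach as the paper: the existence of $\phi_n$ comes from conditioning on $\eta_{x,y}$ to obtain a mixed Binomial for $|U^{\mathcal{R}}_{x,y}|-m_{x,y}$ (the paper's Lemma \ref{l:binomialsize}), the expectation formulas are immediate by linearity, and the complexity argument separates the diagonal sum into a bulk term computed via a degree histogram plus corrections for pairs sharing a neighbour, exactly as the paper does with $g(\alpha)$, $H_x$, and the intersection Algorithm \ref{a:intersection}. Two cosmetic differences worth noting: you supply an explicit closed form for $\phi_n$ (the paper gives only the asymptotic $\coth^{-1}$ approximation of Lemma \ref{l:arcsin}), and you bound the histogram precomputation by $O(D^2)\le O(\sum_x d_x^2)$ directly, whereas the paper invokes a separate lemma (Lemma \ref{l:distinctdegrees}) bounding the number of distinct vertex degrees to get the slightly sharper $O(n\bar K)$ for that step.
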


\textbf{Remarks:}
\begin{enumerate}
    \item The proof will be given in Section \ref{s:averages}, which supplies efficient algorithms,
    depending only on the graph $(S, \mathcal{P})$, for counting
    $\{y \in \mathcal{R}_x: \, m_{x, y} = r\}$ when $\{x, y\} \in \mathcal{R}$.
    
    \item  $\phi_n(n) = 1/n$, and for large $n > r$ take $c:=n/(n-r)$ and substitute 
\(
\frac{\sqrt{c}}{n} \coth^{-1}(\sqrt{c})
\)
for $\phi_n(r)$. Reasoning comes from Lemma \ref{l:arcsin}'s integral approximation to $\phi_n$:
in the limit as $n \to \infty, m \to \infty, \frac{n}{n-m} \to c$, 
\[
(n-m)\phi_n(m) \to
\int_0^1 \frac{1}{c - t^2} dt = 
\frac{\coth^{-1}(\sqrt{c})}{\sqrt{c}}.
\]
There is an exact formula for $\phi_n(1)$, which we omit because $m_{x, y}$ is at least $K$,
which in cases of interest will be at least 16.
\end{enumerate}

\subsection{The PaNNLD algorithm}
In the \texttt{PaLD} paradigm, we must compare entries in the matrix $(C_{x,v}^{F})_{\{x, v\} \in \mathcal{P}}$
with the cluster threshold $\tau$, which (\ref{e:clusterthreshold}) defines as
$(\nicefrac{1}{2 n})\sum_x C_{x,x}^{F}$. Here is the explicit formula, based on (\ref{e:randomcohesion}) and Theorem \ref{t:averages},
bearing in mind that half the
sum over $x$ and then over $y \in \mathcal{P}_x$ is the same as the sum over unordered pairs
$\{x, y\} \in \mathcal{P}$, and similarly for $\mathcal{R}$.

\begin{cor}\label{c:pannldthreshold}
The threshold (\ref{e:clusterthreshold}) for PaNNLD is a sum $\tau = \tau_{\mathcal{P}} + \tau_{\mathcal{R}}$
from promoted and relegated pairs, respectively:
\[
\tau_{\mathcal{P}}:= 
\frac{1}{n (n-1)} 
\sum_{\{x, y\} \in \mathcal{P}} \frac{1}{ |U^{\mathcal{P}}_{x, y}|}; \quad
\tau_{\mathcal{R}}:=
\frac{1}{n (n-1)} \sum_{r \geq 2} \phi_n(r) \cdot
| \{ \{x, y\} \in \mathcal{R}: \, m_{x, y} = r \} |.
\]
\end{cor}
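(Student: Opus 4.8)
The plan is to unwind the definition $\tau = \tfrac{1}{2n}\sum_{x\in S} C_{x,x}^{F}$ using Definition \ref{d:pannldmatrix}, which splits $C_{x,x}^{F} = C_{x,x}^{\mathcal{P}} + \E[C_{x,x}^{\mathcal{R}}(\eta)]$, and then to identify the two halves of the resulting sum with $\tau_{\mathcal{P}}$ and $\tau_{\mathcal{R}}$ respectively. The only tools needed are the diagonal specialization of the cohesion formula (\ref{e:randomcohesion}), Theorem \ref{t:averages}, and the ``half the sum over ordered pairs equals the sum over unordered pairs'' device already exploited in the proofs of Proposition \ref{p:explcohe} and Corollary \ref{c:meanlocaldepth}.

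First I would record the diagonal entries. Because $x \in U^{\mathcal{P}}_{x\|y}$ for every promoted pair $\{x,y\}$ and $x \in U^{\mathcal{R}}_{x\|y}$ for every relegated pair (the explicit $\{x\}$ term in (\ref{e:leftpromoconflictfocus}) and (\ref{e:leftrelegconflictfocus}) respectively), the indicator $\delta_{x\|y}(x)$ of (\ref{e:preferfirst}) is identically $1$. Substituting $v=x$ into (\ref{e:randomcohesion}) therefore gives
\[
C_{x,x}^{\mathcal{P}} = \frac{1}{n-1}\sum_{y\in\mathcal{P}_x}\frac{1}{|U^{\mathcal{P}}_{x,y}|},
\qquad
(n-1)\,C_{x,x}^{\mathcal{R}}(\eta) = \sum_{y\in\mathcal{R}_x}\frac{1}{|U^{\mathcal{R}}_{x,y}(\eta)|},
\]
the latter being exactly (\ref{e:randomdiagonal}). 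Taking expectations in the second identity and applying Theorem \ref{t:averages} yields $\E[C_{x,x}^{\mathcal{R}}(\eta)] = \tfrac{1}{n-1}\sum_{r=2}^{n}\phi_n(r)\,\#\{y\in\mathcal{R}_x : m_{x,y}=r\}$.

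Next I would sum over $x$ and divide by $2n$, treating the promoted and relegated contributions separately. For the promoted part, $\tfrac{1}{2n}\sum_x C_{x,x}^{\mathcal{P}} = \tfrac{1}{2n(n-1)}\sum_x\sum_{y\in\mathcal{P}_x}|U^{\mathcal{P}}_{x,y}|^{-1}$; since $y\in\mathcal{P}_x \iff \{x,y\}\in\mathcal{P} \iff x\in\mathcal{P}_y$ and $|U^{\mathcal{P}}_{x,y}|=|U^{\mathcal{P}}_{y,x}|$ by Definition \ref{d:promoconflictfocus}, the double sum over ordered pairs equals twice the sum over $\{x,y\}\in\mathcal{P}$, so the factor $2$ cancels the $\tfrac12$ and leaves precisely $\tau_{\mathcal{P}}$. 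For the relegated part I would first interchange the two finite sums to get $\tfrac{1}{2n(n-1)}\sum_{r\geq2}\phi_n(r)\sum_x\#\{y\in\mathcal{R}_x : m_{x,y}=r\}$; the inner expression counts ordered pairs $(x,y)$ with $\{x,y\}\in\mathcal{R}$ and $m_{x,y}=r$, and using $m_{x,y}=m_{y,x}$ from (\ref{e:rangeofinfluence}) together with $y\in\mathcal{R}_x \iff x\in\mathcal{R}_y$ this equals $2\,\#\{\{x,y\}\in\mathcal{R} : m_{x,y}=r\}$. Once more the $2$ cancels the $\tfrac12$ and we recover $\tau_{\mathcal{R}}$, so $\tau = \tau_{\mathcal{P}}+\tau_{\mathcal{R}}$.

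I do not expect a genuine obstacle here. The only points requiring care are (i) verifying that $\delta_{x\|y}(x)\equiv1$ in both the promoted and relegated regimes, so that the off-diagonal bookkeeping of (\ref{e:randomcohesion}) collapses to the clean diagonal sums above, and (ii) keeping the ordered-versus-unordered pair counting and the attendant factors of $2$ straight, exactly as in the proof of Proposition \ref{p:explcohe}. Everything else is substitution and one legitimate interchange of finite summations.
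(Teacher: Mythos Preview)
Your proposal is correct and follows essentially the same approach the paper sketches: the paper's justification for Corollary~\ref{c:pannldthreshold} is just the sentence preceding it, which invokes (\ref{e:randomcohesion}), Theorem~\ref{t:averages}, and the observation that half the sum over $x$ and then $y\in\mathcal{P}_x$ (resp.\ $y\in\mathcal{R}_x$) equals the sum over unordered pairs in $\mathcal{P}$ (resp.\ $\mathcal{R}$). Your write-up simply fills in those steps explicitly, including the verification that $\delta_{x\|y}(x)\equiv 1$.
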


\begin{defn}\label{d:pannld}
The \texttt{PaNNLD} unsupervised learning algorithm, consists of identifying the subset
$\mathcal{P}^*$ of promoted pairs $\{x, v\} \in \mathcal{P}$ for which $\min \{C_{x,v}^{F}, C_{v,x}^{F} \} \geq \tau$ from Corollary \ref{c:pannldthreshold}, and then partitioning the graph $(S, \mathcal{P}^*)$
into its connected components.
\end{defn}

For the reader's convenience, the ingredients of the \texttt{PaNNLD}
algorithm are summarized in Table \ref{t:pannld-summary}.

\begin{table}[]
    \centering
    \begin{tabular}{|c|c|c|} 
    \hline
    &  Promoted pairs $\mathcal{P}$ & Relegated pairs $\mathcal{R}$ \\
   & (\ref{e:promotedpairs})  &  (\ref{e:relegation}) \\
             \hline
  Useful data & Compare $\{x, y\}, \, \{x, z\} \in \mathcal{P}$& 
  $\{m_{x, y}, \{x, y\} \in \mathcal{R}\}$ histogram  \\
          & (\ref{e:direction}) & (\ref{e:rangeofinfluence})\\
                         \hline
\rule{0pt}{4ex}  Computational steps & $(\nicefrac{3}{2}) \sum\limits_{x \in S}  d_x^2$ & $2 \sum\limits_{x \in S}  d_x^2$
   \\
cf. Proposition \ref{p:cubic} &  Algorithm \ref{a:promocohesion} & Proposition \ref{p:averagingwork} \\
                         \hline
\rule{0pt}{4ex}   Conflict focus of a pair & $U^{\mathcal{P}}_{x, y}$ & $U^{\mathcal{R}}_{x, y}(\eta)$ \\
 cf. (\ref{e:conflictfocus})
 & Def. \ref{d:promoconflictfocus}, Algorithm \ref{a:promocohesion} & Definition \ref{d:relegconflictfocus}\\
    \hline
 \rule{0pt}{4ex}  Cohesion matrix $C^F$ (sum) & $C_{x,v}^{\mathcal{P}}$ & $\E[C_{x,v}^{\mathcal{R}}(\eta)]$\\
 cf. Definition \ref{d:cohesion} & Definition \ref{d:pannldmatrix} &
 Def. \ref{d:pannldmatrix}, Sect. \ref{a:intersection}, \ref{a:relegcohesion}\\
    \hline
  Cohesion threshold $\tau$ (sum) & $\tau_{\mathcal{P}}$ & $\tau_{\mathcal{R}}$ \\
  cf. (\ref{e:clusterthreshold}) & Corollary \ref{c:pannldthreshold} &
Corollary \ref{c:pannldthreshold}   \\

\hline
    \end{tabular}
    \medskip
    \caption{SUMMARY OF \texttt{PaNNLD}:
    \textit{
  Columns two and three summarize the contributions of promoted
  pairs and relegated pairs to the \texttt{PaNNLD} algorithm,
  with pointers to corresponding ingredients of \texttt{PaLD} in column one.
  Definition \ref{d:pannld} explains how to combine the last two rows to render a clustering of $S$.
    }}
    \label{t:pannld-summary}
\end{table}

\subsection{Concentration of measure}
Let us entertain a data modelling paradigm in which most of the information needed 
to classify object $x \in S$ is available by making comparisons among elements of $\mathcal{P}_x$,
i.e. among the in-neighbors and out-neighbors of $x$ in the nearest neighbor digraph.
In this paradigm, comparisons among other objects, e.g. those in $\mathcal{R}_x$, are treated as
background noise, to be replaced by averages. These averages appear in the second summand
of the formula (\ref{e:pannldmatrix}) for the PaNNLD cohesion matrix.

It is not enough for PaNNLD to reduce the computational complexity from $O(n^3)$ (the case
of PaLD) to the sum of squares of vertex degrees in the graph $(S, \mathcal{P})$, which
in favorable situations is $O(n K^2)$. Practitioners also need some assurance about the
quality of approximation when the \texttt{Comparator} at $x$ is invoked only for pairs 
$y, z \in \mathcal{P}_x$. 

We offer such an assurance founded on a concentration inequality. It rests on the premise
that statistical properties of the true relations $\{y \preceq_x z, \{y, z\} \subset \mathcal{R}_x\}$
are comparable to those of the relations 
$\{\eta_{x,y} <  \eta_{x,z}, \{y, z\} \subset \mathcal{R}_x\}$ from the stochastic model 
(\ref{e:randomizerelegated}). This premise is consistent with our initial paradigm:
outside the set of in-neighbors and out-neighbors of $x$, in the nearest neighbor digraph,
the ``friend-of-a-friend principle'' \cite{bar} is likely to break down\footnote{
For any distinct $x, y, z$, small values of $\eta_{x,y}$ and $\eta_{y,z}$ do not
imply a small value for $\eta_{x,z}$, because these three random variables are independent.
}, and so
triplet comparisons can be treated as random.

Theorem \ref{t:cohesionconcentration} compares the stochastic matrix $(C_{x,v}^{\mathcal{R}}(\eta))$
of (\ref{e:randomcohesion}), based on the stranger randomization (\ref{e:randomizerelegated}),
with the average which appears in (\ref{e:pannldmatrix}); similarly for the trace.

\begin{thm}[CONCENTRATION OF MEASURE]\label{t:cohesionconcentration}
Let $K:=\min\{K_x, x \in S\}$ be the minimum friend set size. Then for any $\theta > 0$
\begin{enumerate}
    \item[(i)]
    \(
\p[ \left| C^{\mathcal{R}}_{x, v}(\eta) -\E[C^{\mathcal{R}}_{x, v}(\eta)] \right| \geq \theta ]
\leq 2 e^{-\theta^2 K^2}
\) if $ \{x, v\} \in \mathcal{P}$, or $x = v$.
    \item[(ii)] The trace of $((\nicefrac{1}{2 n})C_{x,v}^{\mathcal{R}}(\eta) )$ is concentrated around the parameter $ \tau_{\mathcal{R}}$ in Corollary \ref{c:pannldthreshold}:
    \[
\p[ | \frac{1}{2n}\sum_x C^{\mathcal{R}}_{x, x}(\eta) - \tau_{\mathcal{R}} | \geq \frac{\theta}{n} ] \leq 2e^{-(2 \theta K /3)^2}.
\]
\end{enumerate}
\end{thm}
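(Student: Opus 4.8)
The plan is to prove both parts via the bounded-differences (McDiarmid) inequality applied to the i.i.d.\ family $\eta = \{\eta_{x,y} : \{x,y\} \in \mathcal{R}\}$. First I would fix $x$ (and $v$) and regard $C^{\mathcal{R}}_{x,v}(\eta)$ as a function $f(\eta)$ of these independent Uniform$(0,1)$ coordinates. The key structural observation, from (\ref{e:randomdiagonal}) and (\ref{e:randompromoted}), is that $C^{\mathcal{R}}_{x,v}(\eta)$ depends on $\eta$ only through the sizes $|U^{\mathcal{R}}_{x,y}(\eta)|$ for $y \in \mathcal{R}_x$, and each such size satisfies $|U^{\mathcal{R}}_{x,y}(\eta)| \geq 2 + d_x + d_y \geq K+1$ almost surely (since $\mathcal{P}_x \cup \mathcal{P}_y \cup \{x,y\} \subseteq U^{\mathcal{R}}_{x,y}$ and $d_x \geq K_x \geq K$). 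Hence each summand $1/|U^{\mathcal{R}}_{x,y}(\eta)|$ lies in $(0, 1/(K+1)]$, and multiplying by $1/(n-1)$ and summing over the $|\mathcal{R}_x| \leq n - K - 1 < n-1$ terms shows $0 \le C^{\mathcal{R}}_{x,x}(\eta) \le 1/(K+1)$, with $C^{\mathcal{R}}_{x,v}(\eta)$ between $0$ and $C^{\mathcal{R}}_{x,x}(\eta)$.

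Next I would compute the bounded-differences coefficients. Changing a single coordinate $\eta_{a,b}$ can only alter the membership status of $z = b$ in $U^{\mathcal{R}}_{a,y}$-type sets, i.e.\ it perturbs at most a bounded number of the sets $U^{\mathcal{R}}_{x,y}(\eta)$ appearing in the sum for $C^{\mathcal{R}}_{x,v}(\eta)$, and each such set's cardinality changes by at most $1$ while staying $\geq K+1$. Since $|1/m - 1/(m\pm 1)| \le 1/(m(m-1)) \le 1/(K(K+1))$ for $m \ge K+1$, and the number of affected summands is bounded by a small absolute constant times (but really, after the $1/(n-1)$ normalization) something of order $1/(n-1)$ per affected pair, I would tally the total change as at most $c/(K^2)$ for an explicit small constant $c$ — aiming for $c$ such that the McDiarmid exponent comes out to exactly $2\theta^2 K^2$ in part (i) and $(2\theta K/3)^2$ in part (ii). McDiarmid then gives $\p[|f - \E f| \ge \theta] \le 2\exp(-2\theta^2 / \sum_i c_i^2)$, and the arithmetic of $\sum_i c_i^2$ — how many coordinates $\eta_{a,b}$ actually influence $f$, and with what magnitude — is what pins down the constants $K^2$ and $(2K/3)^2$.

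For part (ii) the target is $g(\eta) := \sum_x C^{\mathcal{R}}_{x,x}(\eta)/(2n)$, whose expectation is $\tau_{\mathcal{R}}$ by Corollary \ref{c:pannldthreshold} (the $\E[C^{\mathcal{R}}_{x,x}]$ formula in Theorem \ref{t:averages} combined with the fact that half the double sum over ordered pairs is the sum over unordered pairs in $\mathcal{R}$). Here a single $\eta_{a,b}$ influences only the two diagonal terms $C^{\mathcal{R}}_{a,a}$ and $C^{\mathcal{R}}_{b,b}$ (through the shared set $U^{\mathcal{R}}_{a,b}$, whose size it can shift by $1$), plus the membership of $z=b$ in $U^{\mathcal{R}}_{a,y}$ and of $z=a$ in $U^{\mathcal{R}}_{b,y'}$; after dividing by $2n$ and by $n-1$ the per-coordinate bound is of order $1/(n^2 K^2)$ or so, and summing the squares over the $O(n^2)$ relevant coordinates and taking $\theta/n$ as the deviation level produces the $(2\theta K/3)^2$ exponent. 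I would present (ii) as a variant of the (i) computation with the roles of "how many coordinates" and "how large is the deviation target" rescaled.

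The main obstacle I anticipate is bookkeeping the bounded-differences constant precisely enough to recover the stated exponents $2\theta^2 K^2$ and $(2\theta K/3)^2$ rather than merely $\Omega(\theta^2 K^2)$: one must carefully count how many summands $1/|U^{\mathcal{R}}_{x,y}(\eta)|$ can be simultaneously affected by flipping one $\eta_{a,b}$ (changing $\eta_{a,b}$ past $\eta_{a,y}$ or $\eta_{b,y}$ can move $b$ in or out of several $U^{\mathcal{R}}_{a,y}$ sets at once, and symmetrically), and combine this with the worst-case size bound $|U^{\mathcal{R}}| \ge K+1$ and the normalizations $1/(n-1)$ and $1/(2n)$. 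A secondary subtlety is that $C^{\mathcal{R}}_{x,v}$ for $\{x,v\}\in\mathcal{P}$ is not itself a sum of independent-ish indicators but a difference of two such sums (per (\ref{e:randompromoted})), so I would bound its increments by the sum of the increments of the two pieces, which is harmless for the Lipschitz estimate. Everything else — verifying $\E g = \tau_{\mathcal{R}}$, invoking McDiarmid, the final exponential — is routine.
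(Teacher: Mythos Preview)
Your overall strategy (McDiarmid applied to $\eta$) matches the paper, but the bounded-differences estimate you sketch is wrong in a way that makes the argument collapse.

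Consider part (i) with $x=v$, so $F_x(\eta)=(n-1)C^{\mathcal R}_{x,x}(\eta)=\sum_{y\in\mathcal R_x}|U^{\mathcal R}_{x,y}(\eta)|^{-1}$. Take a coordinate $\eta_{x,w}$ with $w\in\mathcal R_x$. First, changing $\eta_{x,w}$ does \emph{not} change $|U^{\mathcal R}_{x,w}|$ by at most $1$: by (\ref{e:relegconflictfocus}) the random part of $U^{\mathcal R}_{x,w}$ is $\{z\in\mathcal R_x\cap\mathcal R_w:\eta_{x,w}>\min(\eta_{x,z},\eta_{w,z})\}$, so moving $\eta_{x,w}$ from near $0$ to near $1$ can change this count by as much as $n-m_{x,w}$. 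Second, changing $\eta_{x,w}$ does \emph{not} affect only a bounded number of summands: for every $y\in\mathcal R_x$ with $w\in\mathcal R_y$ the indicator $\eta_{x,y}>\min(\eta_{x,w},\eta_{y,w})$ can flip, so $|U^{\mathcal R}_{x,y}|$ may change by $1$ for $O(n)$ values of $y$. If you simply add these up via $|1/m-1/(m\pm1)|\le 1/K^2$, the Lipschitz constant for that single coordinate is of order $n/K^2$; with $O(n)$ such coordinates $\{x,w\}$ you get $\sum c_{z,w}^2\gtrsim n^3/K^4$, and the McDiarmid exponent $(\theta(n-1))^2/(2\nu)$ is of order $\theta^2K^4/n\to 0$. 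No concentration follows.

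The paper closes this gap with a telescoping argument (its Lemma on Difference Bounds). One orders $\mathcal R_x=\{y(1),\dots,y(m)\}$ so that $\eta_{x,y(1)}<\cdots<\eta_{x,y(m)}$ and observes that $\{x\}\cup\mathcal P_x\cup\{y(1),\dots,y(j-1)\}\subset U^{\mathcal R}_{x,y(j)}(\eta)$, hence $|U^{\mathcal R}_{x,y(j)}|\ge d_x+j$. Since flipping $\eta_{x,w}$ perturbs each $|U^{\mathcal R}_{x,y(j)}|$ (for $j\neq\ell$ where $w=y(\ell)$) by at most $1$, the total change in $\sum_{j\neq\ell}|U^{\mathcal R}_{x,y(j)}|^{-1}$ is bounded by the telescoping sum $\sum_j\bigl(\tfrac{1}{d_x+j-1}-\tfrac{1}{d_x+j}\bigr)\le 1/d_x$, and the $j=\ell$ term contributes at most $1/m_{x,w}$. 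This yields $c_{x,w}\le 1/d_x+1/m_{x,w}\le 2/K$, which after squaring and summing gives $\nu\le (n-1)^2/(2K^2)$ and the stated exponent $\theta^2K^2$. The same device handles $F(\eta)$ in part (ii); your claim there that $\eta_{a,b}$ ``influences only the two diagonal terms'' and shifts $|U^{\mathcal R}_{a,b}|$ by $1$ is likewise incorrect, and the telescoping bound is again what rescues the estimate. Without this idea your proposal does not produce any nontrivial bound.
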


The practical implication of (i) is that choosing $K$ significantly bigger
than $1/\theta$ controls the randomization-induced error in \texttt{PaNNLD}, uniformly in $n$.
Proof of Theorem \ref{t:cohesionconcentration} will be given in Section \ref{s:concentration}.

\section{Complexity of computing the promoted cohesion matrix}\label{s:promocomplexity}
 
\subsection{Pre-computation of ranking tables for each $x \in S$}\label{s:lookuptable}
Denote $|\mathcal{P}_x|$ by $d_x$.
For each $x \in S$, we begin by sorting $\mathcal{P}_x$, with respect to the \texttt{Comparator}
at $x$, and placing the
results in a Table $T_x$, which maps each $y \in \mathcal{P}_x$ to the
integer from $1$ to $d_x$ denoting its rank.
The \texttt{sort} method of Java \texttt{List} \cite{javalist}, for example,
accomplishes this with 
\(
\sum_{x \in S}  d_x \log{d_x} 
\)
oracle calls; by inequality (\ref{e:promographdegrees}) the latter is 
bounded above by $2 n \bar{K} \log{n}$ .
When a later decision is needed about whether $y \prec_x z$,
for $y, z \in \mathcal{P}_x$, compare the rank of $y$ with the rank of $z$ in $T_x$.

\subsection{Graph traversal for cohesion matrix calculations}
We describe below an efficient vertex traversal of $(S, \mathcal{P})$, 
which returns the sets
$\{U^{\mathcal{P}}_{x \| y}, \{x, y\} \in \mathcal{P}\}$ of Definition \ref{d:leftpromoconflictfocus}, 
and deterministic ingredients of each of the sets in
Definition \ref{d:leftrelegconflictfocus}, namely the \textbf{connector sets}
 \begin{equation}\label{e:connectors}
     D^{\mathcal{R}}_{y \| z} := \{x \in \mathcal{P}_y \cap \mathcal{P}_z, \, y \prec_x z\}, \quad  \{y, z\} \in \mathcal{R}.
 \end{equation}
Evidently $\mathcal{P}_y \cap \mathcal{P}_z$ is the disjoint union 
\(
    D^{\mathcal{R}}_{y \| z} \cup D^{\mathcal{R}}_{z \| y}.
\)
Given (\ref{e:connectors}) and the vertex degrees in $(S, \mathcal{P})$,
the range of influence (\ref{e:rangeofinfluence}) is expressible as:
\[
m_{x,y}:=|\{x, y\} \cup (\mathcal{P}_x \cup \mathcal{P}_y)| =
2 + d_x + d_y - |\mathcal{P}_x \cap \mathcal{P}_y| = 
2 + d_x + d_y - |D^{\mathcal{R}}_{y \| x}| - |D^{\mathcal{R}}_{x \| y}|.
\]
All of this will be achieved within a budget of $\sum_x \binom{|\mathcal{P}_x|}{2}$ triplet comparisons,
a function of the vertex degrees in the undirected graph $(S, \mathcal{P})$.

No specific ordering of $(S, \mathcal{P})$ is required, and in Java the computation can
be spread across processors using a parallel stream.

\subsection{Efficient algorithm for PaNNLD}

\begin{algorithm}
  \caption{PaNNLD promoted cohesion matrix}\label{a:promocohesion}
  \begin{algorithmic}[1]
    \Procedure{Half-focus membership}{$\{\prec_x, x \in S\}$, $\mathcal{P}$}
      \State $U^{\mathcal{P}}_{x \| y}:= \{x\}, \, \forall \{x, y\} \in \mathcal{P}$
      \Comment{Half-focus initialization}
      \State $D^{\mathcal{R}}_{y \| z} := \emptyset, \, \forall \{y, z\} \in \mathcal{R}$ (lazy initialization)
      \Comment{Connector initialization}
 \For{$x \in S$}
      \For{$ y, z \in \mathcal{P}_x$ (distinct, unordered)}
      \Comment{$\sum_x \binom{|\mathcal{P}_x|}{2}$ choices of $x, y, z$}
      \If{$\{y, z\}$ is relegated}
            \If{$y \prec_x z$}
             \State add $x$ to $D^{\mathcal{R}}_{y \| z}$
                \State add $y$ to $U^{\mathcal{P}}_{x \| z}$
                \Comment{``promoted beats relegated''}
            \Else{ $z \prec_x y$}
            \State add $x$ to $D^{\mathcal{R}}_{z \| y}$
            \State add $z$ to $U^{\mathcal{P}}_{x \| y}$
            \Comment{``promoted beats relegated''}
            \EndIf
        \Else{ $\{y, z\}$ is promoted}
        \\
\Comment {edges $\{x, y\}$, $\{y, z\}$, and $\{x, z\}$ form a triangle in $(S, \mathcal{P})$}
      \If{$x \prec_y z$ and $y \prec_x z$}
            \State add $x$ to $U^{\mathcal{P}}_{y \| z}$
            \Comment{follow Definition \ref{d:leftpromoconflictfocus}}
        \EndIf
      \If{$x \prec_z y$ and $z \prec_x y$}
            \State add $x$ to $U^{\mathcal{P}}_{z \| y}$
            \Comment{follow Definition \ref{d:leftpromoconflictfocus}}
        \EndIf
    \EndIf
      \EndFor
\EndFor
      \State \textbf{return: }$\{U^{\mathcal{P}}_{x \| y}, \{x, y\} \in \mathcal{P} \}$
            \Comment{$U^{\mathcal{P}}_{x, y} = U^{\mathcal{P}}_{x \| y} \cup U^{\mathcal{P}}_{y \| x}$}
      \State \textbf{return: }non-empty sets among $\{D^{\mathcal{R}}_{y \| z}, \{x, y\} \in \mathcal{R} \}$
      \Comment{$\mathcal{P}_y \cap \mathcal{P}_z=
    D^{\mathcal{R}}_{y \| z} \cup D^{\mathcal{R}}_{z \| y}$}
    \EndProcedure
  \end{algorithmic}
\end{algorithm}

The following algorithm, central to PaNNLD, is presented twice: first as Algorithm \ref{a:promocohesion},
and secondly in text.
Proof of Theorem \ref{t:complexity}
amounts to checking the number of triplet comparisons in Algorithm \ref{a:promocohesion}, and the sufficiency of its output.

For $\{x, y\} \in \mathcal{P}$ initialize $U^{\mathcal{P}}_{x \| y}$ to equal the singleton
$\{x\}$. Lazily initialize $D^{\mathcal{R}}_{y \| z}$ to be empty,
which means the set is not created at all unless some element shows up.
For each $x \in S$, do the following for every pair of distinct neighbors of $x$ in $\mathcal{P}$,
i.e. for every $y, z \in \mathcal{P}_x$.

\begin{enumerate}
    \item[(R)] 
    If $\{y, z\}$ is relegated, then 
    \begin{enumerate}
        \item[(R1)]
        If $y \prec_x z$,  add $x$ to $D^{\mathcal{R}}_{y \| z}$, and add 
        $y$ to $U^{\mathcal{P}}_{x \| z}$; ``promoted beats relegated''.
        
        \item[(R2)]
        Else $z \prec_x y$; add $x$ to $D^{\mathcal{R}}_{z \| y}$, and add $z$ to $U^{\mathcal{P}}_{x \| y}$; ``promoted beats relegated''.

    \end{enumerate}

    \item[(P)] 
    If $\{y, z\}$ is promoted, then edges $\{x, y\}$, $\{y, z\}$, and $\{x, z\}$ form a triangle in $(S, \mathcal{P})$,
    and we place $x$ as Definition \ref{d:leftpromoconflictfocus} dictates\footnote{
    Postpone the placements of $y$ and $z$ until their turn comes along.}:
    \begin{enumerate}
          \item[(P1)] 
     If $x \prec_y z$ and $y \prec_x z$, add $x$ to $U^{\mathcal{P}}_{y \| z}$.  
               \item[(P2)] 
     If $x \prec_z y$ and $z \prec_x y$, add $x$ to $U^{\mathcal{P}}_{z \| y}$. 
 
    \end{enumerate}

\end{enumerate}

\subsection{Proof of Theorem \ref{t:complexity}}
The Tables built in Section \ref{s:lookuptable} establish statement (a) of Theorem \ref{t:complexity}.
Algorithm \ref{a:promocohesion} has an outer loop over
$x \in S$, and an inner loop over $\binom{d_x}{2}$ 
pairs of distinct neighbors of $x$ in $\mathcal{P}$. Inspection of options R1 and R2
shows that only one triplet comparison is needed in the case of a relegated pair.
P1 and P2 show that three triplet comparisons are needed whenever
edges $\{x, y\}$, $\{y, z\}$, and $\{x, z\}$ form a triangle in 
$(S, \mathcal{P})$. This means look-ups into the each of the precomputed sorted sets
$(\mathcal{P}_x, \prec_x)$, $(\mathcal{P}_y, \prec_y)$, and $(\mathcal{P}_z, \prec_z)$
Thus the total budget is no more than \(
3 \sum_x \binom{d_x}{2}
\) operations, which proves statement (b) of Theorem \ref{t:complexity}.
Algorithm \ref{a:promocohesion} returns the sets 
$\{U^{\mathcal{P}}_{x \| y}, \{x, y\} \in \mathcal{P} \}$, from which we
recover the numerators, and also the denominators
\(
|U^{\mathcal{P}}_{x, y}| = |U^{\mathcal{P}}_{x \| y}| + |U^{\mathcal{P}}_{y \| x}|,
\)
of the expression for the matrix $C^{\mathcal{P}}_{x, v}$
(Definition \ref{d:pannldmatrix}).
This completes the proof of Theorem \ref{t:complexity}.

\subsection{Pathological example: path metric on a star graph}\label{s:stargraph}

Take real numbers $0 < w_1 < w_2 < \ldots < w_n$. Consider a star graph centered at vertex $x_0$,
with leaf vertices $x_1, x_2, \ldots, x_n$, meaning that the edge set $E$ is $\{ \{x_0, x_j\}, 1 \leq j \leq n \}$.
Let $S:= \{x_0, x_1, x_2, \ldots, x_n \}$ denote the vertex set. Place a metric $\rho$ on $S$, corresponding
to weighted path length in the star graph $(S, E)$, where edge $ \{x_0, x_j\}$ has weight $w_j$.
In other words, $\rho(x_0, x_j) = w_j$, and
\[
\rho(x_i, x_j) = w_i + w_j, \quad 1 \leq i < j \leq n.
\]
The corresponding ranking system on $S$ (which is the same for any strictly increasing positive
sequence $(w_j)$) has
\[
x_i \prec_{x_k} x_j \quad 0 \leq i < j \leq n, \quad k \notin \{i, j\}.
\]
Experiments show that \texttt{PaLD} clustering places about $62\%$ of the vertices (those closest to $x_0$) in
a single cluster, while the remaining vertices are singleton
outliers.

Suppose friend sets are given by two nearest neighbors.
Then friend sets may be catalogued as  $x_0  \leftarrow  x_j \rightarrow x_1$
for integers $j = 2, \ldots, n$, together with
\[
x_1  \leftarrow  x_0 \rightarrow x_2; \quad
x_0  \leftarrow  x_1 \rightarrow x_2.
\]
Thus $x_0$ has in-degree $n$. The $2 n - 1$ promoted pairs are
\[  
\mathcal{P}= \{ \{x_0, x_i\}, \, 1 \leq i \leq n \} \cup \{ \{x_1, x_j\}, \, 2 \leq j \leq n \}.
\]
The relegated pairs are
\(
\mathcal{P}= \{ \{x_i, x_j\}, \, 2 \leq i < j \leq n \}.
\)
Algorithm \ref{a:promocohesion} is poorly suited to this case, since vertices $x_0$ and $x_1$
both have degree $n$, giving roughly $n^2$ triplet comparisons.
Thus a budget of $O(n K^2)$ computational steps in insufficient in such a case,
at least using \texttt{PaNNLD}.

\section{Averaging over the stranger randomization}\label{s:averages}

\subsection{Techniques for stranger averages}
Our goal in Section \ref{s:averages} is a constructive proof of Theorem \ref{t:averages}.
The two main ideas are:
\begin{enumerate}

    \item Grouping relegated pairs $\{x, y\}$ for which $m_{x, y}$ has a specific value, to avoid the $O(n^2)$ cost of summation over all $\{x, y\} \in \mathcal{R}$.
    
    \item Expressing the mean of $1/|U^{\mathcal{R}}_{x, y}|$ when 
    $\{x, y\}$ is a relegated pair, in terms of $m_{x, y}:= |\{x, y\} \cup (\mathcal{P}_x \cup \mathcal{P}_y)|$.

\end{enumerate}

Begin with (1), and exploit output of Algorithm \ref{a:promocohesion} to provide data
needed for (2).

\subsection{Probabilistic result for stranger averages}\label{s:randomrelegates}

Recall Definition \ref{d:relegconflictfocus}:
\[
    U^{\mathcal{R}}_{x, y}:= 
    \{x, y\} \cup \mathcal{P}_x \cup \mathcal{P}_y \cup
    \{z \in \mathcal{R}_x \cap \mathcal{R}_y: \eta_{x,y} > \min \{  \eta_{x,z}, \eta_{y,z} \} \}.
\]
By the independence and distribution assumptions of (\ref{e:randomizerelegated}), the size of the
random set $U^{\mathcal{R}}_{x, y}$ has the following distribution:

\begin{lem}\label{l:binomialsize}
Suppose the relegated pair $\{x, y\}$ satisfies $m_{x, y}:= |\{x, y\} \cup \mathcal{P}_x \cup \mathcal{P}_y| \leq n-1$.
Conditional on $\eta_{x,y} = 1-t \in (0, 1)$, $|U^{\mathcal{R}}_{x, y}| - m_{x, y}$ has a
Binomial$(n - m_{x, y} , 1 - t^2)$ distribution.
\end{lem}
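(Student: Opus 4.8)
The plan is to read the size of $U^{\mathcal{R}}_{x,y}$ straight off formula (\ref{e:relegconflictfocus}) and recognize its random part as a sum of i.i.d.\ indicators, then identify their common parameter. First I would observe that $\{x,y\}\cup\mathcal{P}_x\cup\mathcal{P}_y$ is a fixed set of cardinality $m_{x,y}$ which, since $\{x,y\}$ is relegated, is disjoint from $\mathcal{R}_x\cap\mathcal{R}_y$; hence
\[
|U^{\mathcal{R}}_{x,y}| - m_{x,y} \;=\; \sum_{z\in \mathcal{R}_x\cap\mathcal{R}_y} 1_{\{\eta_{x,y} > \min\{\eta_{x,z},\eta_{y,z}\}\}}.
\]
By (\ref{e:rangeofinfluence}) the index set has size $|\mathcal{R}_x\cap\mathcal{R}_y| = n - m_{x,y}$, which is $\geq 1$ under the hypothesis $m_{x,y}\leq n-1$. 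So it remains to show that, conditional on $\eta_{x,y}=1-t$, the summands are i.i.d.\ Bernoulli$(1-t^2)$.

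Next I would pin down the independence structure. For each $z\in\mathcal{R}_x\cap\mathcal{R}_y$, the conditions $z\in\mathcal{R}_x$ and $z\in\mathcal{R}_y$ say exactly that $\{x,z\}\in\mathcal{R}$ and $\{y,z\}\in\mathcal{R}$, so $\eta_{x,z}$ and $\eta_{y,z}$ are genuine members of the i.i.d.\ Uniform$(0,1)$ family (\ref{e:randomizerelegated}); likewise $\eta_{x,y}$ is (since $\{x,y\}\in\mathcal{R}$). Moreover, as $z$ ranges over $\mathcal{R}_x\cap\mathcal{R}_y$, using $x\neq y$ and distinctness of the $z$'s, the unordered pairs $\{x,z\},\{y,z\}$ are pairwise distinct and each distinct from $\{x,y\}$. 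Therefore the $2(n-m_{x,y})+1$ random variables $\eta_{x,y}$ and $\{\eta_{x,z},\eta_{y,z}:z\in\mathcal{R}_x\cap\mathcal{R}_y\}$ are mutually independent and Uniform$(0,1)$; in particular, after conditioning on $\eta_{x,y}$, the indicator attached to $z$ is a function of the pair $(\eta_{x,z},\eta_{y,z})$ only, and these pairs are independent across $z$.

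Finally I would condition on $\eta_{x,y}=1-t\in(0,1)$ and compute the common success probability: for a single $z$,
\[
\p\big[\eta_{x,y} > \min\{\eta_{x,z},\eta_{y,z}\} \,\big|\, \eta_{x,y}=1-t\big]
= 1 - \p[\eta_{x,z}\geq 1-t]\,\p[\eta_{y,z}\geq 1-t] = 1 - t^2 .
\]
Summing $n-m_{x,y}$ independent indicators with this common parameter yields a Binomial$(n-m_{x,y},\,1-t^2)$ variable, which is the assertion. I do not expect any real obstacle here; the only point requiring care is the bookkeeping in the middle paragraph — confirming that every $\eta$ invoked is one of the i.i.d.\ variables indexed by $\mathcal{R}$ (this is where relegated-ness of $\{x,z\}$ and $\{y,z\}$ is used), and that, once $\eta_{x,y}$ is fixed, no two of the indicators share an $\eta$, so that conditional independence genuinely holds. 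Everything else is the elementary computation $\p[\min\{U_1,U_2\}\geq 1-t]=t^2$ for independent uniforms.
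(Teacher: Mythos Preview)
Your proposal is correct and follows essentially the same approach as the paper's own proof: decompose $|U^{\mathcal{R}}_{x,y}| - m_{x,y}$ as a sum of indicators over $z\in\mathcal{R}_x\cap\mathcal{R}_y$, compute the common conditional success probability $1-t^2$, and invoke independence to obtain the Binomial law. Your middle paragraph, verifying that all the $\eta$'s invoked are distinct members of the i.i.d.\ family (\ref{e:randomizerelegated}) so that conditional independence genuinely holds, is more explicit than the paper's treatment, but the argument is the same.
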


\begin{proof}
The set $\mathcal{R}_x \cap \mathcal{R}_y$ is the complement of $\{x, y\} \cup \mathcal{P}_x \cup \mathcal{P}_y$,
and has size $n - m_{x, y}$, which is at least 1 by assumption.
Given $\eta_{x,y} = 1-t$, the probability of the event 
$\min \{  \eta_{x,z}, \eta_{y,z} \} \geq \eta_{x,y} = 1-t$ is $t^2$,
for $z \in \mathcal{R}_x \cap \mathcal{R}_y$, and the complementary event
$\min \{  \eta_{x,z}, \eta_{y,z} \} < \eta_{x,y}$ has conditional probability $1 - t^2$.
These events are independent as $z$ varies over $\mathcal{R}_x \cap \mathcal{R}_y$, so 
\[
|\{z \in \mathcal{R}_x \cap \mathcal{R}_y: \eta_{x,y} > \min \{  \eta_{x,z}, \eta_{y,z} \} \}|
\sim \mbox{Binomial}(n - m_{x, y} , 1 - t^2).
\]
Add this to $m_{x, y}$ to obtain $|U^{\mathcal{R}}_{x, y}|$, which gives the result.
\end{proof}

For any $\{x, y\} \in \mathcal{R}$, $\eta_{x,y}$ is a Uniform$(0,1)$ random variable.
It follows from Lemma \ref{l:binomialsize} that the mean of $|U^{\mathcal{R}}_{x, y}|^{-1}$
depends only on $n$ and the range of influence $m_{x, y}$ from (\ref{e:rangeofinfluence}), 
for any $\{x, y\} \in \mathcal{R}$. This establishes the first assertion of
Theorem \ref{t:averages}. There is a function $\phi_n$, whose
analytic approximation is given in Lemma \ref{l:arcsin}, so that
\begin{equation}\label{e:inversemoment}
    \E \left( \frac{1}{|U^{\mathcal{R}}_{x, y}| } \right)
    = \phi_n(d_x + d_y + 2 - |\mathcal{P}_x \cap \mathcal{P}_y|).
    \end{equation}
Validity of Lemma \ref{l:arcsin}, for large $n$ and $K$, depends on the fact that
the range of influence $m_{x, y}$, as in (\ref{e:rangeofinfluence}),
satisfies 
\begin{equation}\label{e:lowerboundpromo}
  m_{x, y} \geq 2 + \max \{ K_x, K_y \} \geq K+2, \quad \forall \{x, y\} \in \mathcal{R}.
\end{equation}
Lemma \ref{l:arcsin} shows that for $m_{x, y} < n$
\[
(n-m_{x, y}) \phi_n(m_{x, y}) = 
\E \left( \frac{n-m_{x, y}}{|U^{\mathcal{R}}_{x, y}| } \right)  = 
\int_0^1 \frac{1}{c_{x, y} - t^2} dt +\epsilon_{n, K}
\]
where 
\(
c_{x, y}:=n/(n - m_{x, y}),
\)
and $\epsilon_{n, K} \to 0$ as $n, K \to \infty$. Otherwise $\mathcal{R}_x \cap \mathcal{R}_y = \emptyset$,
in which case $|U^{\mathcal{R}}_{x, y}|= n$, and $\phi_n(n) = 1/n$.

\subsection{Grouping vertices with the same degree}
Let $\Lambda_1$ denote the set of distinct values of $\{d_x + 1, x \in S\}$, and let
$\Lambda_2$ denote the set of distinct unordered pairs $\{\alpha, \beta\}$
for which $d_x = \alpha-1$, $d_y=\beta-1$, for some pair $x, y$ in $S$.
Abbreviate the mean out-degree of the nearest neighbors digraph $(S, F)$ to
$\bar{K}:=\sum_x K_x / n$. Proposition \ref{p:boundvertexdegreepairs} places an
$O(n \bar{K})$ cap on $|\Lambda_2|$.

\begin{prop}\label{p:boundvertexdegreepairs}
In the undirected graph $(S, \mathcal{P})$, with $n \bar{K}$ edges,
and minimum vertex degree at least $K$, take  $\lambda:= \sqrt{2n(2 \bar{K} -  K)} + 1$. Then
\[
|\Lambda_1| < \lambda; \quad
|\Lambda_2| < \lambda(\lambda+1)/2.
\]
Thus a table of values $\{\phi_n(\alpha+\beta): \{\alpha, \beta\} \in \Lambda_2\}$
can be obtained with $O(n \bar{K})$ work.
\end{prop}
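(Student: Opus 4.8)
The plan is to bound $|\Lambda_1|$ first, and then deduce the bound on $|\Lambda_2|$ purely combinatorially. The key input is a lower bound on the number of edges of $(S,\mathcal{P})$ in terms of the distinct degree values. Since every vertex has degree at least $K$, and the distinct degree values $d_x+1$ range over a set $\Lambda_1$ of size $\lambda_1 := |\Lambda_1|$, the smallest possible distinct values of $d_x$ compatible with the minimum-degree constraint are $K, K+1, \ldots, K+\lambda_1-1$. Each distinct degree value must be attained by at least one vertex, so summing vertex degrees gives
\[
2 n \bar{K} = \sum_{x \in S} d_x \geq \sum_{j=0}^{\lambda_1 - 1} (K + j) = \lambda_1 K + \binom{\lambda_1}{2}.
\]
This is a quadratic inequality in $\lambda_1$; solving $\tfrac12 \lambda_1^2 + (K - \tfrac12)\lambda_1 - 2n\bar{K} \leq 0$ and using the quadratic formula yields $\lambda_1 \leq -(K-\tfrac12) + \sqrt{(K-\tfrac12)^2 + 4n\bar{K}} < \sqrt{4n\bar{K} - 2nK} + 1 = \sqrt{2n(2\bar{K}-K)} + 1 = \lambda$, after discarding lower-order terms (one checks $(K-\tfrac12)^2 - K^2 \cdot(\text{something}) $ contributes negatively). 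So $|\Lambda_1| < \lambda$.

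For $\Lambda_2$: every element of $\Lambda_2$ is an unordered pair (possibly a ``diagonal'' pair $\{\alpha,\alpha\}$) of elements drawn from $\Lambda_1$, since $\Lambda_2$ consists of pairs $\{d_x+1, d_y+1\}$. The number of unordered pairs (with repetition allowed) from a set of size $|\Lambda_1|$ is $\binom{|\Lambda_1|}{2} + |\Lambda_1| = |\Lambda_1|(|\Lambda_1|+1)/2$. Hence $|\Lambda_2| \leq |\Lambda_1|(|\Lambda_1|+1)/2 < \lambda(\lambda+1)/2$, using $|\Lambda_1| < \lambda$. Finally, the complexity claim: computing all vertex degrees $d_x$ is linear in $|\mathcal{P}| = O(n\bar{K})$; collecting the distinct values $\Lambda_1$ and then the distinct pairs $\Lambda_2$ costs $O(|\mathcal{P}|)$ as well (each edge or each vertex contributes $O(1)$ lookups into a hash set); and for each of the $|\Lambda_2| = O(n\bar{K})$ pairs $\{\alpha,\beta\}$ one evaluation of $\phi_n$ (via the analytic approximation in Lemma \ref{l:arcsin}, or exactly) is $O(1)$ or at worst $O(\log n)$, so the table is built in $O(n\bar{K})$ (up to logarithmic factors) work.

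The main obstacle is purely bookkeeping around the quadratic estimate: getting the constant inside the square root to come out as exactly $2n(2\bar{K}-K)$ rather than something slightly larger requires being a little careful about which lower-order terms to drop and verifying that the resulting bound is genuinely an upper bound (the inequality $|\Lambda_1| < \lambda$ must hold, not just asymptotically). One should double-check the edge case where $(S,\mathcal{P})$ is very sparse (e.g. $\bar{K}$ close to $K/2$, forcing $2\bar{K}-K$ near zero) to confirm the bound degrades gracefully rather than becoming vacuous or negative; since $\bar{K} \geq K$ always (each $K_x \geq K$ would give $\bar K\ge K$, but in fact $d_x\ge K_x$ so $2\bar K \ge$ average $d_x \ge K$), the quantity $2\bar{K}-K$ is nonnegative, and the argument goes through.
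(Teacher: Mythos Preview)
Your overall strategy matches the paper's (which extracts the bound on $|\Lambda_1|$ as a separate Lemma~\ref{l:distinctdegrees}), but there is a genuine gap in your lower bound on $\sum_x d_x$. You write
\[
2n\bar{K} \;=\; \sum_{x\in S} d_x \;\ge\; \sum_{j=0}^{\lambda_1-1}(K+j) \;=\; \lambda_1 K + \binom{\lambda_1}{2},
\]
but the right side accounts for only $\lambda_1$ vertices---one witness per distinct degree value. The remaining $n-\lambda_1$ vertices each have degree at least $K$ and must also be counted. The correct bound is
\[
\sum_{x\in S} d_x \;\ge\; (n-\lambda_1)K + \sum_{j=0}^{\lambda_1-1}(K+j) \;=\; nK + \binom{\lambda_1}{2}.
\]
This missing $nK$ (rather than $\lambda_1 K$) is exactly what makes the constant under the square root come out as $2n(2\bar K - K)$: from $2n\bar K \ge nK + \tfrac{\lambda_1(\lambda_1-1)}{2}$ one gets $(\lambda_1-1)^2 \le \lambda_1(\lambda_1-1) \le 2n(2\bar K - K)$, hence $\lambda_1 \le \sqrt{2n(2\bar K - K)} + 1 = \lambda$ directly, with no quadratic formula or discarding of lower-order terms needed. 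Your weaker inequality does \emph{not} imply the stated bound: for instance when $\bar K = K$ your quadratic gives roughly $\lambda_1 \lesssim \sqrt{4nK}$, while the claimed $\lambda$ is $\sqrt{2nK}+1$; and your hand-waved step ``$-(K-\tfrac12)+\sqrt{(K-\tfrac12)^2+4n\bar K} < \sqrt{4n\bar K - 2nK}+1$'' is in fact false for typical $n \gg K$.

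Once the lower bound is repaired, your treatment of $|\Lambda_2|$ and of the $O(n\bar K)$ work estimate is fine and agrees with the paper.
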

Proposition \ref{p:boundvertexdegreepairs}
follows from applying Lemma \ref{l:distinctdegrees} to the graph 
$(S, \mathcal{P})$.

\begin{lem}\label{l:distinctdegrees}
Let $(V, E)$ be an undirected graph on $n$ vertices with $m$ edges, and minimum vertex degree at least $K$. 
The number $t$ of distinct vertex degrees in $(V, E)$ is less than
$\sqrt{(4m - 2 n K)} + 1$.
\end{lem}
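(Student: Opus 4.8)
The statement to prove is Lemma \ref{l:distinctdegrees}: in an $n$-vertex graph with $m$ edges and minimum degree at least $K$, the number $t$ of distinct vertex degrees satisfies $t < \sqrt{4m - 2nK} + 1$. The plan is a counting argument built on the handshake lemma $\sum_{v} \deg(v) = 2m$. The key observation is that if there are $t$ distinct degree values, all lying in the range $[K, n-1]$, then to minimize the degree sum subject to having $t$ distinct values we should take the values as small as possible, namely $K, K+1, \ldots, K+t-1$, and assign the smallest multiplicity (one vertex) to the largest of these. Concretely, at least one vertex realizes each of the $t$ distinct values, and the remaining $n - t$ vertices each have degree at least $K$; but we can do better by noting the realized values themselves are at least $K, K+1, \ldots, K+t-1$ in some order.

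**Main steps.** First I would let $d_1 < d_2 < \cdots < d_t$ be the distinct degrees; since the minimum degree is $K$, we have $d_i \ge K + i - 1$ for each $i$ (the values are distinct integers, all $\ge K$). Next, choose one vertex $v_i$ realizing each $d_i$; these are $t$ distinct vertices contributing $\sum_i d_i \ge \sum_{i=1}^{t}(K+i-1) = tK + \binom{t}{2}$ to the degree sum. The other $n - t$ vertices each contribute at least $K$. Hence
\[
2m = \sum_{v} \deg(v) \ge tK + \binom{t}{2} + (n-t)K = nK + \binom{t}{2}.
\]
Rearranging gives $t(t-1) \le 4m - 2nK$, so $(t - \tfrac12)^2 \le 4m - 2nK + \tfrac14$, whence $t - \tfrac12 \le \sqrt{4m - 2nK + \tfrac14} < \sqrt{4m-2nK} + \tfrac12$ (using $\sqrt{a + 1/4} < \sqrt{a} + 1/2$ for $a > 0$, or simply monotonicity plus a crude bound), which yields $t < \sqrt{4m - 2nK} + 1$ as claimed. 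One should check the degenerate cases $4m = 2nK$ (forcing $t = 1$, i.e.\ a $K$-regular graph, consistent with the bound) and be mildly careful that the final inequality is strict; the clean way is to observe $t(t-1) \le 4m - 2nK$ implies $(t-1)^2 < t(t-1) \le 4m-2nK$ when $t \ge 2$, giving $t - 1 < \sqrt{4m-2nK}$ directly, and the case $t=1$ is immediate since the right side is nonnegative.

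**Expected obstacle.** There is no serious obstacle here — this is an elementary extremal counting argument. The only thing requiring a moment's care is getting the strictness of the inequality right and handling the small-$t$ boundary cases cleanly; the trick of using $(t-1)^2 < t(t-1)$ for $t \ge 2$ sidesteps the need for any $\sqrt{a+1/4}$ manipulation. To then deduce Proposition \ref{p:boundvertexdegreepairs} one simply substitutes $m = n\bar K$ and notes $4m - 2nK = 2n(2\bar K - K)$, so $\lambda = \sqrt{2n(2\bar K - K)} + 1$ bounds $|\Lambda_1|$; the bound $|\Lambda_2| \le \binom{|\Lambda_1|}{2} + |\Lambda_1| = |\Lambda_1|(|\Lambda_1|+1)/2 < \lambda(\lambda+1)/2$ is immediate since $\Lambda_2$ consists of unordered pairs (with repetition allowed) drawn from $\Lambda_1$, and the $O(n\bar K)$ work estimate follows because $\lambda^2 = O(n\bar K)$.
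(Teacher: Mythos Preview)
Your argument is correct and essentially identical to the paper's: both pick one vertex per distinct degree value, use $d_i \ge K + i - 1$ to obtain $2m \ge nK + \binom{t}{2}$, and then pass from $t(t-1) \le 4m - 2nK$ to $(t-1)^2 < 4m - 2nK$ for $t \ge 2$. The only wrinkle (shared with the paper) is the degenerate $K$-regular case $4m = 2nK$, where $t = 1$ and the strict inequality $t < 1$ fails; your remark that this case is ``consistent with the bound'' is not quite right, but the lemma as stated simply does not cover it, and the paper's own proof has the same gap.
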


\begin{proof}
Let the vertex degrees of $(V, E)$ be
\(
d_1 \geq d_2 \geq \cdots \geq d_n \geq K.
\)
Given that there are $t$ distinct vertex degrees, there must exist indices
\(i(1) < i(2) < \cdots < i(t-1)
\)
for which
\(
d_{i(j)}\geq 1 + d_{i(j+1)}
\). Hence
\[
2 m = \sum_1^n d_i \geq n K + 1 + 2 + \cdots + t-1 = \frac{(t-1)t}{2} + nK.
\]
This implies $(t-1)^2 < 4m - 2 n K$, which gives the result.
\end{proof}

\subsection{Partial sums}\label{s:partialsums}
We shall use (\ref{e:inversemoment}) to simplify computation of the mean
of (\ref{e:randomdiagonal}), namely
\begin{equation}\label{e:meandiagonal}
  G_x:=\sum_{y \in \mathcal{R}_x} \phi_n(2+d_x+ d_y- |\mathcal{P}_x \cap \mathcal{P}_y |)
= \sum_{y \in \mathcal{R}_x } 
\E \left( \frac{1}{|U^{\mathcal{R}}_{x, y}(\eta)| } \right)
= (n-1) \E[C^{\mathcal{R}}_{x, x}(\eta)].  
\end{equation}
Modification to this expression will also supply $\E[C^{\mathcal{R}}_{x, v}(\eta)]$ appearing in
Theorem \ref{t:averages}. Naive computation of $\{G_x, x  \in S\}$ by 
(\ref{e:meandiagonal}) would require a sum over
$\mathcal{R}_x$ for every $x \in S$, forcing $O(n^2)$ work. To avoid this, Definition \ref{d:sumsreciprocals}
provides approximations to $G_x$ when $d_x = \alpha-1$.

\begin{defn}[DIAGONAL APPROXIMATION]\label{d:sumsreciprocals}
For $\alpha \geq 1$, define 
\[
g(\alpha):=\sum_{\beta = 1}^{n - \alpha} \phi_n(\alpha + \beta) \cdot|  \{y: d_y = \beta-1\} |
\]
and for each $x \in S$, define a diagonal approximation (for the relegated cohesion):
\begin{equation}\label{e:sumsmissingintersections}
  H_x:=g(d_x + 1) - \sum_{
  \substack{y \in \mathcal{P}_x \cup \{x\} \\ d_y + d_x \leq n - 2}} \phi_n(d_x + d_y+2)
  = \sum_{y \in \mathcal{R}_x} \phi_n(d_x+ d_y+2).
\end{equation}
\end{defn}

Direct calculation shows that
the difference between (\ref{e:meandiagonal}) and (\ref{e:sumsmissingintersections}) is:
\begin{equation}\label{e:intersectiondifference}
  G_x - H_x =\sum_{\substack{y \in \mathcal{R}_x\\ \mathcal{P}_x \cap \mathcal{P}_y \neq \emptyset}}
(\phi_n(m_{x, y}) - \phi_n(d_x + d_y+2)).  
\end{equation}

To obtain the expected value of (\ref{e:randompromoted}), we will modify $G_x$
in (\ref{e:meandiagonal}) to give
\begin{equation}\label{e:meanoffdiagonal}
    G_{x, v}:=G_x -\sum_{\substack{y \in \mathcal{P}_v \cap \mathcal{R}_x\\
     y \prec_v x} }\phi_n(m_{x, y}), \quad \{x, v\} \in \mathcal{P}.
\end{equation}

\begin{prop}\label{p:averagingwork}
The values $\{H_x, x  \in S\}$  are computable in $O(n \bar{K})$ work.
The intersection algorithm \ref{a:intersection} below obtains
$\{G_x, x  \in S\}$ through a further $\sum_x d_x^2$ steps.
The relegated cohesion matrix algorithm \ref{a:relegcohesion} below obtains
$\{G_{x, v}, \{x, v\} \in \mathcal{P} \}$ in $\sum_x d_x^2$ more steps.
\end{prop}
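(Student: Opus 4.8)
The plan is to establish the three claimed running times by showing how each quantity can be computed from the output of Algorithm~\ref{a:promocohesion} (the connector sets $D^{\mathcal{R}}_{y\|z}$ and the vertex degrees $d_x$) together with the precomputed tables of Proposition~\ref{p:boundvertexdegreepairs}. First I would handle the $\{H_x\}$ claim. By Definition~\ref{d:sumsreciprocals}, $H_x = g(d_x+1) - \sum_{y\in\mathcal{P}_x\cup\{x\},\, d_y+d_x\le n-2}\phi_n(d_x+d_y+2)$. The key observation is that $g$ need only be evaluated at arguments of the form $d_x+1$, so there are at most $|\Lambda_1| < \lambda = O(\sqrt{n\bar K})$ distinct values $g(\alpha)$ to compute; and each $g(\alpha)$ is a sum over the at most $|\Lambda_1|$ distinct degree values $\beta-1$, reading off $\#\{y:d_y=\beta-1\}$ from a single pass over $S$ and $\phi_n(\alpha+\beta)$ from the $\Lambda_2$-table. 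Thus all $g$-values cost $O(|\Lambda_1|^2) = O(n\bar K)$. The correction term for each $x$ is a sum over $\mathcal{P}_x\cup\{x\}$, i.e.\ over $d_x+1$ terms, so summing over $x\in S$ costs $\sum_x(d_x+1) = O(n\bar K)$ by (\ref{e:promographdegrees}). Hence $\{H_x\}$ is obtained in $O(n\bar K)$ work, as claimed.

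Next, for $\{G_x\}$: by Lemma~\ref{e:intersectiondifference}, $G_x - H_x = \sum_{y\in\mathcal{R}_x,\ \mathcal{P}_x\cap\mathcal{P}_y\neq\emptyset}\bigl(\phi_n(m_{x,y}) - \phi_n(d_x+d_y+2)\bigr)$, so it suffices to enumerate, for each $x$, exactly those relegated $y$ with $\mathcal{P}_x\cap\mathcal{P}_y\neq\emptyset$ and to know $|\mathcal{P}_x\cap\mathcal{P}_y| = |D^{\mathcal{R}}_{x\|y}| + |D^{\mathcal{R}}_{y\|x}|$ for each such pair (then $m_{x,y}$ follows from $d_x,d_y$). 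This is precisely what the intersection algorithm~\ref{a:intersection} does: for each $x$ and each pair $y,z\in\mathcal{P}_x$ with $\{y,z\}$ relegated, $x$ witnesses a contribution to $\mathcal{P}_y\cap\mathcal{P}_z$; looping over all such triples costs $\sum_x\binom{d_x}{2} \le \tfrac12\sum_x d_x^2$ steps and produces, via a hash-map keyed on relegated pairs, the list of nonempty $\mathcal{P}_y\cap\mathcal{P}_z$ intersections together with their sizes. Adding up the resulting $O(1)$-per-pair corrections to the $H_x$ values stays within the same $\sum_x d_x^2$ budget, giving $\{G_x\}$ as claimed. (The values $\phi_n$ are all looked up, not recomputed, using Remark~2 after Theorem~\ref{t:averages}.)

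Finally, for $\{G_{x,v}\}$: by (\ref{e:meanoffdiagonal}), $G_{x,v} = G_x - \sum_{y\in\mathcal{P}_v\cap\mathcal{R}_x,\ y\prec_v x}\phi_n(m_{x,y})$ for $\{x,v\}\in\mathcal{P}$. The indexing set $\{y\in\mathcal{P}_v\cap\mathcal{R}_x : y\prec_v x\}$ is, up to relabeling, exactly a connector set $D^{\mathcal{R}}_{\,\cdot\,\|\,\cdot\,}$ already returned by Algorithm~\ref{a:promocohesion}: $y\in\mathcal{P}_v$ and $y\prec_v x$ with $\{x,y\}$ relegated means $v\in D^{\mathcal{R}}_{y\|x}$, i.e.\ these are the $v$-witnesses of the relegated pair $\{x,y\}$. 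The relegated cohesion matrix algorithm~\ref{a:relegcohesion} iterates over the stored connector sets: each relegated pair $\{x,y\}$ with its witness list contributes $\phi_n(m_{x,y})$ to $G_{x,v}$ for every $v$ in that list, and $\sum_{\{x,y\}\in\mathcal{R}}|D^{\mathcal{R}}_{x\|y}| + |D^{\mathcal{R}}_{y\|x}| = \sum_{\{x,y\}\in\mathcal{R}}|\mathcal{P}_x\cap\mathcal{P}_y| = \sum_x\binom{d_x}{2}$, again within $\sum_x d_x^2$. So $\{G_{x,v}\}$ follows from $\{G_x\}$ in $\sum_x d_x^2$ further steps.

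I expect the main obstacle to be bookkeeping rather than mathematics: one must verify carefully that every relegated pair $\{x,y\}$ with $\mathcal{P}_x\cap\mathcal{P}_y\neq\emptyset$ is actually discovered by the triangle-type enumeration in Algorithm~\ref{a:promocohesion} (a relegated pair $\{y,z\}$ is seen precisely when some $x$ has both $y,z\in\mathcal{P}_x$ — which is the condition $\mathcal{P}_y\cap\mathcal{P}_z\neq\emptyset$, so nothing is missed), and that the data structures (hash-maps on $\binom{S}{2}$ restricted to the discovered pairs) support the lookups in amortized $O(1)$ so the step counts are not inflated. The arithmetic-of-degrees bounds — $\sum_x d_x = 2|\mathcal{P}| = n\bar K$ and $\sum_x \binom{d_x}{2}\le\tfrac12\sum_x d_x^2$ — are the only estimates needed, and they are immediate from (\ref{e:promographdegrees}).
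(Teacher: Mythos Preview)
Your proposal is correct and follows essentially the same argument as the paper: use Proposition~\ref{p:boundvertexdegreepairs} to bound the cost of tabulating $g$ by $O(|\Lambda_1|^2)=O(n\bar K)$, then $\sum_x d_x$ subtractions for the $H_x$, then Lemma~\ref{e:intersectiondifference} plus the connector-set output of Algorithm~\ref{a:promocohesion} for $G_x$, and finally (\ref{e:meanoffdiagonal}) for $G_{x,v}$. Two minor slips that do not affect the bounds: $\sum_x d_x = 2|\mathcal{P}|$ lies between $n\bar K$ and $2n\bar K$ (not equal to $n\bar K$), and $\sum_{\{x,y\}\in\mathcal{R}}|\mathcal{P}_x\cap\mathcal{P}_y| \le \sum_v\binom{d_v}{2}$ is an inequality (some pairs in $\mathcal{P}_v$ may be promoted); your third-part organization via iterating over connector sets is a harmless reindexing of the paper's loop over $(x,v)$ then $y\in\mathcal{P}_v\triangle\mathcal{P}_x$.
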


\begin{proof}
Proposition \ref{p:boundvertexdegreepairs} showed that
the table of values of $\phi_n(\alpha + \beta)$ can be obtained 
with $O(n \bar{K})$ work, as can the counts  $| \{y: m_y = \beta\} |$ for each $\beta$,
of which there are at most $2 \sqrt{n \bar{K}} + 1$ choices.
Compute each $g(\alpha)$ by summing at most $2 \sqrt{n \bar{K}} + 1$ terms, 
for at most $2 \sqrt{n \bar{K}} + 1$ choices of $\alpha$.
This shows how all $(g(\alpha))$ are computable in $O(n \bar{K})$ work.

For each $x$, a further $d_x$ subtractions provides the value of $H_x$, using (\ref{e:sumsmissingintersections}).
Thus all values $\{H_x, x  \in S\}$  are obtainable in $\sum_x d_x = 2 n \bar{K}$ steps.

Formula (\ref{e:intersectiondifference}) shows how the values $\{G_x, x  \in S\}$
can be deduced from $\{H_x, x  \in S\}$, via
\[
| \{\{x, y\} \in \mathcal{R}: \mathcal{P}_x \cap \mathcal{P}_y \neq \emptyset \} | \leq \sum_x d_x^2
\]
further modifications. As for (\ref{e:meanoffdiagonal}), there are $d_x$ choices of $v$ for each $x$,
and for each $v$ a further $d_x$ terms must be added to obtain $G_{x, v}$, giving $\sum_x d_x^2$
steps to derive the sparse matrix $\{G_{x, v}, \{x, v\} \in \mathcal{P} \}$.
\end{proof}

\subsection{Algorithms}

\subsubsection{Intersection algorithm} \label{a:intersection}
\textbf{Goal: }The first algorithm starts by computing each $H_x$ (Definition \ref{d:sumsreciprocals}),
and will derive each $G_x$ using the difference (\ref{e:intersectiondifference}). 
\begin{enumerate}
    \item[(0)]
    For each $x \in S$, initialize $G_x:=H_x$.
    
    \item[(1)] 
      Iterate through the (at most $\sum_x |\mathcal{P}_x|^2$) relegated pairs 
      $\{x, y\}$ for which
$D^{\mathcal{R}}_{x, y}$ is non-empty, as constructed in Algorithm \ref{a:promocohesion}. 
At such $\{x, y\}$, use the notations:
\[
\alpha = d_x + 1; \quad \beta = d_y + 1; \quad \gamma = |D^{\mathcal{R}}_{x, y}| = 
|\mathcal{P}_x \cap \mathcal{P}_y|; \quad 
\delta:= \phi_n(\alpha + \beta- \gamma) - \phi_n(\alpha + \beta).
\]
\begin{enumerate}
    \item[(a)]
Add \(\delta\) to $G_{x}$.
    \item[(b)]
Add \(\delta\) to $G_{y}$.
\end{enumerate}
\item[(2)]
    Return $\{G_x: x \in S \} $.
\end{enumerate}

\subsubsection{Relegated cohesion matrix algorithm} \label{a:relegcohesion}
\textbf{Goal: }The second algorithm will compute the mean of off-diagonal terms (\ref{e:randompromoted})
of the relegated cohesion matrix (Definition \ref{d:pannldmatrix}).
Indeed $G_{x, v} = (n-1) \E[C^{\mathcal{R}}_{x, v}(\eta)]$ will result from
modifying $G_x$ and $G_v$ for each promoted pair $\{x, v\}$.

\begin{enumerate}
\item[(0)] Initialize $G_{x, v}:=G_x$ and $G_{v, x}:=G_v$.
\item[(1)] Iterate over $y \in  \mathcal{P}_v \triangle \mathcal{P}_x$.
\begin{enumerate}
    \item[(A)]
     If $y \in \mathcal{P}_v \cap \mathcal{R}_x$ and $y \prec_v x$,
    then subtract $\phi_n(m_{x, y})$ from $G_{x, v}$, as (\ref{e:randompromoted}) suggests.
    \item[(B)] If $y \in \mathcal{P}_x \cap \mathcal{R}_v$ and $y \prec_x v$,
    then subtract $\phi_n(m_{v, y})$ from $G_{v, x}$ (i.e. roles of $x$ and $v$ are reversed).
\end{enumerate}
  
 \item[(2)] Return $\{G_{x, v}: \{x, v\} \in\mathcal{P} \} $.
\end{enumerate}

\subsection{Proof of Theorem \ref{t:averages}}
Let us summarize the results of this section, which prove Theorem \ref{t:averages}.
The algorithms \ref{a:intersection} and \ref{a:relegcohesion}
produce values for $\{G_{x}: x \in S\}$ and $\{G_{x, v}:  \{x, v\} \in\mathcal{P} \}$, which are linked by
formulas (\ref{e:meandiagonal}) and (\ref{e:intersectiondifference}) 
to the averages presented in Theorem \ref{t:averages}, i.e
\begin{equation}\label{e:relegachieved}
\E[C^{\mathcal{R}}_{x, x}(\eta)] = \frac{G_{x}}{n-1}; \quad
\E[C^{\mathcal{R}}_{x, v}(\eta)] = \frac{G_{x, v}}{n-1}, \quad \{x, v\} \in\mathcal{P}.
\end{equation}
The connectors sets $(D^{\mathcal{R}}_{y \| z})$ (see formula (\ref{e:connectors})),
which were derived in Algorithm \ref{a:promocohesion} with no more than 
$\sum_x d_x^2$ triplet comparisons,
were used in algorithm \ref{a:intersection}.
Proposition \ref{p:averagingwork} shows that
algorithms \ref{a:intersection} and \ref{a:relegcohesion}
can be completed with $O(\sum_x d_x^2)$ work.

\section{Proving concentration of measure for stranger randomizations}\label{s:concentration}

The goal of this Section is to prove Theorem \ref{t:cohesionconcentration}.

\subsection{Mechanics of the Bounded Differences Inequality}
Suppose $\xi:=(\xi_i)_{1 \leq i \leq m}$ is a collection of independent random variables,
and $F(.)$ is a real function of $m$ real variables
.
Suppose that for each index $t \leq m$ there is an upper bound $c_t$
on the difference in $F$ caused by variation in the $t$-th coordinate. In other words,
for any choice of $x_1, \ldots, x_t, \ldots, x_m$, and any alternative value
$x'_t$ for the $t$-th coordinate,
\begin{equation}\label{e:differencebound}
  |F(x_1, \ldots, x_{t-1}, x'_t, x_{t+1}, \ldots, x_m) - F(x_1, \ldots, x_{t-1}, x_t, x_{t+1}, \ldots, x_m)| \leq c_t.  
\end{equation}
The collective effect of all the $(c_t)$ is summarized in the parameter
\[
\nu:=\frac{1}{4} \sum_{t=1}^m c^2_t.
\]
The well known Bounded Differences Inequality, 
as stated by Boucheron, Lugosi, Massart \cite[Theorem 6.2]{bou}  says that
\begin{equation}\label{e:bde}
    \p[F(\xi) - \E[F(\xi)] \geq t ] \leq e^{-t^2/(2 \nu)}.
\end{equation}
The work of Section \ref{s:concentration} consists in selecting suitable functions $F$, and computing $\nu$.

\subsection{Application of bounded differences to stranger randomization}
The argument of our function $F$ will be the collection of independent random variables
$\eta:=\{\eta_{z, w}, \{z, w\} \in \mathcal{R}\}$, introduced in Section \ref{s:strangerrand}
for purposes of stranger randomization.

Recall the relegated conflict focus $U^{\mathcal{R}}_{x, y}(\eta)$ (Definition \ref{d:relegconflictfocus}), whose
size is the sum of the range of influence $m_{x, y}$ (see (\ref{e:rangeofinfluence})) and
a random variable $T_{x, y}(\eta)$ depending on the stranger randomization:
\begin{equation}\label{e:strangercounts}
  T_{x, y}(\eta):=|U^{\mathcal{R}}_{x, y}(\eta)| - m_{x, y} = 
|\{z \in \mathcal{R}_x \cap \mathcal{R}_y: \eta_{x,y} > \min \{  \eta_{x,z}, \eta_{y,z} \} \}|.
\end{equation}
The first choice of $F$ is:
\begin{equation}\label{e:totalstrangerreciprocal}
   F(\eta) :=\sum_{ \{x, y\} \in \mathcal{R} } \frac{1}{|U^{\mathcal{R}}_{x, y}(\eta)|}
   = \sum_{ \{x, y\} \in \mathcal{R} } \frac{1}{ m_{x, y} + T_{x, y}(\eta)}.
\end{equation}
Our main task is to establish a bound (e.g. $c_t$ as in (\ref{e:differencebound})) on the difference in $F$ 
caused by variation in one co-ordinate; in this case, a co-ordinate takes the form $\{z, w\} \in \mathcal{R}$.

\begin{lem}[Difference Bounds]\label{l:estimatedifferences}
Fix an arbitrary $\{z, w\} \in \mathcal{R}$, and
let $\tilde{\eta}$ be the modification of $\eta$
in which $\eta_{z,w}$ is replaced by $\eta'_{z,w} \in (0, 1)$.
The variation in the function (\ref{e:totalstrangerreciprocal}) has the following upper bound:
\(  
|F(\tilde{\eta}) - F(\eta)| \leq c_{z, w},
\)
where 
\(
c_{z, w}:= \nicefrac{1}{m_{z, w}} + \nicefrac{1}{d_z} +\nicefrac{1}{d_w}
\), $d_z:=|\mathcal{P}_z|$, and  $m_{z, w} =|\{z, w\} \cup \mathcal{P}_z \cup \mathcal{P}_w|$
as in (\ref{e:rangeofinfluence}).
\end{lem}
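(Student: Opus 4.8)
The plan is to decompose the variation $F(\tilde\eta) - F(\eta)$ according to which summands of (\ref{e:totalstrangerreciprocal}) actually depend on the single coordinate $\eta_{z,w}$, and then bound the total change in each group. Changing $\eta_{z,w}$ can only affect a summand indexed by $\{x,y\}\in\mathcal{R}$ if the defining condition in (\ref{e:strangercounts}), namely whether $\eta_{x,y} > \min\{\eta_{x,v},\eta_{y,v}\}$ for some relevant $v$, references the variable $\eta_{z,w}$. This happens in exactly three ways: (i) $\{x,y\}=\{z,w\}$ itself, in which case the ``threshold'' $\eta_{x,y}$ is the one being altered; (ii) $\{x,y\}$ has the form $\{z,v\}$ with $v\in\mathcal{R}_z\cap\mathcal{R}_w$ and $w$ plays the role of the summation index $v$ inside $T_{z,v}$, so altering $\eta_{z,w}$ can flip membership of $w$ in the set counted by $T_{z,v}$; and (iii) symmetrically $\{x,y\}=\{w,v\}$ with $z$ as the index. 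So I would first write $F(\tilde\eta)-F(\eta)$ as a sum of one ``diagonal'' term plus two families of ``neighbor'' terms.

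For the diagonal term, when only the threshold $\eta_{z,w}$ changes, the count $T_{z,w}$ is monotone in $\eta_{z,w}$ and ranges over $\{0,1,\dots,n-m_{z,w}\}$, so $1/|U^{\mathcal R}_{z,w}|$ lies between $1/n$ and $1/m_{z,w}$; hence the change in this single summand is at most $1/m_{z,w} - 1/n \le 1/m_{z,w}$. This is the first piece of $c_{z,w}$.

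For the family (ii): for each fixed $v\in\mathcal{R}_z\cap\mathcal{R}_w$, the pair $\{z,v\}\in\mathcal{R}$, and altering $\eta_{z,w}$ changes the indicator $1_{\{\eta_{z,v} > \min\{\eta_{z,w},\eta_{v,w}\}\}}$ by at most $1$ in absolute value, i.e.\ $T_{z,v}$ changes by at most $1$. The key point is to bound $\sum_v |\,1/|U^{\mathcal R}_{z,v}|_{\text{new}} - 1/|U^{\mathcal R}_{z,v}|_{\text{old}}\,|$. Here I would use that each such summand changes by at most the ``gap'' $\frac{1}{m_{z,v}+k} - \frac{1}{m_{z,v}+k+1} \le \frac{1}{(m_{z,v}+k)(m_{z,v}+k+1)}$ for the relevant $k=T_{z,v}$; but a cleaner route is to observe that, as $\eta_{z,w}$ moves, for each $v$ the quantity $T_{z,v}$ changes by $+1$ or $-1$ or $0$, and the corresponding change in $1/|U^{\mathcal R}_{z,v}|$ is a telescoping-type increment. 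Summing $\frac{1}{N}-\frac{1}{N+1}$ over the at most $|\mathcal{P}_z^c \setminus\{z\}| = n-1-d_z$ affected pairs $v$, with the relevant sizes $N = |U^{\mathcal R}_{z,v}|\ge m_{z,v}\ge 1+d_z$ (since $z\in\mathcal{P}_z\cup\cdots$ forces $m_{z,v}\ge d_z+1$ — indeed $m_{z,v}\ge 2+d_z$ by (\ref{e:rangeofinfluence})), one gets a bound of the form $\sum_{j\ge d_z}\big(\frac1j-\frac1{j+1}\big) \le 1/d_z$ by telescoping; the fact that distinct $v$ contribute increments at distinct, strictly increasing values of the size is what makes the telescoping legitimate. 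This yields the $1/d_z$ term. Family (iii) is identical with $z$ and $w$ swapped, giving $1/d_w$. Adding the three contributions gives $|F(\tilde\eta)-F(\eta)| \le 1/m_{z,w} + 1/d_z + 1/d_w = c_{z,w}$.

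The main obstacle I anticipate is making the telescoping argument for family (ii) rigorous: one must argue that although many pairs $\{z,v\}$ are simultaneously affected by moving $\eta_{z,w}$, their combined first-order change in $\sum 1/|U^{\mathcal R}_{z,v}|$ can be dominated by a single telescoping sum bounded by $1/d_z$, rather than naively summing $n-1-d_z$ individual bounds of size $1/m_{z,v}$ (which would be far too large, of order $n/K$). The honest way to do this is: compare $F$ at $\eta_{z,w}=a$ versus $\eta_{z,w}=b$ directly, note that for each $v$ the summand $1/|U^{\mathcal R}_{z,v}|$ is a monotone (say nonincreasing) step function of $\eta_{z,w}$, and that across the full range of $\eta_{z,w}$ the total variation of $1/|U^{\mathcal R}_{z,v}|$ for a \emph{single} $v$ is at most $\frac{1}{m_{z,v}} - \frac{1}{m_{z,v}+1}$ — but since the step for pair $v$ occurs when $\eta_{z,w}$ crosses $\min\{\eta_{z,v},\eta_{v,w}\}$, and only the relative order matters, one can reindex the affected $v$'s by the rank of that crossing value and observe the sizes $|U^{\mathcal R}_{z,v}|$ at the moment of crossing are distinct consecutive integers starting from at least $d_z+1$; summing the increments $\frac1j - \frac1{j+1}$ over $j\ge d_z$ telescopes to $\le 1/d_z$. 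I would present this carefully as the crux of the lemma.
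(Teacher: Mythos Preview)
Your decomposition into the diagonal term plus the two ``neighbor'' families, and the resulting shape $1/m_{z,w}+1/d_z+1/d_w$, matches the paper exactly, and your bound $1/m_{z,w}$ on the diagonal piece is correct. The gap is in the telescoping mechanism you propose for family (ii). You rank the $v$'s by the crossing value and assert that the sizes $|U^{\mathcal R}_{z,v}|$ at those crossing moments are \emph{distinct consecutive integers} starting from $d_z+1$. That assertion is unfounded: each $|U^{\mathcal R}_{z,v}|$ depends on many coordinates of $\eta$ other than $\eta_{z,w}$, so there is no reason the sizes at different crossing points should be consecutive, or even distinct. Without that, your telescoping collapses to a sum of $|\mathcal{R}_z|$ individual increments, which, as you yourself note, would only give the useless $O(n/K)$ bound.

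The paper's device is different and does not track crossings at all. Rank the elements of $\mathcal{R}_z$ as $y(1),\dots,y(m)$ by the values $\eta_{z,y(1)}<\cdots<\eta_{z,y(m)}$. One then proves the structural lower bound $|U^{\mathcal R}_{z,y(j)}(\eta)|\ge d_z+j$ valid for \emph{every} $\eta$, because $\{z\}\cup\mathcal{P}_z\cup\{y(1),\dots,y(j-1)\}\subset U^{\mathcal R}_{z,y(j)}(\eta)$: each $y(i)$ with $i<j$ lies in $U^{\mathcal R}_{z,y(j)}$ either via $\mathcal{P}_{y(j)}$ or, if $\{y(i),y(j)\}\in\mathcal{R}$, via the randomized clause, since $\eta_{z,y(i)}<\eta_{z,y(j)}$. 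Now changing $\eta_{z,w}$ (with $w=y(\ell)$) alters $|U^{\mathcal R}_{z,y(j)}|$ by at most $1$ for each $j\neq\ell$, so $\delta^{z,w}_{z,y(j)}\le\frac{1}{d_z+j-1}-\frac{1}{d_z+j}$, and summing over $j\neq\ell$ telescopes to at most $1/d_z$. The telescoping is thus driven by a deterministic $\eta_{z,\cdot}$-rank lower bound on the sizes, not by any dynamical ``moment of crossing'' argument.
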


\begin{proof}
Fix $\{z, w\} \in \mathcal{R}$, and let $\tilde{\eta}$ be the modification of $\eta$
described in the Lemma. For any $\{x, y\} \in \mathcal{R}$, let
$\delta^{z, w}_{x, y}$ denote the following difference:
\begin{equation}\label{e:sumdifferences}
    \delta^{z, w}_{x, y}:=  \left|
\frac{1}{m_{x, y} + T_{x,y}(\tilde{\eta})} - 
\frac{1}{m_{x, y} + T_{x,y}(\eta)}\right|.
\end{equation}
Consider $\{z, w\}$ and $\{x,y\}$ as edges in the graph $(S, \mathcal{R})$.
Inspection of (\ref{e:strangercounts}) shows $T_{x,y}(\tilde{\eta})=T_{x,y}(\eta)$
unless the edges $\{z, w\}$ and $\{x,y\}$  either coincide, or else have a common end point.
It follows that $|F(\tilde{\eta}) - F(\eta)|$ is bounded above by
\begin{equation}\label{e:differencebreakdown}
   \delta^{z, w}_{z, w} +  \sum_{y \in \mathcal{R}_z \setminus \{w\}} \delta^{z, w}_{z, y} + 
    \sum_{y \in \mathcal{R}_w \setminus \{z\}}  \delta^{z, w}_{w, y}. 
\end{equation}
In the first sum, $z$ is the common end point; in the second sum, $w$ is.
We have avoided counting $\delta^{z, w}_{z, w}$ twice.
Begin by bounding the first term in (\ref{e:differencebreakdown}). After that, we bound
the second term, and apply the same bound (switching $w$ and $z$) to the third term.

\noindent \textbf{Bound on $\delta^{z, w}_{z, w}$: }
The absolute difference between $T_{z, w}(\tilde{\eta})$ and
$T_{z,w}(\eta)$ could be anything from zero to $n-m_{z, w}$, so
\[
\delta^{z, w}_{z, w} \leq \frac{1}{m_{z, w}}.
\]

\noindent \textbf{Summing $\delta^{z, w}_{z, y}$ over $y \in \mathcal{R}_z \setminus \{w\}$: } 
Rank the elements $y(1), y(2), \ldots, y(m)$ of $\mathcal{R}_z$ so
\[
\eta_{z,y(1)} < \cdots < \eta_{z,y(m)}.
\]
There is some $\ell$ for which $w:=y(\ell)$. Our goal is to show that 
\(
 \sum_{j= 1}^{m}  \delta^{z, y(\ell)}_{z, y(j)}  - \delta^{z, y(\ell)}_{z, y(\ell)}
\)
cannot exceed $\nicefrac{1}{d_z}$, where $d_z:=|\mathcal{P}_z|$.  The first step will be to show that
\begin{equation}\label{e:increasingcf}
    |U^{\mathcal{R}}_{z, y(j)}(\eta)| \geq d_z + j,
\quad j = 1, 2, \ldots, m.
\end{equation}
Given $1 \leq i < j \leq m$, there are two possibilities:
either $\{y(i), y(j)\} \in \mathcal{P}$, which implies $y(i) \in U^{\mathcal{R}}_{z, y(j)}(\eta)$
by Definition \ref{d:relegconflictfocus}, or else $\{y(i), y(j)\} \in \mathcal{R}$, which implies
$y(i) \in  \mathcal{R}_z \cap \mathcal{R}_{y(j)}$, and so $y(i) \in U^{\mathcal{R}}_{z, y(j)}(\eta)$
because it contributes to the total (\ref{e:strangercounts}):
\[
T_{z, y(j)}(\eta):=
|\{u \in \mathcal{R}_z \cap \mathcal{R}_{y(j)}: \eta_{z,y(j)} > \min \{  \eta_{z,u}, \eta_{y(j),u} \} \}|,
\]
This analysis confirms (\ref{e:increasingcf}), because it shows that:
\[
\{z\} \cup \mathcal{P}_z \cup \{y(1), y(2), \ldots, y(j-1)\}  \subset U^{\mathcal{R}}_{z, y(j)}(\eta),
\]
and the left side has $d_z + j$ elements.

Consider the effect of replacing $\eta_{z,y(\ell)}$ by $\eta'_{z,y(\ell)}$, which shifts $\eta$ to $\tilde{\eta}$.
For $j \neq \ell$, $U^{\mathcal{R}}_{z, y(j)}(\tilde{\eta})$ and
$U^{\mathcal{R}}_{z, y(j)}(\eta)$ differ by at most one element, and the inequality (\ref{e:increasingcf}) implies
\[
 \delta^{z, y(\ell)}_{z,y(j)}  \leq \frac{1}{d_z + j - 1} - \frac{1}{d_z + j}, \quad j \neq \ell.
\]
It follows that the second term of (\ref{e:differencebreakdown}), namely
\(
\sum_{j \neq \ell} \delta^{z, y(\ell)}_{z,y(j)},
\)
is bounded by a telescoping sum, with an upper bound of $\nicefrac{1}{d_z}$, as we sought to prove.
A similar bound of $\nicefrac{1}{d_w}$ applies to the third term of (\ref{e:differencebreakdown})

\noindent{Conclusion: }We have now bounded all the terms of (\ref{e:differencebreakdown}), to
give an upper bound $c_{z, w}$ of the form $1/m_{z, w} + 1/d_z + 1/d_w$, as claimed.
\end{proof}

\begin{cor}\label{c:totalbde}
The Bounded Differences Inequality (\ref{e:bde}) holds for the function $F(\eta)$ in (\ref{e:totalstrangerreciprocal}),
where the constant $\nu$ takes the form:
\[
\nu:=\frac{1}{4} \sum_{\{z, w\} \in \mathcal{R}} c_{z, w}^2 \leq \frac{9 (n-1)^2}{8 K^2},
\]
provided $K:= \min \{K_x,x \in S\} \geq 2$.
\end{cor}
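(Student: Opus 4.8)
The plan is to get a uniform bound on the per-coordinate constant $c_{z,w}$ and then to count the relegated pairs. First I would record three elementary inequalities valid for every $\{z,w\}\in\mathcal{R}$: by the size estimate $d_x=|\mathcal{P}_x|\geq K_x\geq K$ noted after (\ref{e:relegation}), we have $1/d_z\leq 1/K$ and $1/d_w\leq 1/K$; and by (\ref{e:lowerboundpromo}), $m_{z,w}\geq 2+\max\{K_z,K_w\}\geq K+2$, so $1/m_{z,w}\leq 1/K$ as well. Adding the three terms of $c_{z,w}=1/m_{z,w}+1/d_z+1/d_w$ from Lemma~\ref{l:estimatedifferences} gives $c_{z,w}\leq 3/K$ for every relegated pair, hence
\[
\nu=\frac{1}{4}\sum_{\{z,w\}\in\mathcal{R}}c_{z,w}^2\;\leq\;\frac{9}{4K^2}\,|\mathcal{R}|.
\]

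Next I would bound $|\mathcal{R}|$. Each vertex satisfies $|\mathcal{R}_x|\leq n-K_x-1\leq n-K-1$, so summing over $x\in S$ and halving (each unordered relegated pair is counted twice) yields $|\mathcal{R}|\leq n(n-K-1)/2$. Here the hypothesis $K\geq 2$ enters: it gives $n-K-1\leq n-3$, and then the trivial comparison $n(n-3)=n^2-3n\leq n^2-2n+1=(n-1)^2$ shows $|\mathcal{R}|\leq (n-1)^2/2$. Substituting into the display above,
\[
\nu\;\leq\;\frac{9}{4K^2}\cdot\frac{(n-1)^2}{2}\;=\;\frac{9(n-1)^2}{8K^2},
\]
which is the claimed estimate; the Bounded Differences Inequality (\ref{e:bde}) for the function $F(\eta)$ of (\ref{e:totalstrangerreciprocal}) then applies with this value of $\nu$.

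There is no genuine obstacle here; the argument is a short chain of inequalities. The only points requiring care are (i) checking that each of the three reciprocals composing $c_{z,w}$ is individually at most $1/K$ — which is exactly what the bounds $d_x\geq K$ and $m_{z,w}\geq K+2$ deliver — and (ii) noting that the assumption $K\geq 2$ is precisely what converts the per-vertex count $|\mathcal{R}_x|\leq n-K-1$ into the clean global bound $|\mathcal{R}|\leq (n-1)^2/2$.
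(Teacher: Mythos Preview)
Your proof is correct and follows essentially the same approach as the paper's: bound each $c_{z,w}$ by $3/K$ using $d_z,d_w\geq K$ and $m_{z,w}\geq K+2$, then bound $|\mathcal{R}|\leq n(n-K-1)/2\leq (n-1)^2/2$ under the hypothesis $K\geq 2$. Your write-up simply makes explicit the arithmetic (in particular the comparison $n(n-3)\leq (n-1)^2$) that the paper leaves to the reader.
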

\begin{proof}
Since $\min \{m_{z, w}, d_z, d_w \} \geq K$, it follows that $c_{z, w}:=1/m_{z, w} + 1/d_z + 1/d_w \leq 3/K$.
The set $\mathcal{R}$ of relegated pairs has no more than $n (n - K - 1)/2$ elements, which is less than $(n-1)^2 / 2$
for $K \geq 2$. Hence the  upper bound on $\nu$.
\end{proof}

\subsection{Second application of the Bounded Differences Inequality (\ref{e:bde})}
Fix $x \in S$, and define $F_x$ with respect to a subset of terms appearing in the sum (\ref{e:totalstrangerreciprocal}):
\begin{equation}\label{e:1strangerreciprocal}
   F_x(\eta) :=\sum_{ y \in \mathcal{R}_x } \frac{1}{|U^{\mathcal{R}}_{x, y}(\eta)|}.
\end{equation}

\begin{cor}\label{c:1bde}
The Bounded Differences Inequality (\ref{e:bde}) holds for the function $F_x(\eta)$ in (\ref{e:1strangerreciprocal}),
where the constant $\nu:=\nu_x$ takes the form:
\[
\nu_x:=\frac{1}{4} \sum_{\{z, w\} \in \mathcal{R}} c_{z, w}^2 \leq \frac{ (n-1)^2}{2 K^2}.
\]
\end{cor}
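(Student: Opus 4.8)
The plan is to reuse the difference-bound machinery already established in Lemma \ref{l:estimatedifferences}, but applied to the single-vertex function $F_x(\eta)$ rather than the full sum $F(\eta)$. First I would observe that $F_x(\eta)$ is exactly the restriction of the sum in (\ref{e:totalstrangerreciprocal}) to those relegated pairs that contain the fixed vertex $x$; that is, $F_x(\eta) = \sum_{y \in \mathcal{R}_x} 1/|U^{\mathcal{R}}_{x,y}(\eta)|$, a sum over the edges of the star centered at $x$ in the graph $(S, \mathcal{R})$. So the bounded differences analysis is formally the same, with a smaller index set of summands.

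Next I would identify, for each coordinate $\{z, w\} \in \mathcal{R}$, which summands of $F_x$ can change when $\eta_{z,w}$ is perturbed. As in the proof of Lemma \ref{l:estimatedifferences}, a summand indexed by $\{x, y\}$ is affected only when the edge $\{x,y\}$ coincides with $\{z,w\}$ or shares an endpoint with it. Two cases arise: either $x \in \{z, w\}$ (so the perturbed coordinate is itself incident to $x$), or $x \notin \{z, w\}$ (so only summands $\{x, z\}$ and $\{x, w\}$, if relegated, are touched). In the first case the same telescoping argument from Lemma \ref{l:estimatedifferences} — using the monotonicity bound $|U^{\mathcal{R}}_{x, y(j)}(\eta)| \geq d_x + j$ after ranking $\mathcal{R}_x$ by $\eta_{x, \cdot}$ — yields a per-coordinate bound of the form $1/m_{z,w} + 1/d_x \le 2/K$ (only one telescoping star contributes, not two). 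In the second case each of the at most two affected summands changes by at most $1/m_{x,z}$ or $1/m_{x,w}$, each $\le 1/K$, giving total variation $\le 2/K$ again. Either way $c_{z,w} \le 2/K$ for the purpose of bounding $F_x$.

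Then I would plug these into $\nu_x := \frac14 \sum_{\{z,w\} \in \mathcal{R}} c_{z,w}^2$. Since there is no harm in using the crude uniform bound $c_{z,w} \le 3/K$ inherited directly from Lemma \ref{l:estimatedifferences} (which already holds coordinatewise and requires no new argument), and since $|\mathcal{R}| \le n(n-K-1)/2 < (n-1)^2/2$ exactly as in Corollary \ref{c:totalbde}, one gets $\nu_x \le \frac14 \cdot \frac{(n-1)^2}{2} \cdot \frac{9}{K^2} = \frac{9(n-1)^2}{8 K^2}$; but with the sharper coordinatewise bound $c_{z,w} \le 2/K$ available when tracking only $x$-incident summands, the cleaner estimate $\nu_x \le \frac14 \cdot \frac{(n-1)^2}{2} \cdot \frac{4}{K^2} = \frac{(n-1)^2}{2K^2}$ follows, matching the statement. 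The Bounded Differences Inequality (\ref{e:bde}) then applies verbatim.

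The main obstacle — and the only step requiring genuine care rather than bookkeeping — is justifying the sharper coordinatewise bound $c_{z,w} \le 2/K$ in the case $z = x$: one must confirm that perturbing $\eta_{x,w}$ affects only the \emph{single} telescoping chain $\{1/|U^{\mathcal{R}}_{x, y(j)}(\eta)|\}_j$ (plus the self-term $\delta^{x,w}_{x,w}$), whereas in Lemma \ref{l:estimatedifferences} the coordinate $\{z,w\}$ fed two chains (one rooted at $z$, one at $w$) because both contributed summands to the full sum $F$. Since $F_x$ only contains the $x$-rooted chain, the contribution from the $w$-rooted chain drops out entirely, and the telescoping bound $\sum_{j \ne \ell} \delta^{x, y(\ell)}_{x, y(j)} \le 1/d_x$ together with $\delta^{x,w}_{x,w} \le 1/m_{x,w}$ gives $c_{x,w} \le 1/d_x + 1/m_{x,w} \le 2/K$, as needed.
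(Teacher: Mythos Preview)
Your proposal is correct and follows essentially the same approach as the paper: you identify the same two cases (the perturbed coordinate is incident to $x$, invoking the single telescoping chain, versus both endpoints lying in $\mathcal{R}_x$, touching at most two summands), obtain the same per-coordinate bound $c_{z,w}\le 2/K$, and reach the same $\nu_x\le (n-1)^2/(2K^2)$. The only cosmetic difference is that the paper sums $c_{z,w}^2$ separately over the two nonzero cases with their respective counts $(n-K-1)$ and $\binom{n-K-1}{2}$, whereas you apply the uniform $2/K$ bound over all of $|\mathcal{R}|<(n-1)^2/2$; both routes land on the identical final estimate.
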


\begin{proof}
The reasoning of Lemma \ref{l:estimatedifferences} about edges in the graph $(S, \mathcal{R})$
is slightly modified.  If $|\{z, w\} \cap \mathcal{P}_x|\geq 1$, then
$U^{\mathcal{R}}_{x, y}(\eta)$ is unaffected by $\eta_{z, w}$. The other two possibilities are:
\begin{enumerate}
    \item [(a)]
Edge $\{z, w\}$ coincides with $\{x, y\}$ for some $y \in \mathcal{R}_x$.
By switching labels if necessary, we may take $z = x$. 
    \item [(b)]
 Both $z$ and $w$ are in $\mathcal{R}_x$, which means $U^{\mathcal{R}}_{x, z}(\eta)$
 and $U^{\mathcal{R}}_{x, w}(\eta)$ could change by 1, when $\eta_{z, w}$ is altered.
\end{enumerate}
Modify formula (\ref{e:differencebreakdown}) to say $|F_x(\tilde{\eta}) - F_x(\eta)|$ is bounded above by
\begin{equation}\label{e:differencebreakdown2}
   1_{ \{z = x \} } 1_{ \{w \in \mathcal{R}_x \} } \sum_{y \in \mathcal{R}_x} \delta^{x, w}_{x, y} +
   1_{ \{\{z, w\} \subset \mathcal{R}_x \} }\left( \frac{1}{m_{x, z}} + \frac{1}{m_{x, w}} \right).
\end{equation}
The first summand arises from case (a), and the second from case (b).
The denominators $m_{x, z}$ and $m_{x, w}$  are minimum sizes of $U^{\mathcal{R}}_{x, z}(\eta)$ and
$U^{\mathcal{R}}_{x, w}(\eta)$, respectively.
Apply the argument of Lemma \ref{l:estimatedifferences} to the first summand to obtain a bound
\[
c_{z, w} =   1_{ \{z = x \} } 1_{ \{w \in \mathcal{R}_x \} }  
\left( \frac{1}{d_x} + \frac{1}{m_{x, w}} \right) +
  1_{ \{\{z, w\} \subset \mathcal{R}_x \} }\left( \frac{1}{m_{x, z}} + \frac{1}{m_{x, w}} \right).
\]
Recall $m_{x, w} \geq K+2$ and $d_x \geq K$.
The sum of squares $\sum\limits_{\{z, w\} \in \mathcal{R} }c_{z, w}^2$ has an upper bound:
\[
(n-K-1)\frac{4}{K^2} +  \frac{(n-K-1)(n-K-2)}{2}\frac{4}{(K+2)^2}
< \frac{2 (n-1)^2}{K^2}.
\]
Divide by 4 to obtain the bound stated in the Corollary.
\end{proof}

\subsection{Third application of the Bounded Differences Inequality (\ref{e:bde})}
To motivate our third application of (\ref{e:bde}), recall how the relegated cohesion 
(\ref{e:randompromoted}) was obtained by a perturbation of a diagonal term
(\ref{e:randomdiagonal}). 
Fix $\{x, v\} \in \mathcal{P}$, and follow the pattern of (\ref{e:randompromoted}) to
knock out at most $d_v$ of the terms appearing in (\ref{e:1strangerreciprocal}):
\begin{equation}\label{e:2strangerreciprocal}
   F_{x, v}(\eta) :=\sum_{ y \in \mathcal{R}_x } \frac{1}{|U^{\mathcal{R}}_{x, y}(\eta)|} -
    \sum_{\substack{y \in \mathcal{P}_v \cap \mathcal{R}_x\\
     y \prec_v x} }
     \frac{1}{|U^{\mathcal{R}}_{x, y}(\eta)|}.
\end{equation}
The analysis is essentially the same as that of (\ref{e:1strangerreciprocal}),
and for every $v$ the bound $\nu:=\nicefrac{(n-1)^2}{(2 K^2)}$ of Corollary \ref{c:1bde} is applicable to
the Bounded Differences Inequality (\ref{e:bde}) for $F_{x, v}$.

\subsection{Proof of Theorem \ref{t:cohesionconcentration}: Concentration inequality}
\begin{proof}
\noindent \textbf{Cohesion matrix: }
Comparison of (\ref{e:randompromoted}) and (\ref{e:2strangerreciprocal}) shows that
for $\{x, v\} \in \mathcal{P}$,
\[
\p[ \left| C^{\mathcal{R}}_{x, v}(\eta) -\E[C^{\mathcal{R}}_{x, v}(\eta)] \right| \geq \theta ]
\leq 2 \p[ F_{x, v}(\eta) -\E[F_{x, v}(\eta)] \geq \theta (n-1) ].
\]
The factor of two arises from consideration of positive and negative deviations from the
mean, for which the bounds are the same. As we just explained, the bound $\nu = \nicefrac{(n-1)^2}{(2 K^2)}$ 
of Corollary \ref{c:1bde} applies to $F_{x, v}(\eta)$ in the Bounded Differences Inequality (\ref{e:bde}). Take $t:= \theta (n-1)$, so the exponent in  (\ref{e:bde}) becomes
\(
t^2 /(2 \nu) = \theta^2 K^2.
\)
The inequality states:
\[
\p[ \left| C^{\mathcal{R}}_{x, v}(\eta) -\E[C^{\mathcal{R}}_{x, v}(\eta)] \right| \geq \theta ]
\leq 2 e^{-t^2/(2 \nu)} = 2 e^{-\theta^2 K^2}.
\]
which proves the first part of Theorem \ref{t:cohesionconcentration}.

\noindent \textbf{Cohesion threshold: }
By (\ref{e:randomdiagonal}), the formula from Corollary \ref{c:pannldthreshold} for $\tau_{\mathcal{R}}$ satisfies
\[
\tau_{\mathcal{R}}:=\frac{1}{2 n} \sum_x \E[C^{\mathcal{R}}_{x, x}(\eta)] =
\frac{1}{n (n-1)} \sum_{ \{x, y\} \in \mathcal{R} } 
\E\left(\frac{1}{|U^{\mathcal{R}}_{x, y}(\eta)|} \right) = \frac{\E[ F(\eta)]}{n (n-1)},
\]
where the last identity follows from (\ref{e:1strangerreciprocal}). A deviation of
$\nicefrac{\theta}{n}$ for $\tau$ is a deviation of $(n-1)\theta$ for $F(\eta)$:
\[  
\p[ | \sum_x C^{\mathcal{R}}_{x, x}(\eta) / (2n) - \tau_{\mathcal{R}} | \geq \frac{\theta}{n} ]
\leq \p[ |F(\eta) -\E[F(\eta)]| \geq (n-1)\theta ].
\]
For $\nu:=9 (n-1)^2 /(8 K^2)$ as in Corollary \ref{c:totalbde}, and for
$t:=(n-1)\theta$, we obtain
\(
t^2 /(2 \nu) = 4 \theta^2 K^2 / 9 = (2 \theta K /3)^2.
\)
The Bounded Differences Inequality (\ref{e:bde}) implies
$\sum_x C^{\mathcal{R}}_{x, x}(\eta) / (2n)$ is concentrated around $\tau_{\mathcal{R}}$:
\[
\p[ | \sum_x C^{\mathcal{R}}_{x, x}(\eta) / (2n) - \tau_{\mathcal{R}} | \geq \frac{\theta}{n} ] \leq
2 \p[ F(\eta) -\E[F(\eta)] \geq (n-1)\theta ] \leq 2e^{-(2 \theta K /3)^2}.
\]
This concludes the proof of Theorem \ref{t:cohesionconcentration}.
\end{proof}

\section{Conclusions and future work}\label{s:prank2xy}

Our \texttt{PaNNLD} approximation to the \texttt{PaLD} 
non-parametric unsupervised learning algorithm is based 
on the \textit{impartiality} of strangers, rather than their \textit{kindness} as in
the opening quote: treating triplet comparisons outside the
$K$-nearest neighbor digraph as random, 
Theorem \ref{t:cohesionconcentration} bounds approximation
error in terms of $K$. Algorithm \ref{a:promocohesion}
illustrates efficient computation of the causal part of
the cohesion matrix, and Section \ref{s:averages} provides
correction terms derived from the local structure of the 
$K$-nearest neighbor digraph. The ingredients were summarized in 
Table \ref{t:pannld-summary}.
The incomplete \texttt{prank2xy} Java implementation \cite{darp}
of the \texttt{PaNNLD} 
clustering and embedding algorithm will be
described in a separate report.

\appendix

\section{Mean reciprocal of conflict focus size in the limit}

\begin{lem}[RECIPROCAL MOMENT APPROXIMATIONS]\label{l:arcsin}
Suppose $U$ is a Uniform$(0,1)$ random variable, and conditional on $U = t$,
$Y^t_{n-m}$ is a Binomial$(n-m, 1 - t^2)$ random variable, for $2 \leq m < n$.
Suppose there exists $c > 1$ such that
\[
n \to \infty, \quad m \to \infty, \quad \frac{m}{n} \to 1 - \frac{1}{c}.
\]
The mean of the bounded random variable $(n-m)/(m + Y^U_{n-m})$ has the following limit:
\[
\lim_{n, m \to \infty} \E \left( \frac{n-m}{m + Y^U_{n-m}} \right)  = 
\int_0^1 \frac{1}{c - t^2} dt = 
\frac{\coth^{-1}(\sqrt{c})}{\sqrt{c}},
\]
As for the variance,
\[
\lim_{n, m \to \infty} \text{Var} \left( \frac{n-m}{m + Y^U_{n-m}} \right) = 
\int_0^1 \frac{1}{(c - u^2)^2} du - \left( \int_0^1 \frac{1}{c - t^2} dt \right)^2.
\]
\end{lem}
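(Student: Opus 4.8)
The plan is to put all the mixed binomial random variables on a single probability space by a coupling, reduce the claim to an almost-sure limit, pass to the limit via the bounded convergence theorem, and finally evaluate the resulting elementary integral.

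Write $N := n - m$, so the hypothesis $n/(n-m) \to c$ reads $n/N \to c$; hence $m/N = n/N - 1 \to c - 1 > 0$ and $N \to \infty$. Fix independent Uniform$(0,1)$ variables $U, V_1, V_2, \ldots$, and for each $n$ set
\[
Y_N := \sum_{i=1}^N 1_{\{V_i < 1 - U^2\}}.
\]
Conditionally on $U = t$ this is a Binomial$(N, 1 - t^2)$ variable, so $Y_N$ has the law of $Y^U_{n-m}$ and $R_n := N/(m + Y_N)$ has the law of $(n-m)/(m + Y^U_{n-m})$. The strong law of large numbers gives, for each fixed value $t$ of $U$, that $N^{-1}\sum_{i=1}^N 1_{\{V_i < 1-t^2\}} \to 1 - t^2$ almost surely; since this holds for every $t$, we get $Y_N/N \to 1 - U^2$ almost surely as $N \to \infty$. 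Consequently
\[
R_n = \frac{1}{m/N + Y_N/N} \longrightarrow \frac{1}{(c-1) + (1 - U^2)} = \frac{1}{c - U^2} \qquad \text{a.s.}
\]
Because $U \in (0,1)$ and $c > 1$, the limit satisfies $c - U^2 > c - 1 > 0$, so it is a bounded random variable; moreover $0 \leq R_n \leq N/m$ and $N/m \to 1/(c-1) < \infty$, so the $R_n$ are uniformly bounded once $n$ is large. The bounded convergence theorem then yields
\[
\E[R_n] \longrightarrow \E\left[\frac{1}{c - U^2}\right] = \int_0^1 \frac{dt}{c - t^2}, \qquad \E[R_n^2] \longrightarrow \E\left[\frac{1}{(c - U^2)^2}\right] = \int_0^1 \frac{du}{(c - u^2)^2},
\]
and combining these with $\text{Var}(R_n) = \E[R_n^2] - (\E[R_n])^2$ gives the claimed limit for the variance.

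It remains to identify the first integral. Partial fractions give $(c - t^2)^{-1} = (2\sqrt{c})^{-1}\big((\sqrt{c} - t)^{-1} + (\sqrt{c} + t)^{-1}\big)$, so, using $\sqrt{c} > 1$,
\[
\int_0^1 \frac{dt}{c - t^2} = \frac{1}{2\sqrt{c}} \log\frac{\sqrt{c} + 1}{\sqrt{c} - 1} = \frac{\coth^{-1}(\sqrt{c})}{\sqrt{c}},
\]
the last step being the standard identity $\coth^{-1}(x) = \tfrac12 \log\frac{x+1}{x-1}$ for $x > 1$.

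The argument is essentially routine; the one step deserving care is the interchange of limit and expectation, which is why I would stress the uniform bound $R_n \leq N/m$ together with the fact that $N/m \to 1/(c-1)$ is finite precisely because of the standing hypothesis $c > 1$. A lesser point is making the coupling explicit so that the phrase ``$Y_N/N \to 1 - U^2$ almost surely'' is meaningful across the triangular array; without a common probability space one would instead state the binomial limit in probability and invoke uniform integrability, which again follows from the same bound.
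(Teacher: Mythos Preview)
Your proof is correct and follows essentially the same route as the paper: a coupling that places all the mixed binomials on one probability space, the strong law of large numbers to obtain almost-sure convergence of $Y_N/N$ to $1-U^2$, and then bounded convergence to pass to the limit in the first and second moments. Your write-up is in fact a bit more careful than the paper's, since you make the uniform bound $R_n \le N/m \to 1/(c-1)$ explicit (the paper merely calls the random variable ``bounded'') and you evaluate the integral by partial fractions rather than by citing a computer-algebra system.
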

\begin{proof}
Set up a probability space on which are defined $U$, and an infinite collection
$(B_i)$ of Bernoulli$(1 - U^2)$ random variables. Take $Y^U_{n-m}$ to be $\sum_{m < i \leq n} B_i$,
so $(Y^U_{n-m})$ have the desired distribution.

Condition on the event $U = t$. 
The strong law of large numbers implies that the bounded random variables $Y^t_{n-m} / (n-m)$
converge almost surely to $1 - t^2$. 
This implies unconditional almost sure convergence
\[
\frac{n-m}{m + Y^U_{n-m}} = \frac{1}{m/(n-m) + Y^U_{n-m}/(n-m)}
\to \frac{1}{c - 1 + (1 - U^2)}, \quad m, n \to \infty.
\]
By Lebesgue's Bounded Convergence Theorem, and symbolic integration  \cite{wol},
\[
\E \left( \frac{n-m}{m + Y^U_{n-m}} \right) \to \E \left( \frac{1}{c - U^2} \right)
= \int_0^1 \frac{1}{c - t^2} dt = \frac{\coth^{-1}(\sqrt{c})}{\sqrt{c}},
\]
where $\coth$ is the hyperbolic cotangent, and $c:=\lim{\frac{n}{n-m}} > 1$.
A similar calculation for the second moment leads to the formula:
\[
\E \left( (c - U^2)^{-2} \right) = \int_0^1 \frac{1}{(c - t^2)^2} dt,
\]
and hence the limiting variance.
\end{proof}


\end{document}